\newcommand{\N}{\mathbb{N}}
\newcommand{\B}{\{0,1\}}
\newcommand{\Plays}{\mathsf{Plays}}
\newcommand{\Hist}{\mathsf{Hist}}
\newcommand{\occ}[1]{\mathsf{Occ}({#1})}
\newcommand{\infOcc}[1]{\mathsf{Inf}({#1})}
\newcommand{\payoff}[1]{\mathsf{pay}({#1})}
\newcommand{\won}[1]{\mathsf{won}({#1})}
\newcommand{\payoffObj}[2]{\mathsf{pay}_{#1}({#2})}
\newcommand{\update}[1]{\mathsf{upd}({#1})}
\newcommand{\strategyProfile}{\sigma}
\newcommand{\outcome}[1]{\mathsf{out}({#1})}
\newcommand{\Playsigma}[1]{\mathsf{Plays}_{#1}}
\newcommand{\Histsigma}[1]{\mathsf{Hist}_{#1}}
\newcommand{\Playsigmazero}{\Playsigma{\sigma_0}}
\newcommand{\nbrObjectives}{t}
\newcommand{\Obj}{\Omega}
\newcommand{\ObjPlayer}[1]{\Omega_{#1}}
\newcommand{\target}{T} 
\newcommand{\reach}[1]{\mathsf{Reach}(#1)}
\newcommand{\parity}[1]{\mathsf{Parity}(#1)}
\newcommand{\Buchi}[1]{\mathsf{B\ddot{u}chi}(#1)}
\newcommand{\BooleanBuchi}[1]{\mathsf{BooleanB\ddot{u}chi}(#1)}
\newcommand{\p}{$\mathsf{P}$}
\newcommand{\np}{$\mathsf{NP}$}
\newcommand{\npHard}{$\mathsf{NP}$-hard}
\newcommand{\npComplete}{$\mathsf{NP}$-complete}
\newcommand{\pspaceHard}{$\mathsf{PSPACE}$-hard}
\newcommand{\pspaceComplete}{$\mathsf{PSPACE}$-complete}
\newcommand{\nexptime}{$\mathsf{NEXPTIME}$}
\newcommand{\nexptimeHard}{$\mathsf{NEXPTIME}$-hard}
\newcommand{\nexptimeComplete}{$\mathsf{NEXPTIME}$-complete}
\newcommand{\FPT}{$\mathsf{FPT}$}
\newcommand{\problemParam}{k}
\newcommand{\problem}{Stackelberg-Pareto Synthesis problem}
\newcommand{\problemAb}{SPS problem}
\newcommand{\setCover}{Set Cover problem}
\newcommand{\setCoverAb}{SC problem}
\newcommand{\succinctSetCover}{Succinct Set Cover problem}
\newcommand{\succinctSetCoverAb}{SSC problem}
\newcommand{\dominatingSet}{Succinct Dominating Set problem}
\newcommand{\dominatingSetAb}{SDS problem}
\newcommand{\challengerProver}{Challenger-Prover}
\newcommand{\challengerProverAb}{C-P}
\newcommand{\prov}{\mathcal{P}}
\newcommand{\chal}{\mathcal{C}}
\newcommand{\provWit}{W}
\newcommand{\fixed}{$\sigma_0$-fixed}
\newcommand{\paretoOptimal}{\fixed{} Pareto-optimal}
\newcommand{\fixedStrategy}[1]{{#1}-fixed}
\newcommand{\paretoOptimalStrategy}[1]{\fixedStrategy{#1} Pareto-optimal}
\newcommand{\game}{Stackelberg-Pareto game}
\newcommand{\games}{Stackelberg-Pareto games}
\newcommand{\gameAb}{SP game}
\newcommand{\gamesAb}{SP games}
\newcommand{\paretoSet}[1]{P_{#1}}
\newcommand{\paretoSetSize}{|\Wit{\sigma_0}|}
\newcommand{\dominatedPlays}[1]{\Omega^{<}(\paretoSet{#1})}
\newcommand{\deviations}[1]{\mathsf{Dev}(\Wit{#1})}
\newcommand{\punStrat}[1]{\sigma^{\mathsf{Pun}}_{#1}}
\newcommand{\Wit}[1]{\mathsf{Wit}_{#1}}
\newcommand{\prefStrat}[1]{\mathsf{Wit}_{\sigma_0}(#1)}
\newcommand{\W}{W}
\newcommand{\region}{region}
\newcommand{\sect}{section}
\newcommand{\internal}{internal}
\newcommand{\final}{terminal}
\newcommand{\Reg}[1]{\mathsf{Reg}({#1})}
\newcommand{\Witproperty}{region-tree structure}
\title{Stackelberg-Pareto Synthesis (full version)}
\author{Véronique Bruyère}{Université de Mons (UMONS), Belgium}{}{}{}
\author{Jean-François Raskin}{Université libre de Bruxelles (ULB), Belgium}{}{}{}
\author{Clément Tamines}{Université de Mons (UMONS), Belgium}{}{}{}
\authorrunning{V. Bruyère, J.-F. Raskin and C. Tamines}
\keywords{Stackelberg non-zero sum games played on graphs, synthesis, parity objectives}
\begin{document}

\maketitle

\begin{abstract}
In this paper, we study the framework of two-player Stackelberg games played on graphs in which Player~$0$ announces a strategy and Player~$1$ responds rationally with a strategy that is an optimal response. While it is usually assumed that Player~$1$ has a single objective, we consider here the new setting where he has several. In this context, after responding with his strategy, Player~$1$ gets a payoff in the form of a vector of Booleans corresponding to his satisfied objectives. Rationality of Player~$1$ is encoded by the fact that his response must produce a Pareto-optimal payoff given the strategy of Player~$0$. We study the \problem{} which asks whether Player~$0$ can announce a strategy which satisfies his objective, whatever the rational response of Player~$1$. For games in which objectives are either all parity or all reachability objectives, we show that this problem is fixed-parameter tractable and \nexptimeComplete{}. This problem is already \npComplete{} in the simple case of reachability objectives and graphs that are trees. 
\end{abstract}

\section{Introduction} 
Two-player zero-sum infinite-duration games played on graphs are a mathematical model used to formalize several important problems in computer science, such as \emph{reactive system synthesis}.  In this context, see e.g.~\cite{PnueliR89}, the graph represents the possible interactions between the system and the environment in which it operates. One player models the system to synthesize, and the other player models the (uncontrollable) environment. In this classical setting, the objectives of the two players are opposite, that is, the environment is \emph{adversarial}. Modelling the environment as fully adversarial is usually a \emph{bold abstraction} of reality as it can be composed of one or several components, each of them having their own objective. 

In this paper, we consider the framework of \emph{Stackelberg games}~\cite{Stackelberg37}, a richer non-zero-sum setting, in which Player~$0$ (the system) called \emph{leader} announces his strategy and then Player~$1$ (the environment) called \emph{follower} plays rationally by using a strategy that is an optimal response to the leader's strategy. This framework captures the fact that in practical applications, a strategy for interacting with the environment is committed before the interaction actually happens. The goal of the leader is to announce a strategy that guarantees him a payoff at least equal to some given threshold. In the specific case of Boolean objectives, the leader wants to see his objective being satisfied. The concept of leader and follower is also present in the framework of \emph{rational synthesis}~\cite{FismanKL10,KupfermanPV16} with the difference that this framework considers several followers, each of them with their own Boolean objective. In that case, rationality of the followers is modeled by assuming that the environment settles to an equilibrium (e.g. a Nash equilibrium) where each component (composing the environment) is considered to be an \emph{independent selfish individual}, excluding cooperation scenarios between components or the possibility of coordinated rational multiple deviations. Our work proposes a novel and natural \emph{alternative} in which the single follower, modeling the environment, has several objectives that he wants to satisfy. After responding to the leader with his own strategy, Player~$1$ receives a vector of Booleans which is his payoff in the corresponding outcome. Rationality of Player~$1$ is encoded by the fact that he only responds in such a way to receive \emph{Pareto-optimal payoffs}, given the strategy announced by the leader. This setting  encompasses scenarios where, for instance, several components can collaborate and agree on trade-offs. The goal of the leader is therefore to announce a strategy that guarantees him to satisfy his own objective, whatever the response of the follower which ensures him a Pareto-optimal payoff. The problem of deciding whether the leader has such a strategy is called the \emph{Stackelberg-Pareto Synthesis problem} (\problemAb{}). 

\subparagraph*{Contributions.} In addition to the definition of the new setting, our main contributions are the following ones. We consider the general class of $\omega$-regular objectives modelled by \emph{parity} conditions and also consider the case of \emph{reachability} objectives for their simplicity\footnote{Indeed, in the classical context of two-player zero-sum games, solving reachability games is in \p{} whereas solving parity games is only known to be in $\mathsf{NP} \cap \text{co-}\mathsf{NP}$, see e.g.~\cite{2001automata}.}.
We provide a thorough analysis of the complexity of solving the \problemAb{} for both objectives. Our results are interesting and singular both from a theoretical and practical point of view. 

First, we show that the \problemAb{} is \emph{fixed-parameter tractable} (\FPT) for reachability objectives when the number of objectives of the follower is a parameter (\autoref{thm:FPT-reach}) and for parity objectives when, in addition, the maximal priority used in each priority function is also a parameter of the complexity analysis (\autoref{thm:FPT-parity}). These are important results as it is expected that, in practice, the \emph{number} of objectives of the environment is limited to a few. To obtain these results, we develop a reduction from our non-zero-sum games to a zero-sum game in which the protagonist, called \emph{Prover}, tries to show the existence of a solution to the problem, while the antagonist, called \emph{Challenger}, tries to disprove it. This zero-sum game is defined in a \emph{generic way}, independently of the actual objectives used in the initial game, and can then be easily adapted according to the case of reachability or parity objectives.

Second, we prove that the \problemAb{} is \nexptimeComplete{} for both reachability and parity objectives (Theorems~\ref{thm:nexptime}, \ref{thm:nexptimehard-reach} and \ref{thm:nexptimehard-parity}), and that it is already \npComplete{} in the simple setting of reachability objectives and graphs that are trees (\autoref{thm:npcomplete}). 
To the best of our knowledge, this is the first \nexptime-completeness result for a natural class of games played on graphs. To obtain the hardness for \nexptime{}, we present a natural \emph{succinct version} of the set cover problem that is complete for this class (\autoref{thm:ssc-completeness}), a result of potential independent interest. We then show how to reduce this problem to the \problemAb{}. To obtain the \nexptime{}-membership of the \problemAb{}, we have shown that exponential-size solutions exist for positive instances of the \problemAb{} and this allows us to design a nondeterministic exponential-time algorithm. Unfortunately, it was not possible to use the \FPT{} algorithm mentioned above to show this membership due to its too high time complexity; conversely, our \nexptime{} algorithm is not \FPT{}.

\subparagraph*{Related Work.} Rational synthesis is introduced in~\cite{FismanKL10} for $\omega$-regular objectives in a setting where the followers are cooperative with the leader, and later in~\cite{KupfermanPV16} where they are adversarial. Precise complexity results for various $\omega$-regular objectives are established in~\cite{ConduracheFGR16} for both settings. Those complexities differ from the ones of the problem studied in this paper. Indeed, for reachability objectives, adversarial rational synthesis is \pspaceComplete, while for parity objectives, its precise complexity is not settled (the problem is \pspaceHard{} and in \nexptime{}). Extension to non-Boolean payoffs, like mean-payoff or discounted sum, is studied in~\cite{GuptaS14,GuptaS14c} in the cooperative setting and in~\cite{BalachanderGR20,FiliotGR20} in the adversarial setting. 

When several players (like the followers) play with the aim to satisfy their objectives, several solution concepts exist such as Nash equilibrium~\cite{Nas50}, subgame perfect equilibrium~\cite{selten}, secure equilibria~\cite{DBLP:conf/tacas/ChatterjeeH07,DBLP:journals/tcs/ChatterjeeHJ06}, or admissibility~\cite{Berwanger07,BrenguierRS15}. The constrained existence problem, close to the cooperative rational synthesis problem, is to decide whether there exists a solution concept such that the payoff obtained by each player is larger than some threshold. Let us mention~\cite{ConduracheFGR16,Ummels08,UmmelsW11} for results on the constrained existence for Nash equilibria and~\cite{Raskin2021,BrihayeBGRB20,Ummels06} for such results for subgame perfect equilibria. Rational verification is studied in~\cite{GutierrezNPW19,GutierrezNPW20}. This problem (which is not a synthesis problem) is to decide whether a given LTL formula is satisfied by the outcome of all Nash equilibria (resp. some Nash equilibrium). The interested reader can find more pointers to works on non-zero-sum games for reactive synthesis in~\cite{DBLP:conf/lata/BrenguierCHPRRS16,Bruyere17}.

\subparagraph*{Structure.} The paper is structured as follows. In Section~\ref{sec:prelim}, we introduce the class of \games{} and the \problemAb{}. We show in Section~\ref{sec:newfpt} that the \problemAb{} is in \FPT{} for 
reachability and parity objectives. The complexity class of this problem is studied in Section~\ref{sec:nexptimecomplete} where we prove that it is \nexptimeComplete{} and \npComplete{} in case of reachability objectives and graphs that are trees. In Section \ref{sec:conclusion}, we provide a conclusion and discuss future work.

\section{Preliminaries and Stackelberg-Pareto Synthesis Problem} \label{sec:prelim}

This section introduces the class of two-player \games{} in which the first player has a single objective and the second has several. We present a decision problem on those games called the \problem{}, which we study in this paper.

\subsection{Preliminaries}

\subparagraph*{Game Arena.}
A \emph{game arena} is a tuple \sloppy $G = (V, V_0, V_1, E, v_0)$ where $(V,E)$ is a finite directed graph such that: \emph{(i)} $V$ is the set of vertices and $(V_0, V_1)$ forms a partition of $V$ where $V_0$ (resp.\ $V_1$) is the set of vertices controlled by Player~$0$ (resp.\ Player~$1$), \emph{(ii)} $E \subseteq V \times V$ is the set of edges such that each vertex $v$ has at least one successor $v'$, i.e., $(v,v') \in E$, and \emph{(iii)} $v_0 \in V$ is the initial vertex. We call a game arena a \emph{tree arena} if it is a tree in which every leaf vertex has itself as its only successor. A \emph{sub-arena} $G'$ with a set $V' \subseteq V$ of vertices and initial vertex $v'_0 \in V'$ is a game arena defined from $G$ as expected.

\subparagraph*{Plays.}
A \emph{play} in a game arena $G$ is an infinite sequence of vertices $\rho = v_0 v_1 \ldots \in V^{\omega}$ such that it starts with the initial vertex $v_0$ and $(v_j,v_{j+1}) \in E$ for all $j \in \N$. \emph{Histories} in $G$ are finite sequences $h = v_0 \ldots v_j \in V^+$ defined similarly. A history is \emph{elementary} if it contains no cycles. We denote by $\Plays_G$ the set of plays in $G$. We write $\Hist_G$ (resp.\ $\Hist_{G,i}$) the set of histories (resp.\ histories ending with a vertex in $V_i$). We use the notations $\Plays$, $\Hist$, and $\Hist_i$ when $G$ is clear from the context. We write $\occ{\rho}$ the set of vertices occurring in $\rho$ and $\infOcc{\rho}$ the set of vertices occurring infinitely often in $\rho$. 

\subparagraph*{Strategies.}
A \emph{strategy} $\sigma_i$ for Player~$i$ is a function $\sigma_i\colon \Hist_i \rightarrow V$ assigning to each history $hv \in \Hist_i$ a vertex $v' = \sigma_i(hv)$ such that $(v,v') \in E$. It is \emph{memoryless} if $\sigma_i(hv) = \sigma_i(h'v)$ for all histories $hv, h'v$ ending with the same vertex $v \in V_i$. More generally, it is \emph{finite-memory} if it can be encoded by a Moore machine ${\cal M}$~\cite{2001automata}. The \emph{memory size} of $\sigma_i$ is the number of memory states of $\cal M$. In particular, $\sigma_i$ is memoryless when it has a memory size of one.

Given a strategy $\sigma_i$ of Player~$i$, a play $\rho = v_0 v_1 \ldots$ is \emph{consistent} with $\sigma_i$ if $v_{j+1} = \sigma_i(v_0 \ldots v_j)$ for all $j \in \N$ such that $v_j \in V_i$. Consistency is naturally extended to histories. We denote by $\Playsigma{\sigma_i}$ (resp.\ $\Histsigma{\sigma_i}$) the set of plays (resp.\ histories) consistent with $\sigma_i$. A \emph{strategy profile} is a tuple $\strategyProfile = (\sigma_0, \sigma_1)$ of strategies, one for each player. We write $\outcome{\strategyProfile}$ the unique play consistent with both strategies and we call it the \emph{outcome} of $\strategyProfile$. 

\subparagraph*{Objectives.}
An \emph{objective} for Player~$i$ is a set of plays $\Obj \subseteq \Plays$. A play $\rho$ \emph{satisfies} the objective $\Obj$ if $\rho \in \Obj$. In this paper, we focus on the two following $\omega$-regular objectives. Let $\target \subseteq V$ be a subset of vertices called a \emph{target set}, the \emph{reachability} objective $\reach{\target} = {\{\rho \in \Plays \mid \occ{\rho} \cap \target \neq \emptyset \}}$ asks to visit at least one vertex of $\target$. Let $c : V \rightarrow \mathbb{N}$ be a function called a \emph{priority function} which assigns an integer to each vertex in the arena, the \emph{parity} objective $\parity{c} = \{\rho \in \Plays \mid \min_{v \in \infOcc{\rho}}(c(v)) \text{ is even}\}$ asks that the minimum priority visited infinitely often be even.

\subsection{Stackelberg-Pareto Synthesis Problem}

\subparagraph*{Stackelberg-Pareto Games.} 
A \emph{\game{}} (\gameAb{}) $\mathcal{G} = (G, \ObjPlayer{0},\ObjPlayer{1}, \dots, \ObjPlayer{\nbrObjectives})$ is composed of a game arena $G$, an objective $\ObjPlayer{0}$ for Player~$0$ and $\nbrObjectives \geq 1$ objectives $\ObjPlayer{1}, \dots, \ObjPlayer{\nbrObjectives}$ for Player~$1$. In this paper, we focus on \gamesAb{} where the objectives are either all reachability or all parity objectives and call such games \emph{reachability} (resp.\ \emph{parity}) \emph{\gamesAb{}}.

\subparagraph*{Payoffs in SP Games.}
The \emph{payoff} of a play $\rho \in \Plays$ corresponds to the vector of Booleans $\payoff{\rho} \in \{0,1\}^{\nbrObjectives}$ such that for all $i \in \{1, \dots, \nbrObjectives\}$, $\payoffObj{i}{\rho} = 1$ if $\rho \in \ObjPlayer{i}$, and $\payoffObj{i}{\rho} = 0$ otherwise. Note that we omit to include Player~$0$ when discussing the payoff of a play. Instead we say that a play $\rho$ is \emph{won} by Player~$0$ if $\rho \in \ObjPlayer{0}$ and we write $\won{\rho} = 1$, otherwise it is \emph{lost} by Player~$0$ and we write $\won{\rho} = 0$. We write $(\won{\rho},\payoff{\rho})$ the \emph{extended payoff} of $\rho$. Given a strategy profile $\strategyProfile$, we write $\won{\strategyProfile} = \won{\outcome{\strategyProfile}}$ and $\payoff{\strategyProfile} = \payoff{\outcome{\strategyProfile}}$.  For reachability \gamesAb{}, since reachability objectives are prefix-dependant and given a history $h \in \Hist$, we also define $\won{h}$ and $\payoff{h}$ as done for plays.

We introduce the following partial order on payoffs. Given two payoffs $p = (p_1, \dots, p_\nbrObjectives)$ and $p' = (p'_1, \dots, p'_\nbrObjectives)$ such that $p, p' \in \{0,1\}^{\nbrObjectives}$, we say that $p'$ is \emph{larger} than $p$ and write $p  \leq p'$ if $p_i \leq p'_i$ for all $i \in \{1, \dots, \nbrObjectives\}$. Moreover, when it also  holds that $p_i < p'_i$ for some $i$, we say that $p'$ is \emph{strictly larger} than $p$ and we write $p < p'$. A subset of payoffs $P \subseteq \{0,1\}^\nbrObjectives$ is an \emph{antichain} if it is composed of pairwise incomparable payoffs with respect to $\leq$.

\subparagraph*{Stackelberg-Pareto Synthesis Problem.} Given a strategy $\sigma_0$ of Player~$0$, we consider the set of payoffs of plays consistent with $\sigma_0$ which are \emph{Pareto-optimal}, i.e., maximal with respect to $\leq$. We write this set $\paretoSet{\sigma_0} =  \max \{\payoff{\rho} \mid \rho \in \Playsigmazero \}$. Notice that it is an antichain. We say that those payoffs are \emph{\paretoOptimal} and write $|\paretoSet{\sigma_0}|$ the number of such payoffs. A play $\rho \in \Playsigmazero$ is called \paretoOptimal{} if its payoff $\payoff{\rho}$ is in $\paretoSet{\sigma_0}$.

The problem studied in this paper asks whether there exists a strategy $\sigma_0$ for Player~$0$ such that every play in $\Playsigmazero$ which is \paretoOptimal{} satisfies the objective of Player~$0$. This corresponds to the assumption that given a strategy of Player~$0$, Player~$1$ will play \emph{rationally}, that is, with a strategy $\sigma_1$ such that  $\outcome{(\sigma_0, \sigma_1)}$ is \paretoOptimal{}. 
It is therefore sound to ask that Player~$0$ wins against such rational strategies.

\begin{definition}
	Given an \gameAb{}, the \emph{\problem{}} (\problemAb{}) is to decide whether there exists a strategy $\sigma_0$ for Player~$0$ (called a \emph{solution}) such that for each strategy profile ${\strategyProfile} = {(\sigma_0, \sigma_1)}$ with $\payoff{\strategyProfile} \in \paretoSet{\sigma_0}$, it holds that $\won{\strategyProfile} = 1$.
\end{definition}

\subparagraph*{Witnesses.} Given a strategy $\sigma_0$ that is a solution to the \problemAb{} and any payoff $p \in \paretoSet{\sigma_0}$, for each play $\rho$ consistent with $\sigma_0$ such that $\payoff{\rho} = p$ it holds that $\won{\rho}=1$. For each $p \in \paretoSet{\sigma_0}$, we arbitrarily select such a play which we call a \emph{witness} (of $p$). We denote by $\Wit{\sigma_0}$ the set of all witnesses, of which there are as many as payoffs in $\paretoSet{\sigma_0}$. In the sequel, it is useful to see this set as a tree composed of $\paretoSetSize$ branches. Additionally for a given history $h \in \Hist$, we write $\prefStrat{h}$ the set of witnesses for which $h$ is a prefix, i.e., $\prefStrat{h} = \{ \rho \in \Wit{\sigma_0} \mid h$ is prefix of $\rho \}$. Notice that $\prefStrat{h} = \Wit{\sigma_0}$ when $h = v_0$ and that $\prefStrat{h}$ decreases as $h$ increases, until it contains a single value or becomes empty. 

\begin{example}
\label{ex:example}

\begin{figure}
	\centering
		\resizebox{0.7\textwidth}{!}{%
		\begin{tikzpicture}[->,>=stealth, shorten >=1pt,auto]
		
		\node[draw, rectangle, minimum size=0.8cm] (v0) at (-0.25,0){$v_0$};
		\node[draw, circle, minimum size=0.8cm] (v1) at (1.5,0.75){$v_1$};
		\node[draw, rectangle, minimum size=0.8cm] (x) at (1.5,-0.75){$v_2$};
		\node[draw, circle, minimum size=0.8cm] (v2) at (3.25,0){$v_3$};
		\node[draw, circle, minimum size=0.8cm] (v3) at (3.25,-1.5){$v_4$};
		\node[draw, rectangle, minimum size=0.8cm] (v4) at (4.75,0.75){$v_5$};			
		\node[draw, circle, minimum size=0.8cm] (v5) at (4.75,-0.75){$v_7$};
		\node[draw, circle, minimum size=0.8cm] (v6) at (6.25,0.75){$v_6$};
				
		\draw[-stealth] (v0) edge [] node {} (v1);
		\draw[-stealth] (v0) edge [] node {} (x);
		\draw[-stealth] (x) edge [] node {} (v2);
		\draw[-stealth] (x) edge [] node {} (v3);

		\draw[-stealth] (v1) edge [loop right] node[right] {\small $(0, (0,0,1))$} (v1);
		
		\draw[-stealth] (v3) edge [loop right] node[right] {\small $( 0, (1,0,0))$} (v3);

		\draw[-stealth] (v2) edge [] node {} (v4);
		\draw[-stealth] (v4) edge [bend left] node {} (v2);

		\draw[-stealth] (v2) edge [] node {} (v5);
		
		\draw[-stealth] (v5) edge [loop right] node {\small $(1, (1,1,0))$} (v5);
		
		\draw[-stealth] (v4) edge [] node {} (v6);
		
		\draw[-stealth] (v6) edge [loop right] node {\small$(1, (0,1,1))$} (v6);

		\end{tikzpicture}
		}%
	
	\caption{A reachability \gameAb{}.}
	\label{example_memory}
\end{figure}

Consider the reachability \gameAb{} with arena $G$ depicted in Figure \ref{example_memory} in which Player~$1$ has $\nbrObjectives = 3$ objectives. The vertices of Player~$0$ (resp.\ Player~$1$) are depicted as ellipses (resp.\ rectangles)\footnote{This convention is used throughout this paper.}. Every objective in the game is a reachability objective defined as follows: $\ObjPlayer{0} = \reach{\{v_6, v_7\}}$, $\ObjPlayer{1} = \reach{\{v_4, v_7\}}$, $\ObjPlayer{2} = \reach{\{v_3\}}$, $\ObjPlayer{3} = \reach{\{v_1, v_6\}}$. The extended payoff of plays reaching vertices from which they can only loop is displayed in the arena next to those vertices, and the extended payoff of play $v_0 v_2 (v_3v_5)^\omega$ is $(0, (0,1,0))$.

Consider the memoryless strategy $\sigma_0$ of Player~$0$ such that he chooses to always move to $v_5$ from $v_3$. The set of payoffs of plays consistent with $\sigma_0$ is $\{(0,0,1), (0,1,0), (1, 0, 0), (0, 1, 1)\}$ and the set of those that are Pareto-optimal is $\paretoSet{\sigma_0} = \{(1, 0, 0), (0, 1, 1)\}$. Notice that play $\rho = v_0 v_2 (v_4)^\omega$ is consistent with $\sigma_0$, has payoff $(1, 0, 0)$ and is lost by Player~$0$. Strategy $\sigma_0$ is therefore not a solution to the \problemAb{}. In this game, there is only one other memoryless strategy for Player~$0$, where he chooses to always move to $v_7$ from $v_3$. One can verify that it is again not a solution to the \problemAb{}. 

We can however define a finite-memory strategy $\sigma'_0$ such that $\sigma'_0(v_0 v_2 v_3) = v_5$ and $\sigma'_0(v_0 v_2 v_3 v_5 v_3) = v_7$ and show that it is a solution to the problem. Indeed, the set of \paretoOptimalStrategy{$\sigma'_0$} payoffs is $\paretoSet{\sigma'_0} = \{(0, 1, 1),(1, 1, 0)\}$ and Player~$0$ wins every play consistent with $\sigma'_0$ whose payoff is in this set. A set $\Wit{\sigma'_0}$ of witnesses for these payoffs is $\{v_0v_2v_3v_5v_6^\omega, v_0v_2v_3v_5v_3v_7^\omega\}$ and is in this case the unique set of witnesses. This example shows that Player~$0$ sometimes needs memory in order to have a solution to the \problemAb{}.
\end{example}

\section{Fixed-Parameter Complexity}
\label{sec:newfpt}

In this section, we show that the \problemAb{} is in \FPT{} for both cases of reachability and parity \gamesAb{}. The details of our proof for each type of objective are provided separately in their own subsection. We refer the reader to~\cite{downey2012parameterized} for the concept of fixed-parameter complexity.

\begin{restatable}{theorem}{thmFPTreach} \label{thm:FPT-reach}
Solving the \problemAb{} is in \FPT{} for reachability \gamesAb{} for parameter~$\nbrObjectives$ equal to the number of objectives of Player~$1$.
\end{restatable}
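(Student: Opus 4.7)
I would attack this via a reduction to a polynomial-time-solvable two-player zero-sum \challengerProver{} game $H$ whose size is $f(\nbrObjectives) \cdot \mathrm{poly}(|G|)$. In $H$, \prov{} attempts to exhibit a solution $\sigma_0$ together with its Pareto set and witnesses, while \chal{} tries to refute it by forcing a \paretoOptimal{} play that violates $\ObjPlayer{0}$ or by producing a play whose payoff is incomparable with, or dominates, everything in the claimed Pareto set.

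The starting observation is that any antichain of $\{0,1\}^\nbrObjectives$ has at most $\binom{\nbrObjectives}{\lfloor \nbrObjectives/2 \rfloor} \leq 2^\nbrObjectives$ elements, so $|\paretoSet{\sigma_0}| \leq 2^\nbrObjectives$ for every $\sigma_0$. For reachability objectives, the payoff of a history depends only on which targets have already been visited, and this visited-set belongs to $2^{\{1,\dots,\nbrObjectives\}}$. This makes it sensible to take the state space of $H$ to consist of triples $(v, I, B)$ with $v \in V$, $I \subseteq \{1,\dots,\nbrObjectives\}$ the currently visited targets, and $B$ a subset of a guessed antichain $P \subseteq \{0,1\}^\nbrObjectives$ recording which Pareto payoffs still have their witness passing through the current history. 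The size of $H$ is then bounded by $|V| \cdot 2^\nbrObjectives \cdot 2^{2^\nbrObjectives}$, i.e., $f(\nbrObjectives) \cdot |V|$.

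The mechanics of $H$ are as follows. At the first move, \prov{} declares the antichain $P$, initialising $B = P$. From a state $(v, I, B)$: if $v \in V_0$, \prov{} chooses a single successor (the move of $\sigma_0$), inherited by all witnesses in $B$; if $v \in V_1$, \prov{} may split $B$ into disjoint subsets $B_1, \dots, B_k$ heading to distinct successors, thereby growing the witness tree. Simultaneously, at each $v \in V_1$, \chal{} is allowed to \emph{deviate}: she selects an arbitrary successor of $v$, after which play continues with \prov{} controlling Player~$0$ vertices and \chal{} controlling Player~$1$ vertices adversarially, with its own accumulated target set maintained alongside. The winning condition for \prov{} is the conjunction of two properties on this finite graph: (i) every witness branch eventually reaches a loop lying in $\ObjPlayer{0}$ whose accumulated target set matches the label $p \in P$ of that branch, and (ii) no deviation play stabilises on a payoff that is incomparable with $P$ or strictly dominates some $p \in P$. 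Both are reachability/safety conditions, since once a target is seen its bit can never flip, so the payoff of any infinite play stabilises after polynomially many steps; hence $H$ can be solved in time polynomial in $|H|$.

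The main technical obstacle is the equivalence ``\prov{} wins in $H$ iff the \problemAb{} has a solution''. The easy direction is to extract a winning \prov-strategy from a given solution $\sigma_0$: play the antichain $P = \paretoSet{\sigma_0}$, let the witness tree be a suitable $\Wit{\sigma_0}$, and answer every deviation with the remainder of $\sigma_0$ itself; conditions (i) and (ii) hold because $P$ is the true Pareto set and $\sigma_0$ wins on it. The harder direction is to synthesise $\sigma_0$ from a winning \prov{}-strategy: the moves along witness branches define $\sigma_0$ on witness histories, while \prov's responses to \chal-deviations define $\sigma_0$ on all other histories; verifying that the resulting $\sigma_0$ indeed satisfies $\paretoSet{\sigma_0} \subseteq P$ with only winning witnesses requires condition (ii) to exclude any ``hidden'' payoff strictly above $P$, which is precisely what is enforced by letting \chal{} deviate at every Player~$1$ vertex. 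Combining the two directions with the polynomial-time algorithm on $H$ yields the claimed \FPT{} bound in $\nbrObjectives$.
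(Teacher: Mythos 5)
Your proposal follows essentially the same route as the paper: a reduction to a Challenger--Prover zero-sum game whose states carry the current vertex, the set of already-visited targets, and the subset of a guessed antichain $P$ whose witnesses still pass through the current history, giving an arena of size $f(\nbrObjectives)\cdot|V|$ that is then solved in time polynomial in its size (the paper phrases the winning condition as a B\"uchi objective rather than a reachability/safety one, but since the visited-target and witness components stabilize along every play these formulations are interchangeable).

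There is, however, one concrete gap in your winning condition. Your condition \emph{(ii)} only forbids deviation plays whose limit payoff is incomparable with $P$ or strictly dominates some $p\in P$; it therefore admits a deviation play whose payoff is \emph{equal} to some $p\in P$. Such a play is \paretoOptimal{}, so for $\sigma_0$ to be a solution it must additionally be won by Player~$0$ --- this is exactly condition~(\ref{cp_cond_2}) in the paper's objective $B_P$, which requires $\payoff{\rho_V}\in P \land \won{\rho_V}=1$ for non-witness plays, and it is not implied by your conditions \emph{(i)} (which only covers witness branches) and \emph{(ii)}. Without it, Prover could win your game $H$ while the extracted strategy $\sigma_0$ loses some Pareto-optimal consistent play, breaking the harder direction of your equivalence. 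The fix is local (add the missing clause to the Prover objective), and the rest of the argument, including the size bound $|V|\cdot 2^{\nbrObjectives}\cdot 2^{2^{\nbrObjectives}}$ up to the extra factor for the choice of $P$ itself, goes through as in the paper.
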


\begin{restatable}{theorem}{thmFPTparity} \label{thm:FPT-parity}
Solving the \problemAb{} is in \FPT{} for parity \gamesAb{} for parameters $t$ and the maximal priority according to each parity objective of Player~$1$.
\end{restatable}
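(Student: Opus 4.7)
The plan is to reuse the \challengerProverAb{} zero-sum reduction that the paper introduces in the proof of \autoref{thm:FPT-reach}, emphasised there as being defined generically and independently of the objective class. In that reduction \prov{} guesses a set $\provWit$ of at most $2^t$ plays together with their payoffs (one per Pareto-optimal payoff) and must both show that each witness satisfies $\ObjPlayer{0}$ and that no play consistent with the underlying strategy can strictly dominate any announced payoff. The first step is to argue that this skeleton carries over verbatim to parity objectives; the only piece that needs to be rebuilt is the payoff-certification gadget, which for reachability simply tracks the subset of visited targets, but for parity must account for infinite behaviour.

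Next I would design the parity gadget. For each $i \in \{1,\dots,t\}$ let $d_i$ be the maximal priority of the function $c_i$; I would enrich every state of the C-P arena with a memory component of size $\prod_{i=1}^{t}(d_i+1)$ that records, for every Player~$1$ objective, the minimum priority of $c_i$ seen along a designated ``tail'' portion of the current branch, together with a synchronisation step that refreshes this component so that only priorities occurring infinitely often are eventually measured. This memory lets the product arena declare, in the limit, the exact payoff $\payoff{\rho}$ of every infinite play followed by the game. The blow-up is a function of the parameters $t$ and $\max_i d_i$ alone, which respects the \FPT{} budget.

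The resulting two-player zero-sum game then has a winning condition that is a Boolean combination of parity objectives: along a witness branch \prov{} must make $\ObjPlayer{0}$ hold and the certified limit payoff match the vector announced for that branch, whereas along a deviation branch \chal{} wins when the certified limit payoff strictly dominates some announced payoff, or when Player~$0$ loses and no announced payoff is matched. Such generalised parity games are solvable in time polynomial in the game size and exponential only in the number and magnitude of priorities, so with a product of size $|G| \cdot f(t,\max_i d_i)$ this yields an algorithm of the form $|G|^{O(1)} \cdot g(t,\max_i d_i)$, which is \FPT{} in the announced parameters.

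The main obstacle I expect is the \emph{certification} step: since parity payoffs are not determined by any finite prefix, the gadget must allow \chal{} to run a deviation forever while still pinning down its true payoff unambiguously, and symmetrically must prevent \prov{} from abandoning a witness branch after committing to a payoff that the infinite tail does not realise. This is what forces the memory to track per-objective minimum priorities rather than one global structure, and it is also the point where the soundness-and-completeness argument of \autoref{thm:FPT-reach} has to be re-examined; once this is taken care of, the overall correctness transfers by essentially the same inductive analysis as in the reachability case.
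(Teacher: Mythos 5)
Your overall plan --- reuse the \challengerProverAb{} game and respecialize only the objective for parity --- matches the paper's strategy, but two steps in your proposal do not go through as written. First, the payoff-certification gadget is not a sound construction. A memory component that records, per objective, the minimum priority ``seen along a designated tail portion'' and is periodically refreshed does not stabilize on the limit payoff: whether $\min_{v \in \infOcc{\rho}} c_i(v)$ is even is a property of the set of vertices visited infinitely often, and no component of the arena state can be made to eventually \emph{hold} this answer, since the refreshes would have to be synchronized with the (unknown) point after which only recurrent vertices appear. The paper sidesteps this entirely: it adds \emph{no} memory to $G'$ for parity, and instead expresses ``$\payoff{\rho_V} = p$'' directly inside the winning condition, as a Boolean combination of atoms of the form ``$\infOcc{\rho}$ meets $T^i_j$'', where $T^i_j$ is the set of \prov{}-vertices whose $v$-component has priority $j$ under $c_i$. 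The determination of the payoff is thus pushed into the acceptance condition evaluated on $\infOcc{\rho}$, which is where it belongs for prefix-independent objectives.

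Second, and more importantly, your complexity claim for the final game is unsupported. You assert that a game whose winning condition is a Boolean combination of parity objectives can be solved ``in time polynomial in the game size and exponential only in the number and magnitude of priorities.'' The known algorithms for generalized parity games (conjunctions or disjunctions of parity conditions) run in time $n^{O(d)}$ with $d$ depending on the number of objectives and their priorities; that is an XP bound, not an \FPT{} one, and it does not yield the theorem. The paper's proof hinges on a specific prior result (\autoref{bb_complexity}, from~\cite{BruyereHR18}): Boolean B\"uchi games with $m$ atomic sets are solvable in time $\mathcal{O}(2^M\cdot|\phi| + (M^M\cdot|V|)^5)$ with $M = 2^m$, i.e., with the parameter dependence confined to a multiplicative factor and a fixed exponent on $|V|$. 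Your reduction would be salvageable by routing it through exactly this result --- encoding each $\parity{c_i}$, each payoff, and each of the conditions (\ref{cp_cond_1})--(\ref{cp_cond_3}) as a Boolean B\"uchi formula over the sets $T^i_j$, as the paper does --- but as stated the \FPT{} conclusion does not follow from what you invoke.
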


\subsection{Challenger-Prover Game} 
\label{subsec:cp}

In order to prove \autoref{thm:FPT-reach} and \autoref{thm:FPT-parity}, we provide a reduction to a specific two-player zero-sum game, called the \emph{\challengerProver{}} game (\challengerProverAb{} game). This game is a \emph{zero-sum}\footnote{We suppose the reader familiar with the concept of zero-sum games, see e.g.~\cite{2001automata}.} game played between \emph{Challenger} (written $\chal{}$) and \emph{Prover} (written $\prov{}$). We will show that Player~$0$ has a solution to the \problemAb{} in an \gameAb{} if and only if $\prov{}$ has a winning strategy in the corresponding \challengerProverAb{} game. In the latter game, $\prov{}$ tries to show the existence of a strategy $\sigma_0$ that is solution to the \problemAb{} in the original game and $\chal{}$ tries to disprove it. The \challengerProverAb{} game is described independently of the objectives used in the \gameAb{} and its objective is described as such in a \emph{generic way}. We later provide the proof of our \FPT{} results by adapting it specifically for reachability and parity \gameAb{}s.

\subparagraph*{Intuition on the \challengerProverAb{} Game.} 
Without loss of generality, the \gamesAb{} we consider in this section are such that each vertex in their arena has at most \emph{two successors}. It can be shown (see Appendix \ref{app:usefull_notions}) that any \gameAb{} $\mathcal{G}$ with $n$ vertices can be transformed into an \gameAb{} $\bar{\mathcal{G}}$ with $\mathcal{O}(n^2)$ vertices such that every vertex has at most two successors and Player~$0$ has a solution to the \problemAb{} in $\mathcal{G}$ if and only if he has a solution to the \problemAb{} in $\bar{\mathcal{G}}$.

Let $\mathcal{G}$ be an \gameAb{}. The \challengerProverAb{} game $\mathcal{G'}$ is a zero-sum game associated with $\mathcal{G}$ that intuitively works as follows. First, $\prov{}$ selects a set $P$ of payoffs which he announces as the set of Pareto-optimal payoffs $\paretoSet{\sigma_0}$ for the solution $\sigma_0$ to the \problemAb{} in $\mathcal{G}$ he is trying to construct. Then, $\prov{}$ tries to show that there exists a set of witnesses $\Wit{\sigma_0}$ in $\mathcal{G}$ for the payoffs in $P$. After the selection of $P$ in $\mathcal{G}'$, there is a one-to-one correspondence between plays in the arenas $G$ and $G'$ such that the vertices in $G'$ are augmented with a set $\provWit$ which is a subset of $P$. Initially $\provWit$ is equal to $P$ and after some history in $G'$, $\provWit$ contains payoff $p$ if the corresponding history in $G$ is prefix of the witness with payoff $p$ in the set $\Wit{\sigma_0}$ that $\prov{}$ is building. In addition, the objective $\Omega_{\prov{}}$ of $\prov{}$ is such that he has a winning strategy $\sigma_\prov$ in $\mathcal{G'}$ if and only if the set $P$ that he selected coincides with the set $\paretoSet{\sigma_0}$ for the corresponding strategy $\sigma_0$ in $\mathcal{G}$ and the latter strategy is a solution to the \problemAb{} in $\mathcal{G}$. A part of the arena of the \challengerProverAb{} game for \autoref{ex:example} with a positional winning strategy for $\prov{}$ highlighted in bold is illustrated in Figure~\ref{fig:example_challenger_prover}. 

\begin{figure} 
	\centering
		\resizebox{\textwidth}{!}{%
		\begin{tikzpicture}[->,>=stealth, shorten >=1pt,auto]
           
		\node[draw, circle, minimum size=0.8cm] (bot) at (9,11.5){$\bot$};
		\node[draw,rectangle, rounded corners=11pt, minimum size=0.8cm] (v0) at (10, 10){$v_0, P, \{p_1, p_2\}$};
		\node[draw, rectangle, minimum size=0.8cm] (v0sq) at (13.5, 10){$v_0, P, (\emptyset, \{p_1, p_2\})$};
		\node[draw,rectangle, rounded corners=11pt, minimum size=0.8cm] (v1) at (13.5, 11.5){$v_1, P, \emptyset$};
		\node[draw,rectangle, rounded corners=11pt, minimum size=0.8cm] (x) at (13.5, 8.5){$v_2, P, \{p_1, p_2\}$};

		\node[draw, rectangle, minimum size=0.8cm] (xsq) at (17, 8.5){$v_2, P, (\{p_1, p_2\}, \emptyset)$};
		\node[draw,rectangle, rounded corners=11pt, minimum size=0.8cm] (v2) at (17, 10){$v_3, P, \{p_1, p_2\}$};
		\node[draw,rectangle, rounded corners=11pt, minimum size=0.8cm] (v3) at (17, 7){$v_4, P, \emptyset$};

		\node[draw,rectangle, rounded corners=11pt, minimum size=0.8cm] (v4) at (20.5, 10){$v_5, P, \{p_1, p_2\}$};
		\node[draw,rectangle, rounded corners=11pt, minimum size=0.8cm] (v5) at (17, 11.5){$v_7, P, \{p_1, p_2\}$};
		\node[draw, rectangle, minimum size=0.8cm] (v4sq) at (24, 10){$v_5, P, (\{p_1\}, \{p_2\})$};
		
		\node[draw, rectangle, minimum size=0.8cm] (v4sqtre) at (24, 8.5){$v_5, P, (\{p_2\}, \{p_1\})$};
		\node[draw,rectangle, rounded corners=11pt, minimum size=0.8cm] (v2tre) at (24, 7){$v_3, P, \{p_2\}$};
		\node[draw,rectangle, rounded corners=11pt, minimum size=0.8cm] (v6tre) at (27.25, 8.5){$v_6, P, \{p_1\}$};
		\node[draw,rectangle, rounded corners=11pt, minimum size=0.8cm] (v5tre) at (27.25, 7){$v_7, P, \{p_2\}$};

		\node[draw, rectangle, minimum size=0.8cm] (v4sqfr) at (20.5, 8.5){$v_5, P, (\{p_1, p_2\}, \emptyset)$};
		\node[draw, rectangle, rounded corners=11pt, minimum size=0.8cm] (v6fr) at (20.5, 7){$v_6, P, \emptyset$};
				
    	\node[draw,rectangle, rounded corners=11pt, minimum size=0.8cm] (v2bis) at (24, 11.5){$v_3, P, \{p_1\}$};
		\node[draw,rectangle, rounded corners=11pt, minimum size=0.8cm] (v6) at (27.25, 10){$v_6, P, \{p_2\}$};
		
	    \node[draw,rectangle, rounded corners=11pt, minimum size=0.8cm] (v5bis) at (27.25, 11.5){$v_7, P, \{p_1\}$};

	    \node[rectangle, rounded corners=11pt, minimum size=0.8cm] (dotsbot) at (8, 10){$\dots$};
	    \node[rectangle, rounded corners=11pt, minimum size=0.8cm] (dotsv0) at (9, 8.5){$\dots$};
	    \node[rectangle, rounded corners=11pt, minimum size=0.8cm] (dotsv0bis) at (10, 8.5){$\dots$};
	    \node[rectangle, rounded corners=11pt, minimum size=0.8cm] (dotsv0tre) at (11, 8.5){$\dots$};
	    \node[rectangle, rounded corners=11pt, minimum size=0.8cm] (dotsx) at (13.5, 7){$\dots$};
	    \node[rectangle, rounded corners=11pt, minimum size=0.8cm] (dotsxbis) at (12.5, 7){$\dots$};
	    \node[rectangle, rounded corners=11pt, minimum size=0.8cm] (dotsxtre) at (14.5, 7){$\dots$};
	    \node[rectangle, rounded corners=11pt, minimum size=0.8cm] (dotsv4) at (20.5,11.5){$\dots$};
	    
	    \node[rectangle, rounded corners=11pt, minimum size=0.8cm] (dotsv3) at (24,12.75){$\dots$};
	    \node[rectangle, rounded corners=11pt, minimum size=0.8cm] (dotsv3bis) at (24,5.75){$\dots$};

	    \draw[-stealth] (bot) edge [] node {} (dotsbot);
		\draw[-stealth] (v0) edge [] node {} (dotsv0);
    	\draw[-stealth] (v0) edge [] node {} (dotsv0bis);
		\draw[-stealth] (v0) edge [] node {} (dotsv0tre);
		\draw[-stealth] (x) edge [] node {} (dotsx);
		\draw[-stealth] (x) edge [] node {} (dotsxbis);
		\draw[-stealth] (x) edge [] node {} (dotsxtre);
		\draw[-stealth] (v4) edge [] node {} (dotsv4);		
		\draw[-stealth] (v2bis) edge [] node {} (dotsv3);		
		\draw[-stealth] (v2tre) edge [] node {} (dotsv3bis);

		\draw[-stealth, very thick] (bot) edge [] node {} (v0);
		\draw[-stealth, very thick] (v0) edge [] node {} (v0sq);
		\draw[-stealth] (v0sq) edge [] node {} (v1);
		\draw[-stealth] (v0sq) edge [] node {} (x);
		
		\draw[-stealth, very thick] (x) edge [] node {} (xsq);
		\draw[-stealth] (xsq) edge [] node {} (v2);
		\draw[-stealth] (xsq) edge [] node {} (v3);
		
		\draw[-stealth, very thick] (v2) edge [] node {} (v4);
		\draw[-stealth] (v2) edge [] node {} (v5);
		\draw[-stealth, very thick] (v4) edge [] node {} (v4sq);
		
		\draw[-stealth] (v4sq) edge [] node {} (v2bis);
		\draw[-stealth] (v4sq) edge [] node {} (v6);
		
		\draw[-stealth, very thick] (v2bis) edge [] node {} (v5bis);
		
		\draw[-stealth, very thick] (v1) edge [loop above] node {} (v1);
		\draw[-stealth, very thick] (v5) edge [loop above] node {} (v5);
		\draw[-stealth, very thick] (v5bis) edge [loop above] node {} (v5bis);
		
		\draw[-stealth, very thick] (v3) edge [loop below] node {} (v3);
		\draw[-stealth, very thick] (v6) edge [loop above] node {} (v6);
		
		\draw[-stealth] (v4) edge [] node {} (v4sqtre);
		\draw[-stealth] (v4sqtre) edge [] node {} (v6tre);
		\draw[-stealth] (v4sqtre) edge [] node {} (v2tre);
		\draw[-stealth] (v2tre) edge [] node {} (v5tre);
		
		\draw[-stealth] (v4) edge [] node {} (v4sqfr);
		\draw[-stealth] (v4sqfr) edge [] node {} (v6fr);
		\draw[-stealth] (v4sqfr) edge [] node {} (v2);
		
		\draw[-stealth, very thick] (v6fr) edge [loop below] node {} (v6fr);

		\draw[-stealth] (v6tre) edge [loop below] node {} (v6tre);
		\draw[-stealth] (v5tre) edge [loop below] node {} (v5tre);

		\end{tikzpicture}
	}%
	\caption{A part of the \challengerProverAb{} game for Example \ref{ex:example} with $P = \{p_1, p_2\}, p_1 = (1,1,0)$ and $p_2 = (0,1,1)$. }
	\label{fig:example_challenger_prover}
\end{figure}
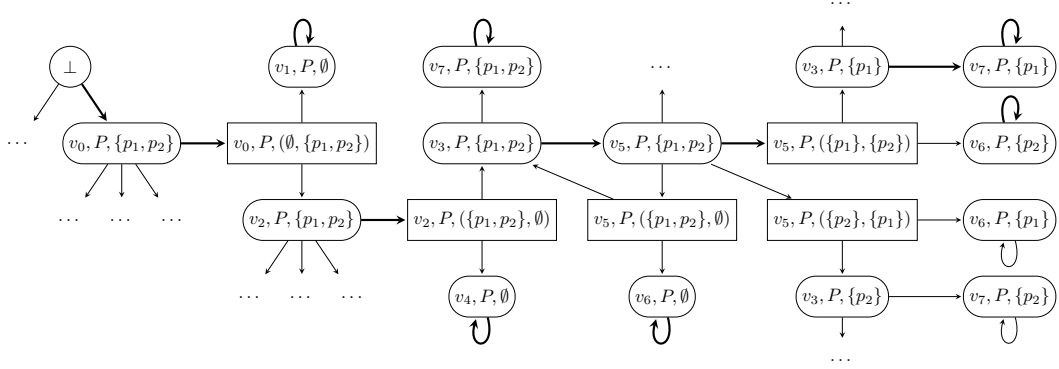 

\subparagraph*{Arena of the \challengerProverAb{} Game.} The initial vertex $\bot$ belongs to $\prov{}$. From this vertex, he selects a successor $(v_0, P, \provWit)$ such that $W = P$ and $P$ is an antichain of payoffs which $\prov$ announces as the set $\paretoSet{\sigma_0}$ for the strategy $\sigma_0$ in $G$ he is trying to construct. All vertices in plays starting with this vertex will have this same value for their $P$-component. Those vertices are either a triplet $(v, P, \provWit)$ that belongs to $\prov$ or $(v, P, (\provWit_l, \provWit_r))$ that belongs to~$\chal{}$. Given a play $\rho$ (resp.\ history $h$) in $G'$, we denote by $\rho_V$ (resp.\ $h_V$) the play (resp.\ history) in $G$ obtained by removing $\bot$ and keeping the $v$-component of every vertex of $\prov{}$ in $\rho$ (resp.\ $h$), which we call its \emph{projection}. 
\begin{itemize}
    \item After history $hm$ such that $m = (v, P, \provWit)$ with $v \in V_0$, $\prov{}$ selects a successor $v'$ such that $(v, v') \in E$ and vertex $(v', P, \provWit)$ is added to the play. This corresponds to Player~$0$ choosing a successor $v'$ after history $h_V v$ in~$G$. 
    \item After history $hm$ such that $m = (v, P, \provWit)$ with $v \in V_1$, $\prov{}$ selects a successor $(v, P, (\provWit_l, \provWit_r))$ with $(\provWit_l, \provWit_r)$ a partition of $\provWit$. This corresponds to $\prov{}$ splitting the set $W$ into two parts according to the two successors $v_l$ and $v_r$ of $v$. For the strategy $\sigma_0$ that $\prov{}$ tries to construct and its set of witnesses $\Wit{\sigma_0}$ he is building, he asserts that $\provWit_l$ (resp.\ $\provWit_r$) is the set of payoffs of the witnesses in $\prefStrat{h_V v_l}$ (resp.\ $\prefStrat{h_V v_r}$).
    \item From a vertex $(v, P, (\provWit_l, \provWit_r))$, $\chal{}$ can select a successor $(v_l, P, \provWit_l)$ or $(v_r, P, \provWit_r)$ which corresponds to the choice of Player~$1$. 
\end{itemize}   

\noindent
Formally, the game arena of the \challengerProverAb{} game is the tuple $G' = (V', V'_{\prov{}}, V'_{\chal{}}, E', \bot)$ with
\begin{itemize}
    \item $V'_{\prov{}} = \{ \bot \} \cup \{ (v, P, \provWit) \mid v \in V, P \subseteq \{0,1\}^\nbrObjectives \text{ is an antichain and } \provWit \subseteq P \}$,
    \item $V'_{\chal{}} = \{ (v, P, (\provWit_l, \provWit_r)) \mid v \in V_1, P \subseteq \{0,1\}^\nbrObjectives \text{ is an antichain and } \provWit_l, \provWit_r \subseteq P \}$,
    \item $(\bot, (v, P, \provWit)) \in E'$ if $v = v_0$ and $P = \provWit$,
    \item $((v, P, \provWit), (v', P, \provWit)) \in E'$ if $v \in V_0$ and $(v, v') \in E$,
    \item $((v, P, \provWit), (v, P, (\provWit_l, \provWit_r))) \in E'$ if $v \in V_1$ and $(\provWit_l, \provWit_r)$ is a partition of $\provWit$, 
    \item $((v, P, (\provWit_l, \provWit_r)), (v', P, \provWit) ) \in E'$ if $(v, v') \in E$ and $\{v' = v_l$ and $\provWit = \provWit_l\}$ or 
    $\{v' = v_r$ and $\provWit = \provWit_r\}$.
\end{itemize}
In the definition of $E'$, if $v$ has a single successor $v'$ in $G$, it is assumed to be $v_l$ and $W_r$ is always equal to $\emptyset$. We use as a convention that given the two successors $v_i$ and $v_j$ of vertex $v$, $v_i$ is the left successor if $i < j$.

\subparagraph*{Objective of $\prov$ in the \challengerProverAb{} Game.}
Let us now discuss the objective $\Omega_{\prov{}}$ of $\prov{}$. The $\provWit$-component of the vertices controlled by $\prov{}$ has a size that decreases along a play $\rho$ in $G'$. We write $lim_\provWit(\rho)$ the value of the $\provWit$-component at the limit in $\rho$. Recall that with this $\provWit$-component, $\prov{}$ tries to construct a solution $\sigma_0$ to the \problemAb{} with associated sets $\paretoSet{\sigma_0}$ and $\Wit{\sigma_0}$. Therefore, for him to win in the \challengerProverAb{} game, $lim_\provWit(\rho)$ must be a singleton or empty in every consistent play such that:
\begin{itemize}
    \item $lim_\provWit(\rho)$ must be a singleton $\{p\}$ with $p$ the payoff of $\rho_V$ in $G$, showing that $\rho_V \in \Wit{\sigma_0}$ is a correct witness for $p$. In addition, it must hold that $\won{\rho_V} = 1$ as $p \in P$ and as $\prov{}$ wants $\sigma_0$ to be a solution.
    \item $lim_\provWit(\rho)$ must be the empty set such that either the payoff of $\rho_V$ belongs to $\paretoSet{\sigma_0}$ and $\won{\rho_V} = 1$, or the payoff of $\rho_V$ is strictly smaller than some payoff in~$\paretoSet{\sigma_0}$.
\end{itemize}
These conditions verify that the sets $P = \paretoSet{\sigma_0}$ and $\Wit{\sigma_0}$ are correct and that $\sigma_0$ is indeed a solution to the \problemAb{} in $G$. They are generic as they do not depend on the actual objectives used in the \gameAb{}. 

Let us give the formal definition of $\Omega_{\prov{}}$. For an antichain $P$ of payoffs, we write $\Plays^P_{G'}$ the set of plays in $G'$ which start with $\bot (v_0, P, P)$ and we define the following set 
\begin{alignat}{3}
B_P = \big{\{} \rho \in \Plays^P_{G'} ~\mid~ &(lim_\provWit(\rho) = \{p\} &&\land \payoff{\rho_V} = p \in P &&\land \won{\rho_V} = 1) \ \lor \label{cp_cond_1}\\
                &(lim_\provWit(\rho) = \emptyset &&\land \payoff{\rho_V} \in P &&\land \won{\rho_V} = 1) \ \lor \label{cp_cond_2} \\
             &(lim_\provWit(\rho) = \emptyset &&\land \exists p \in P, \payoff{\rho_V} < && ~p) \big{\}}. \label{cp_cond_3}
\end{alignat}
Objective $\Omega_{\prov{}}$ of $\prov{}$ in $\mathcal{G'}$ is the union of $B_P$ over all antichains $P$. As the \challengerProverAb{} game is zero-sum, objective $\Omega_{\chal{}}$ equals $\Plays_{G'} \setminus \Omega_{\prov{}}$. The following theorem holds.

\begin{restatable}{theorem}{thmCP} \label{thm:CP} 
Player~$0$ has a strategy $\sigma_0$ that is solution to the \problemAb{} in $\mathcal{G}$ if and only if $\prov{}$ has a winning strategy $\sigma_{\prov{}}$ from $\bot$ in the \challengerProverAb{} game $\mathcal{G'}$.
\end{restatable}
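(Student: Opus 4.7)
I would prove the two directions separately, each time exploiting the natural correspondence between plays in $\mathcal{G}'$ consistent with a Prover strategy and plays in $\mathcal{G}$ consistent with an associated Player~$0$ strategy, obtained via the projection $\rho \mapsto \rho_V$.

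\textbf{($\Rightarrow$) From $\sigma_0$ to $\sigma_\prov$.} Starting from a solution $\sigma_0$ together with a fixed witness set $\Wit{\sigma_0}$, I define $\sigma_\prov$ as follows: from $\bot$, move to $(v_0, P, P)$ with $P = \paretoSet{\sigma_0}$; at a Prover vertex $(v, P, W)$ with $v \in V_0$ reached after history $h$ with projection $h_V$, play $\sigma_0(h_V)$; at a Prover vertex $(v, P, W)$ with $v \in V_1$, partition $W$ as $(W_l, W_r)$ where $W_l$ and $W_r$ collect the payoffs of the witnesses in $\prefStrat{h_V v_l}$ and $\prefStrat{h_V v_r}$ respectively. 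A short induction yields the key invariant: along any play $\rho$ consistent with $\sigma_\prov$, the $W$-component at each Prover vertex equals the set of payoffs of the witnesses extending the current prefix of $\rho_V$. Consequently $lim_W(\rho)$ is either $\{p\}$ (exactly when $\rho_V$ is the chosen witness of $p$, so $\payoff{\rho_V} = p$ and $\won{\rho_V}=1$, i.e.\ condition~\eqref{cp_cond_1}), or empty, in which case $\payoff{\rho_V} \le p$ for some $p \in P$: if equal, $\rho_V$ is Pareto-optimal and thus won by assumption, giving~\eqref{cp_cond_2}; otherwise~\eqref{cp_cond_3} holds.

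\textbf{($\Leftarrow$) From $\sigma_\prov$ to $\sigma_0$.} Let $P$ be the antichain Prover picks at $\bot$. Because Prover's partitioning choices at Player~$1$ vertices are invisible under projection, every history $h_V$ in $\mathcal{G}$ corresponds to a unique history $h$ in $\mathcal{G}'$ consistent with $\sigma_\prov$ (constructed inductively by having Challenger mimic the Player~$1$ moves of $h_V$); I define $\sigma_0(h_V)$ to be the $v$-component of $\sigma_\prov$'s move from the endpoint of $h$. The main obstacle is to show that every $p \in P$ is actually realised by a play $\rho_V$ consistent with $\sigma_0$ that is won by Player~$0$. For this, I construct a Challenger counter-strategy that, at each Challenger vertex $(v, P, (W_l, W_r))$, picks the side still containing $p$; an easy induction confirms $p \in W$ throughout, so $p \in lim_W(\rho)$ on the resulting play, which rules out~\eqref{cp_cond_2} and~\eqref{cp_cond_3} and therefore forces~\eqref{cp_cond_1}, yielding $\payoff{\rho_V} = p$ and $\won{\rho_V}=1$. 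Combining this with the case analysis that every play of $\Playsigmazero$ has payoff either in $P$ or strictly dominated by an element of $P$, I deduce $P = \paretoSet{\sigma_0}$ and that every Pareto-optimal play of $\sigma_0$ falls under~\eqref{cp_cond_1} or~\eqref{cp_cond_2} (it cannot fall under~\eqref{cp_cond_3}, since the dominating payoff is itself realised), hence is won.

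\textbf{Anticipated difficulty.} The delicate ingredient is the \emph{chasing} argument in ($\Leftarrow$): the fact that Challenger, by always following a chosen $p \in P$, can force Prover to isolate $p$ in $lim_W$ is what makes $P$ coincide with $\paretoSet{\sigma_0}$ and rules out the degenerate possibility that Prover wins "vacuously" by announcing a $P$ whose payoffs are never achieved. Everything else is bookkeeping about the projection $\rho \mapsto \rho_V$ and the three cases of $\Omega_\prov$.
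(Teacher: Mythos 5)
Your proposal is correct and follows essentially the same route as the paper's proof: the same construction of $\sigma_\prov$ from $\sigma_0$ (announcing $P=\paretoSet{\sigma_0}$ and splitting $W$ according to the witnesses extending each successor), and the same extraction of $\sigma_0$ from $\sigma_\prov$ via the one-to-one correspondence of histories. Your explicit ``chasing'' counter-strategy for Challenger is just a more spelled-out version of the paper's observation that, since $(W_l,W_r)$ is a partition, each $p\in P$ survives along a unique consistent play, forcing condition~(\ref{cp_cond_1}) on that play.
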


\begin{proof}[Proof of \autoref{thm:CP}]
    Let us first assume that Player~$0$ has a strategy $\sigma_0$ that is solution to the \problemAb{}{} in $\mathcal{G}$. Let $\paretoSet{\sigma_0}$ be its set of \paretoOptimal\ payoffs and let $\Wit{\sigma_0}$ be a set of witnesses. We construct the strategy $\sigma_{\prov{}}$ from $\sigma_0$ such that
    \begin{itemize}
        \item $\sigma_{\prov{}}(\bot) = (v_0, P, P)$ such that $P = \paretoSet{\sigma_0}$ (this vertex exists as $\paretoSet{\sigma_0}$ is an antichain),
        \item $\sigma_{\prov{}}(hm) = (v', P, \provWit)$ if $m = (v, P, \provWit)$ with $v \in V_0$ and $v' = \sigma_0(h_V v)$, 
        \item $\sigma_{\prov{}}(hm) = (v, P, (\provWit_l, \provWit_r))$ if $m = (v, P, \provWit)$ with $v \in V_1$ and for $i \in\{l, r\}$, $\provWit_i = \{ \payoff{\rho} \mid \rho \in \prefStrat{h_V v_i}\}$.
    \end{itemize}   
    It is clear that given a play $\rho$ in $G'$ consistent with $\sigma_{\prov}$, the play $\rho_V$ in $G$ is consistent with $\sigma_0$. Let us show that $\sigma_{\prov}$ is winning for $\prov{}$ from $\bot$ in $G'$. Consider a play $\rho$ in $G'$ consistent with $\sigma_\prov$. There are two possibilities. \emph{(i)} $\rho_V$ is a witness of $\Wit{\sigma_0}$ and by construction $lim_\provWit(\rho) = \{p\}$ with $p = \payoff{\rho_V}$; thus $\won{\rho_V} = 1$ as $\sigma_0$ is a solution and $\rho_V$ is a witness. \emph{(ii)} $\rho_V$ is not a witness and by construction $lim_\provWit(\rho) = \emptyset$; as $\sigma_0$ is a solution, then $p = \payoff{\rho_V}$ is bounded by some payoff of $\paretoSet{\sigma_0}$ and in case of equality $\won{\rho_V} = 1$. Therefore $\rho$ satisfies the objective $B_P$ of $\Omega_{\prov{}}$ since it satisfies condition (\ref{cp_cond_1}) in case \emph{(i)} and condition (\ref{cp_cond_2}) or (\ref{cp_cond_3}) in case \emph{(ii)}. 

    Let us now assume that $\prov$ has a winning strategy $\sigma_{\prov}$ from $\bot$ in $G'$. Let $P$ be the antichain of payoffs chosen from $\bot$ by this strategy. We construct the strategy $\sigma_0$ from $\sigma_{\prov}$ such that $\sigma_0(h_Vv) = v'$ given $\sigma_{\prov}(hm) = (v', P, \provWit)$ with $m = (v, P, \provWit)$ and $v \in V_0$. Notice that this definition makes sense since there is a unique history $hm$ ending with a vertex of $\prov{}$ associated with $h_Vv$ showing a one-to-one correspondence between those histories. 
    
    Let us show $\sigma_0$ is a solution to the \problemAb{} with $\paretoSet{\sigma_0}$ being the set $P$. 
    First notice that $P$ is not empty. Indeed let $\rho$ be a play consistent with $\sigma_{\prov}$. As $\rho$ belongs to $\Omega_{\prov{}}$ and in particular to $B_P$, one can check that $P \neq \emptyset$ by inspecting conditions~(\ref{cp_cond_1}) to (\ref{cp_cond_3}). Second notice that by definition of $E'$, if $((v, P, \provWit), (v, P, (\provWit_l, \provWit_r))) \in E'$ with $\provWit \neq \emptyset$, then either $\provWit_l$ or $\provWit_r$ is not empty. Therefore given any payoff $p \in P$, there is a unique play $\rho$ consistent with $\sigma_{\prov}$ such that $lim_\provWit(\rho) = \{p\}$. By construction of $\sigma_0$ and as $\sigma_{\prov}$ is winning, the play $\rho_V$ is consistent with $\sigma_0$, has payoff $p$, and is won by Player~$0$ (see (\ref{cp_cond_1})). 
    
    Let $\rho_V$ be a play consistent with $\sigma_0$ and $\rho$ be the corresponding play consistent with $\sigma_{\prov}$. It remains to consider (\ref{cp_cond_2}) and (\ref{cp_cond_3}). These conditions indicate that $\rho_V$ has a payoff equal to or strictly smaller than a payoff in $P$ and that in case of equality $\won{\rho_V} = 1$. This shows that $\paretoSet{\sigma_0} = P$ and that $\sigma_0$ is a solution to the \problemAb{}.
\end{proof}

\subsection{Fixed-Parameter Complexity of Reachability SP Games} 
\label{subsec:fptreach}

We now develop the proof of \autoref{thm:FPT-reach} which works by specializing the generic objective $\Omega_{\prov{}}$ to handle reachability \gamesAb{}. We extend the arena $G'$ of the \challengerProverAb{} game such that its vertices keep track of the objectives of $\mathcal{G}$ which are satisfied along a play. Given an extended payoff $(w,p) \in \{0,1\} \times \{0,1\}^{\nbrObjectives}$ and a vertex $v \in V$, we define the \emph{payoff update} $\update{w,p,v} = (w',p')$ such that
$$\begin{array}{lll}
w' = 1   &\iff & w = 1 \text{ or } v \in \target_0,\\
p'_i = 1 &\iff& p_i = 1 \text{ or } v \in \target_i, \quad \forall i \in \{1, \ldots, \nbrObjectives\}.
\end{array}$$
We obtain the extended arena $G^*$ as follows: \emph{(i)} its set of vertices is $V' \times \{0,1\} \times \{0,1\}^\nbrObjectives$, \emph{(ii)} its initial vertex is $ \bot^* = (\bot, 0, (0,\ldots,0))$, and \emph{(iii)} $((m, w, p),(m', w', p'))$ with $m' = (v', P, \provWit)$ or $m' = (v', P, (\provWit_l, \provWit_r))$ is an edge in $G^*$ if $(m, m') \in E'$ and $(w',p') = \update{w,p,v'}$. 

We define the zero-sum game $\mathcal{G^*} = (G^*, \Omega^*_\prov)$ in which the three abstract conditions (\ref{cp_cond_1}-\ref{cp_cond_3}) detailed previously are encoded into the following B\"uchi objective by using the $(w, p)$-component added to vertices. We define $\Omega^*_\prov = \Buchi{B^*}$ with
\begin{alignat}{2}
B^* = \big{\{} (v, P, \provWit, w, p) \in V^*_{\prov{}} ~\mid~
            &(W = \{p\} &&\land  w = 1) \ \lor \tag{\ref{cp_cond_1}'} \\
            &(W = \emptyset  &&\land  p \in P  \land w = 1) \ \lor \tag{\ref{cp_cond_2}'} \\
            &(W = \emptyset  &&\land  \exists p' \in  P,~ p < p') \big{\}}. \tag{\ref{cp_cond_3}'}
\end{alignat}

\begin{restatable}{proposition}{propFPTreach}  \label{prop:FPTreach}
Player~$0$ has a strategy $\sigma_0$ that is solution to the \problemAb{} in a reachability \gameAb{} $\mathcal{G}$ if and only if $\prov{}$ has a winning strategy $\sigma^*_{\prov}$ in $\mathcal{G^*}$.
\end{restatable}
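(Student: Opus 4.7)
The arena $G^*$ is obtained from $G'$ by tagging every vertex with a deterministic record of which targets $\target_0, \target_1, \ldots, \target_\nbrObjectives$ have already been visited along the history. Since this tag is a function of the history in $G'$, the projection that forgets the $(w,p)$-component induces a one-to-one correspondence between plays in $\mathcal{G}^*$ and plays in $\mathcal{G}'$, and accordingly between strategies of $\prov$ in $\mathcal{G}^*$ and of $\prov$ in $\mathcal{G}'$. By \autoref{thm:CP}, it then suffices to show that under this correspondence a strategy $\sigma^*_\prov$ wins $\mathcal{G}^*$ if and only if the associated strategy $\sigma_\prov$ wins $\mathcal{G}'$, i.e., every consistent play $\rho$ belongs to $\Omega_\prov$.

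Fix a play $\rho^*$ in $\mathcal{G}^*$ with projections $\rho$ in $G'$ and $\rho_V$ in $G$. I first establish a stabilization property of the components along $\rho^*$: the $\provWit$-component is non-increasing, being preserved at $\prov$-moves and replaced by a subset ($\provWit_l$ or $\provWit_r$) at $\chal$-moves; the $(w,p)$-component is pointwise non-decreasing, since each coordinate can only switch from $0$ to $1$ once a relevant target is visited, and the $\update{}$ operator is idempotent. Because all values lie in finite sets, each component stabilizes. Crucially, prefix-dependence of reachability guarantees that the stabilized values coincide exactly with $lim_\provWit(\rho)$ on the $\provWit$-side, and with $\won{\rho_V}$ and $\payoff{\rho_V}$ on the $(w,p)$-side. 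The antichain $P$ is moreover fixed throughout the play by the first move of $\prov$ from $\bot^*$.

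Membership of a $\prov$-vertex in $B^*$ depends only on the tuple $(P, \provWit, w, p)$ and not on $v$. Therefore, once these four values stabilize along $\rho^*$, all subsequent $\prov$-vertices are either uniformly inside $B^*$ or uniformly outside it. Since $\prov$-vertices occur infinitely often in $\rho^*$ by the structure of $G^*$ (between any two $\chal$-vertices lies a $\prov$-vertex), this reduces $\rho^* \in \Buchi{B^*}$ to the single condition that the stabilized tuple satisfies one of the three clauses (1'), (2'), (3') defining $B^*$. By the identification of stabilized values with $lim_\provWit(\rho)$, $\won{\rho_V}$ and $\payoff{\rho_V}$, this is precisely equivalent to $\rho$ satisfying one of the conditions (1), (2), (3) defining $B_P$, i.e., $\rho \in \Omega_\prov$. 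Hence $\sigma^*_\prov$ wins $\mathcal{G}^*$ iff $\sigma_\prov$ wins $\mathcal{G}'$, and \autoref{thm:CP} yields the proposition. The main effort is the careful bookkeeping of the three projections $\rho^* \mapsto \rho \mapsto \rho_V$ and the verification, via prefix-dependence of reachability, that the step-by-step $\update{}$ operator faithfully encodes $\payoff{\rho_V}$ and $\won{\rho_V}$ in the limit; no deeper obstacle arises, since $B^*$ has been tailored to mirror the abstract conditions of $B_P$ verbatim.
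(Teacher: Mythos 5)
Your proposal is correct and follows essentially the same route as the paper: invoke \autoref{thm:CP}, use the one-to-one correspondence between plays of $G'$ and $G^*$ induced by the deterministic $(w,p)$-tag, observe that the $P$-, $\provWit$- and $(w,p)$-components stabilize with limits equal to $P$, $lim_\provWit(\rho)$, $\won{\rho_V}$ and $\payoff{\rho_V}$ (the last two precisely because reachability is prefix-dependent), and conclude that $\Buchi{B^*}$ holds iff the corresponding play satisfies one of conditions (\ref{cp_cond_1})--(\ref{cp_cond_3}) of $B_P$. Your added remark that $\prov$-vertices recur infinitely often (so that stabilized membership in $B^*$ really does yield infinitely many visits) is a detail the paper leaves implicit, but it is the same argument.
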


The proof of this proposition is a consequence of \autoref{thm:CP}. Using the one-to-one correspondence between plays in $G$ and plays in $G^*$ and the fact that $\mathcal{G}$ is a reachability \gameAb{}, the $(w,p)$-component in vertices of $G^*$ allows us to easily retrieve the extended payoff of a play in $G$. Indeed, in a play $\rho \in \Plays_{G^*}$, given the construction of $G^*$ and the payoff update function, it holds that from some point on the $W$- and $(w,p)$-components are constant. Therefore it holds that $w = \won{\rho_V}$, $p = \payoff{\rho_V}$ and $\provWit = lim_\provWit(\rho)$ for that play $\rho$. Moreover the {$P$-component} is constant along a play in $G^*$. 
It is direct to see that the plays $\rho$ in $G^*$ which visit infinitely often the set $B^*$, and therefore satisfy the B\"uchi objective $\Omega^*_\prov = \Buchi{B^*}$, satisfy one of the three conditions (\ref{cp_cond_1}-\ref{cp_cond_3}) stated in \autoref{subsec:cp}. The converse is also true. 

We now describe a \FPT{} algorithm for deciding the existence of a solution to the \problemAb{} in a reachability \gameAb{}, thus proving \autoref{thm:FPT-reach}.

\begin{proof}[Proof of \autoref{thm:FPT-reach}.]
We describe the following \FPT{} algorithm (for parameter~$\nbrObjectives$) for deciding the existence of a solution to the \problemAb{} in a reachability \gameAb{} $\mathcal{G}$ by using Proposition~\ref{prop:FPTreach}. First, we construct the zero-sum game $\mathcal{G^*}$. Its number $n$ of vertices is upper-bounded by $1 + |V| \cdot 2^{2^{\nbrObjectives+1}} \cdot 2^{\nbrObjectives+1} + |V| \cdot 2^{3 \cdot 2^t} \cdot 2^{\nbrObjectives+1}$. Indeed, except the initial vertex, vertices are of the form either $(v, P, \provWit,w,p)$ or $(v, P, (\provWit_l, \provWit_r),w,p)$ such that $P$, $\provWit$, $\provWit_l$ and $\provWit_r$ are antichains of payoffs in $\{0,1\}^\nbrObjectives$, and $(w,p)$ is an extended payoff. The construction of $\mathcal{G^*}$ is thus in \FPT{} for parameter~$\nbrObjectives$. Second, By Proposition~\ref{prop:FPTreach}, deciding whether there exists a solution to the \problemAb{} in $\mathcal{G}$ amounts to deciding if $\prov$ has a winning strategy from $\bot^*$ in $\mathcal{G^*}$. Since the objective $\Omega^*_\prov$ of $\prov$ in $\mathcal{G^*}$ is a B\"uchi objective, this game can be solved in $\mathcal{O}(n^2)$~\cite{ChatterjeeH14}. It follows that $\mathcal{G^*}$ can be solved in \FPT{} for parameter~$\nbrObjectives$.
\end{proof}

\subsection{Fixed-Parameter Complexity of Parity SP Games} 
\label{subsec:fptpar}

We now turn to parity \gamesAb{} and explain why solving the \problemAb{} in these games is in \FPT{}, again by reduction to the \challengerProverAb{} game. To this end, we first  recall the notion of Boolean B\"uchi games.

Boolean B\"uchi games are zero-sum games which we use in our reduction to the \challengerProverAb{} for parity \gameAb{}. Given $m$ sets $\target_1, \dots, \target_m$ such that $\target_i \subseteq V$, $i \in \{1, \dots, m\}$ and $\phi$ a Boolean formula over the set of variables $X = \{x_1, \dots, x_m\}$, the \emph{Boolean B\"uchi} objective $\BooleanBuchi{\phi, \target_1, \dots, \target_ m} = \{\rho \in \Plays \mid \rho \text{ satisfies } (\phi, \target_1, \dots, \target_ m) \}$ is the set of plays whose valuation of the variables in $X$ satisfy formula $\phi$. Given a play $\rho$, its valuation is such that $x_i = 1$ if and only if $\infOcc{\rho} \cap \target_i \neq  \emptyset$ and $x_i = 0$ otherwise. That is, a play satisfies the objective if the Boolean formula describing sets to be visited infinitely often by a play is satisfied. We denote by $|\phi|$ the size of $\phi$ as equal to the number of conjunctions and disjunctions in $\phi$. The following theorem on the fixed-parameter complexity of Boolean B\"uchi games is proved in~\cite{BruyereHR18}.

\begin{theorem}
\label{bb_complexity}
    Solving Boolean B\"uchi games is in \FPT, with an algorithm in $\mathcal{O}(2^M \cdot |\phi|+ (M^M \cdot |V|)^5)$ time with $M = 2^m$ such that $m$ is the number of variables and $|\phi|$ is the size of $\phi$ in the Boolean B\"uchi objective \cite{BruyereHR18}.
\end{theorem}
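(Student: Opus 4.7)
The plan is to reduce any Boolean B\"uchi game to a parity game played on a product arena whose memory component faithfully encodes the condition $(\phi, \target_1, \dots, \target_m)$, and then to invoke a standard parity-game solver on the resulting arena. The key observation is that the winning condition only depends on which of the sets $\target_1, \dots, \target_m$ are visited infinitely often, which gives us a natural notion of \emph{color} to aggregate vertices by.

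First, I would define for each vertex $v \in V$ its color $c(v) = \{ i \in \{1,\dots,m\} \mid v \in \target_i \}$, so that colors range over subsets of $\{1,\dots,m\}$ and there are at most $M = 2^m$ of them. Since, for any play $\rho$, the valuation induced by $\infOcc{\rho}$ depends only on $\{c(v) \mid v \in \infOcc{\rho}\}$, the Boolean B\"uchi objective is equivalent to a Muller condition $\mathcal{F}$ over colors: a set of colors $X$ lies in $\mathcal{F}$ iff the valuation $x_i = 1 \iff \exists C \in X,\ i \in C$ satisfies $\phi$. Building $\mathcal{F}$ amounts to evaluating $\phi$ on each of the $2^{M}$ possible subsets of colors, yielding the first additive term $\mathcal{O}(2^M \cdot |\phi|)$ in the claimed complexity.

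Next, I would reduce this Muller game to a parity game using an explicit memory structure tailored to $\mathcal{F}$, of the Zielonka-tree or latest-appearance-record style over the color alphabet of size at most $M$. Such a construction provides a finite memory of size bounded by $M^{M}$ together with a priority function of polynomially many priorities in $M$, so that a play in the original arena satisfies $\mathcal{F}$ iff the synchronized play in the product arena $V \times \mathsf{Mem}$ satisfies the induced parity objective. The correctness of this step follows from standard results relating Muller conditions to parity via Zielonka-tree memory, which I would invoke rather than re-derive. The product arena has $|V| \cdot M^M$ vertices, and solving a parity game on it (e.g.\ by the recursive Zielonka algorithm, or any polynomial-in-arena-size algorithm for the fixed priority count that results) gives running time $\mathcal{O}((M^M \cdot |V|)^5)$, matching the second term. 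Finally, since both $2^M$ and $M^M$ depend only on $m$, combining the two phases yields an overall bound of the form $f(m) \cdot \mathrm{poly}(|V|,|\phi|)$, establishing \FPT{} membership for parameter $m$.

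The main obstacle I would expect is calibrating the memory structure so that both its size is at most $M^{M}$ and the number of induced priorities is small enough to make the parity-game solving phase fit within $(M^M \cdot |V|)^5$. The Zielonka-tree approach over the color alphabet handles this uniformly: its depth is bounded by $|C| \le M$ and the number of leaves by $M^M$, so both the memory size and the priority count are controlled. Once this combinatorial bookkeeping is settled, the remaining ingredients, namely correctness of the color abstraction and soundness of the memory-based parity reduction, are routine adaptations of classical results on $\omega$-regular games.
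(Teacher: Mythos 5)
The paper does not actually prove this statement: it is imported verbatim from the cited reference, and your overall strategy (color abstraction to a Muller condition over the at most $M=2^m$ colors, evaluation of $\phi$ on all $2^M$ color sets for the first additive term, then a latest-appearance-record/Zielonka-tree memory of size at most $M^M$ reducing to a parity game on a product arena of size $M^M\cdot|V|$) is indeed the route taken there. The color abstraction, the construction of $\mathcal{F}$, and the memory-based reduction to parity are all sound as you describe them.

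There is, however, one genuine gap: your justification of the second term $(M^M\cdot|V|)^5$. You propose to solve the product parity game ``by the recursive Zielonka algorithm, or any polynomial-in-arena-size algorithm for the fixed priority count that results.'' The LAR reduction produces $d=\Theta(M)$ priorities, and Zielonka's recursive algorithm runs in time roughly $N^{\mathcal{O}(d)}$ on an arena with $N=M^M\cdot|V|$ vertices; this is $|V|^{\mathcal{O}(M)}$, whose exponent in $|V|$ depends on the parameter $m$. That gives only an XP bound, not \FPT{}, and certainly not the fixed exponent $5$ claimed in the statement. The point you are missing is that the blow-up of the arena works in your favour: since $N\geq M^M$, the number of priorities $d=\mathcal{O}(M)$ satisfies $d\leq\log N$ (for $M$ large enough), and the quasi-polynomial-time parity algorithms (Calude et al.) run in time $\mathcal{O}(N^5)$ precisely in the regime where the number of priorities is at most logarithmic in the number of vertices. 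Substituting this solver for Zielonka's algorithm closes the gap and yields exactly the bound $\mathcal{O}\bigl(2^M\cdot|\phi|+(M^M\cdot|V|)^5\bigr)$; without it, your argument does not establish \FPT{} membership along the route you describe.
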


Let $\mathcal{G} = (G, \Omega_0, \dots, \Omega_\nbrObjectives)$ be a parity \gameAb{} with parity objectives such that $\Omega_i = \parity{c_i}$ for a priority function $c_i : V \rightarrow \N$. Let $G'$ be the arena of the \challengerProverAb{} game. In the following, we construct a Boolean B\"uchi objective $\Omega'_\prov$ for $\prov$ such that the following proposition holds.

\begin{proposition} \label{prop:FPTparity}
Player~$0$ has a strategy $\sigma_0$ that is solution to the \problemAb{} in $\mathcal{G}$ if and only if $\prov{}$ has a winning strategy $\sigma_{\prov}$ in $\mathcal{G^*} = (G', \Omega'_\prov)$.
\end{proposition}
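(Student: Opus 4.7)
The plan is to leverage \autoref{thm:CP}: that theorem already reduces the existence of a solution to the \problemAb{} in $\mathcal{G}$ to the existence of a winning strategy for $\prov$ in the C-P game $\mathcal{G'}$ under the generic objective $\Omega_{\prov{}}$ given by conditions (\ref{cp_cond_1})--(\ref{cp_cond_3}). So all that remains is to exhibit a Boolean B\"uchi objective $\Omega'_\prov$ over $G'$ whose set of winning plays coincides with $\Omega_{\prov{}}$. The guiding intuition is that every primitive ingredient of (\ref{cp_cond_1})--(\ref{cp_cond_3}) can be re-expressed as ``some target set is (not) visited infinitely often'', after which the three conditions become a Boolean combination.

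First I would introduce the relevant target sets. For each $i \in \{0,1,\ldots,\nbrObjectives\}$ and each priority $j$ in the range of $c_i$, let $T_{i,j} \subseteq V'$ be the set of vertices in $G'$ whose underlying $v$-component satisfies $c_i(v) = j$. Then $\won{\rho_V}$ is captured by the standard Boolean formula $\phi_0$ over variables $x_{0,j} = \mathbf{1}[\infOcc{\rho} \cap T_{0,j} \neq \emptyset]$ that asserts that the least $j$ with $x_{0,j} = 1$ is even; likewise each bit of $\payoff{\rho_V}$ is captured by a formula $\phi_i$ over variables $x_{i,j}$, $i \ge 1$. Note that $|\phi_i|$ is polynomial in the maximal priority of $c_i$, which is precisely the second FPT parameter.

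Next I would deal with the $\provWit$-component. The key structural fact is that along any play in $G'$ the $\provWit$-component is monotonically non-increasing (by construction of $E'$, which only partitions $\provWit$ when a Challenger vertex is visited). Hence the limit $lim_\provWit(\rho)$ is entirely determined by visited-infinitely-often information: for each antichain $P \subseteq \{0,1\}^\nbrObjectives$ and each $\provWit^* \subseteq P$, introduce the target $T_{P,\provWit^*}$ consisting of vertices $(v, P, \provWit^*)$ in $V'_\prov$; then $lim_\provWit(\rho) = \provWit^*$ iff $T_{P,\provWit^*}$ is visited infinitely often and no $T_{P,\provWit''}$ with $\provWit'' \subsetneq \provWit^*$ is. Since the $P$-component is itself constant along any play after the initial move, I can index plays by their antichain $P$ via an analogous visited-infinitely-often variable.

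Finally I would define $\Omega'_\prov$ as a Boolean B\"uchi objective over the sets $T_{i,j}$ and $T_{P,\provWit^*}$, taking $\phi$ to be the disjunction, over antichains $P$ and $\provWit^* \subseteq P$, of the Boolean encoding of (\ref{cp_cond_1})--(\ref{cp_cond_3}) specialised to the case $lim_\provWit(\rho) = \provWit^*$, and using $\phi_0,\ldots,\phi_\nbrObjectives$ to compare $\payoff{\rho_V}$ to each $p \in P$. Correctness follows directly from the two characterisations above, and \autoref{prop:FPTparity} then follows from \autoref{thm:CP}. The main obstacle I expect is bookkeeping rather than conceptual: one has to verify (a) that the monotonicity of $\provWit$ really yields a visited-infinitely-often characterisation uniformly across $\prov$- and $\chal$-vertices (the latter carry a pair $(\provWit_l,\provWit_r)$ and must be treated consistently), and (b) that the resulting $\phi$ remains well-controlled in size, so that the Boolean B\"uchi algorithm of \autoref{bb_complexity} ultimately yields the \FPT{} bound required for \autoref{thm:FPT-parity}.
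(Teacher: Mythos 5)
Your proposal is correct and follows essentially the same route as the paper's proof: both rely on \autoref{thm:CP} and then re-encode conditions (\ref{cp_cond_1})--(\ref{cp_cond_3}) as a Boolean B\"uchi objective built from infinitely-often target sets for the priorities of each $c_i$, for the (constant) $P$-component, and for the (monotonically stabilizing) $\provWit$-component of Prover vertices. The only cosmetic differences are that the paper introduces limit-$\provWit$ targets only for singletons $\{p\}$ and for $\emptyset$ rather than for all $\provWit^* \subseteq P$ (your extra negative clause for smaller sets is redundant given stabilization), and it checks explicitly that the number of variables and $|\phi|$ depend only on $\nbrObjectives$ and the maximal priorities, which is exactly your point (b).
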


\begin{proof}[Proof of \autoref{prop:FPTparity}.]
let $G' = (V', V'_{\prov{}}, V'_{\chal{}}, E', \bot)$ be the arena of the \challengerProverAb{} game presented in Section~\ref{subsec:cp}. Recall that the objective $\Omega_\prov$ of this game is the union of the sets $B_P$ over all antichains $P$ such that $B_P$ is the disjunction of conditions (\ref{cp_cond_1}-\ref{cp_cond_3}). The idea of the proof is to translate this objective into a Boolean B\"uchi objective $\Omega'_\prov$. We will proceed step by step. The required Boolean formula for defining $\Omega'_\prov$ is equal to
$$ \phi  = \bigvee \limits_{P} \big{(} x_{P} \land (\mathit{cond^{P}_1} \lor \mathit{cond^{P}_2} \lor \mathit{cond^{P}_3}) \big{)}$$
such that the main disjunction is over all antichains $P$. The variable $x_{P}$ corresponds to the set $T_{P} = \{ (v, P, \provWit) \in V'_{\prov}\}$. The valuation of $x_P$ is true for a given play if and only if the set $T_{P}$ is visited infinitely often and therefore $P$ is the antichain chosen by $\prov$ in $G'$. Since the $P$-component is constant along a play, only one $x_P$ is valued as true for a given play. Let us now detail each subformula $\mathit{cond^{P}_i}$ that is the translation of condition $(i)$ for $i \in \{1,2,3\}$.

Let us begin with the encoding of payoffs. Let $d_0, \dots, d_\nbrObjectives$ be such that $d_i$ is the maximal even priority appearing in $G$ according to priority function $c_i$ for objective $\Omega_i$ with $i \in \{0, \dots, \nbrObjectives\}$. 

First, we show that a parity objective $\parity{c_i}$ from $\mathcal{G}$ can be encoded as a Boolean B\"uchi objective. Given the parity objective $\parity{c_i}$, we construct the Boolean formula $\mathit{parity_i}$ over variables $\{x^i_0, x^i_1, \dots, x^i_{d_{i}}\}$ such that $$\mathit{parity_i} = x^i_0 \lor (x^i_2 \land \neg x^i_1) \lor \dots \lor (x^i_{d_i} \land \neg x^i_{d_i-1} \land \neg x^i_{d_i - 3} \land \dots \land \neg x^i_1)$$ and for $j \in \{0, \dots, d_i\}$, the set corresponding to variable $x^i_j$ is $T^i_j = \{ (v, P, \provWit) \in V'_{\prov} \mid c_i(v) = j \}$. It is easy to show that the parity objective $\parity{c_i}$ is satisfied if and only if the Boolean B\"uchi objective $\BooleanBuchi{parity_i, T^i_0, \dots, T^i_{d_i}}$ is satisfied.
    
Second, given a payoff $p = (p_1, \dots, p_\nbrObjectives)$ in $\mathcal{G}$, we consider the Boolean formula
$$\mathit{payoff_p} = C_1 \land \dots \land C_\nbrObjectives$$ 
such that $C_i = \mathit{parity_i}$ if $p_i = 1$ and $C_i = \neg \mathit{parity_i}$ otherwise. 
Clearly the projection $\rho_V$ of play $\rho$ realizes payoff $p$ if and only if $\rho$ satisfies the Boolean B\"uchi objective $\BooleanBuchi{\mathit{payoff_p}, T^1_0, \dots, T^1_{d_1},\dots, T^\nbrObjectives_0, \dots, T^\nbrObjectives_{d_\nbrObjectives}}$.

We now fix some antichain $P$. Let us detail subformula $\mathit{cond^{P}_1}$ encoding condition (\ref{cp_cond_1}). For a payoff $p$, we define formula $$single_p = x_p \land \mathit{payoff_p} \land \mathit{parity_0}$$ such that $x_p = \{ (v, P, \provWit) \in V'_{\prov} \mid W = \{p\} \}$. 
Since at some point during a play, the $\provWit$-component stabilizes and since a play $\rho$ which satisfies $\mathit{payoff_p} \land \mathit{parity_0}$ is such that $\payoff{\rho_V} = p$ and $\won{\rho_V} = 1$, it holds that satisfying this formula corresponds exactly to satisfying condition (\ref{cp_cond_1}) for some $p$. Formula $\mathit{cond^{P}_1}$ is thus the disjunction 
$$\mathit{cond^{P}_1} = \bigvee \limits_{p \in P} \mathit{single_p}.$$ 
 
Let us shift to subformula $\mathit{cond^{P}_2}$ encoding condition (\ref{cp_cond_2}).
Using similar arguments, we define for payoff $p$ formula $$empty_p = x_\emptyset \land \mathit{payoff_p} \land \mathit{parity_0}$$ such that $x_\emptyset = \{ (v, P, \emptyset) \in V'_{\prov{}}\}$. It corresponds exactly to the set of plays $\rho$ such that $lim_\provWit(\rho) = \emptyset$, $\payoff{\rho_V} = p$ and $\won{\rho_V} = 1$. Therefore
$$\mathit{cond^{P}_2} = \bigvee \limits_{p \in P} \mathit{empty_p}.$$

Finally, we define subformula $\mathit{cond^{P}_3}$ encoding condition (\ref{cp_cond_3}). Let $\overline{P}$ be the set containing every payoff $p'$ such that $\exists p \in P, ~p' < p$. We define
$$\mathit{cond^{P}_3} = \bigvee \limits_{p' \in \overline{P}} \mathit{smaller_{p'}} $$
with $smaller_{p'}$ being the formula $x_\emptyset \land \mathit{payoff_{p'}}$.

Notice that the Boolean formula $\phi$ constructed in this proof has a number $m$ of variables and a size $|\phi|$ that only depend on $\nbrObjectives$ and $d_i$, $i \in \{0,\ldots,\nbrObjectives\}$.
\end{proof}

This previous construction can be used to provide the proof of \autoref{thm:FPT-parity}.

\begin{proof}[Proof of \autoref{thm:FPT-parity}.]
We describe the following \FPT{} algorithm for deciding the existence of a solution to the \problemAb{} in a parity \gameAb{} $\mathcal{G}$ by using Proposition~\ref{prop:FPTparity}. First, we construct the zero-sum game $\mathcal{G^*}$ of Proposition~\ref{prop:FPTparity}. Its number $n$ of vertices is upper-bounded by $1 + |V| \cdot 2^{2^{\nbrObjectives+1}} + |V| \cdot 2^{3 \cdot 2^{\nbrObjectives}}$ which is in $\mathcal{O}(|V| \cdot f(\nbrObjectives))$ with $f$ a computable function which only depends on $\nbrObjectives$. Moreover the number $m$ of variables and the size $|\phi|$ of the Boolean formula $\phi$ defining the Boolean B\"uchi objective of $\mathcal{G^*}$ depend only on parameters $\nbrObjectives$ and $d_i$ for $i \in \{0, \dots, \nbrObjectives\}$. Therefore the construction of $\mathcal{G^*}$ is in \FPT{} for these parameters. Deciding whether there exists a solution to the \problemAb{} in $\mathcal{G}$ amounts to deciding if $\prov$ has a winning strategy from $\bot$ in $\mathcal{G^*}$. By \autoref{bb_complexity}, the latter Boolean B\"uchi game can be solved with an algorithm in $\mathcal{O}(2^M \cdot |\phi|+ (M^M \cdot n)^5)$ time with $M = 2^m$. It follows that $\mathcal{G^*}$ can be solved in $\mathcal{O}(2^M \cdot |\phi|+ (M^M \cdot |V| \cdot f(\nbrObjectives))^5)$ which is in \FPT{} for the announced parameters.
\end{proof}

\section{Complexity Class of the SPS Problem}
\label{sec:nexptimecomplete}

In this section, we study the complexity class of the \problemAb{} and prove its \nexptime{}-completeness for both reachability and parity \gamesAb{}. 

\subsection{\textsf{NEXPTIME}-Membership}
\label{subsec:nexptime_member}

We first show the  membership to \nexptime{} of the \problemAb{} by providing a nondeterministic algorithm with time exponential in the size of the game $\mathcal{G}$. By \emph{size}, we mean the number $|V|$ of its vertices and the number $\nbrObjectives$ of objectives of Player~$1$. Notice that the time complexity of the \FPT{} algorithms obtained in the previous section is too high, preventing us from directly using the \challengerProverAb{} game to show a tight membership result. Conversely, the nondeterministic algorithm provided in this section is not \FPT{} as it is exponential in $|V|$.

\begin{restatable}{theorem}{thmnexptime}  \label{thm:nexptime}
The \problemAb{} is in \nexptime{} for reachability and parity \gamesAb{}. 
\end{restatable}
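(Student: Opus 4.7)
The plan is to show that positive instances of the \problemAb{} admit a succinct certificate of size exponential in $|V|+\nbrObjectives$, which can then be guessed nondeterministically and verified in deterministic exponential time. Attempting to reuse the \challengerProverAb{} game of \autoref{subsec:cp} directly would be too expensive, since its arena has doubly-exponential size; the certificate I will use lives in the original arena $G$ and sidesteps this blow-up.

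Structurally, a certificate consists of three pieces of data: \emph{(i)} an antichain $P \subseteq \{0,1\}^{\nbrObjectives}$ of at most $2^{\nbrObjectives}$ candidate Pareto-optimal payoffs; \emph{(ii)} for every $p \in P$ an associated witness $\rho_p$, chosen as an ultimately periodic play (a lasso) whose stem and loop have length polynomial in $|V|$; and \emph{(iii)} a finite-memory Player~$0$ strategy $\sigma^{\mathsf{pun}}$, defined on the positions that lie outside the tree formed by the witnesses, which prevents Player~$1$ from realising any payoff not strictly dominated by some element of $P$. I would argue that any solution $\sigma_0$ to the \problemAb{} can be converted into a solution of this shape: the existence of lasso-shaped witnesses follows by a standard pumping argument on plays in $\Playsigmazero$ together with the fact that a reachability/parity payoff of a lasso is determined by the vertices appearing on its loop, while the existence of a finite-memory punishment strategy of size at most exponential in $|V|+\nbrObjectives$ follows from finite-memory determinacy of Boolean combinations of reachability or parity objectives, which is exactly the condition Player~$0$ needs to enforce after a deviation (namely, to avoid every $q \in \{0,1\}^{\nbrObjectives}$ with $q \not< p$ for all $p \in P$).

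Given the small-certificate lemma, the nondeterministic exponential-time algorithm proceeds as follows. Guess $P$, the witnesses $\rho_p$, and a Moore machine of exponential size representing $\sigma^{\mathsf{pun}}$; the total bit-length is exponential in $|V|+\nbrObjectives$. Then verify in deterministic exponential time: \emph{(a)} each $\rho_p$ is consistent with $G$, satisfies $\payoff{\rho_p}=p$ and $\won{\rho_p}=1$, which for parity amounts to inspecting the priorities on the loop according to each $c_i$ and to $c_0$; \emph{(b)} the witnesses together with $\sigma^{\mathsf{pun}}$ define a coherent Player~$0$ strategy $\sigma_0$ (no conflicting moves at shared Player~$0$ positions); \emph{(c)} on the product of $G$ with the memory of $\sigma^{\mathsf{pun}}$, which has exponential size, every play reachable after Player~$1$ leaves the witness tree has a payoff strictly dominated by some element of $P$. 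Condition \emph{(c)} is a universal $\omega$-regular verification problem on an exponential-size graph, solvable in time polynomial in that graph's size and hence in exponential time in $|V|+\nbrObjectives$.

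The main obstacle is the small-certificate lemma, and in particular controlling the memory of the punishment strategy in \emph{(iii)}. For reachability \gamesAb{} this is routine because ``avoid payoff $q$'' is a safety objective, and their conjunction is positionally determined on the product of $G$ with the subset of already-satisfied objectives, of size $|V| \cdot 2^{\nbrObjectives}$. For parity \gamesAb{} the analysis is more delicate, since the condition becomes a Boolean combination of parity objectives; I would appeal to classical results on generalised parity games providing Player~$0$ strategies whose memory is bounded by a function of $\nbrObjectives$ and the maximal priority, which is still exponential in the input and compatible with the targeted \nexptime{} bound. Putting everything together yields a nondeterministic algorithm running in time $2^{\mathrm{poly}(|V|+\nbrObjectives)}$, as required.
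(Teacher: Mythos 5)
Your overall architecture — show that positive instances admit an exponential-size finite-memory solution, guess it, and verify in deterministic exponential time on the product graph — is the same as the paper's, and your verification step (a universal check over at most $2^{\nbrObjectives}$ payoff classes, each reducible to emptiness of an intersection of parity objectives) is sound. However, two steps in your small-certificate lemma have genuine gaps. First, your punishment condition is too strong. You require that after a deviation Player~$0$ forces a payoff \emph{strictly dominated} by some element of $P$ (your item \emph{(iii)} and check \emph{(c)}). The correct requirement, and the one the paper proves achievable (Lemmas~\ref{lem:parity-case} and~\ref{lem:reach-case}), is the disjunction $\ObjPlayer{0} \cup \dominatedPlays{\sigma_0}$: after a deviation the play must either satisfy Player~$0$'s objective or be strictly dominated. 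Player~$1$ may well be able to deviate and force a payoff equal to, or incomparable with, elements of $P$; the strategy $\sigma_0$ is still a solution provided all such plays are won by Player~$0$. With your stricter condition, positive instances need not admit any certificate, so completeness of your nondeterministic algorithm fails.

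Second, the ``standard pumping argument'' for the witnesses hides the technical core of the proof. The witnesses are not independent lassos: they must all be consistent with one strategy of Player~$0$, so they form a tree with up to $|\paretoSet{\sigma_0}|-1$ branching points along a single branch, and (for reachability) the accumulated extended payoff changes along each branch. You may only remove cycles that lie strictly between two consecutive ``events'' (a branching point or a change of accumulated payoff), since otherwise you either change the witness's payoff or destroy the tree structure; moreover, the compressed tree has new deviation points, and you must argue that the punishment strategies still apply there, which requires that compression preserve the extended payoff of the history at every surviving vertex. This is exactly what the paper's region/section decomposition delivers, and it also shows that a compacted witness has length exponential in $\nbrObjectives$ (roughly $(\nbrObjectives+2)\cdot|\Wit{\sigma_0}|$ sections, each of polynomial length), not ``polynomial in $|V|$'' as you claim — harmless for the final $2^{\mathrm{poly}}$ bound, but symptomatic of the missing analysis. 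Your appeal to finite-memory determinacy for the punishment objective is in the right spirit (the paper uses explicit Muller games for parity and a reachability--safety disjunction on an extended arena for reachability), but note that for reachability the punishment strategy must additionally be parametrised by the extended payoff $(w,p)$ of the history at the deviation, which again only works if compression preserves that data.
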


We show that the \problemAb{} is in \nexptime{} by proving that if Player~$0$ has a strategy which is a solution to the problem, then he has one which is finite-memory with at most an exponential number of memory states\footnote{Recall that to have a solution to the \problemAb{}, memory is sometimes necessary as shown in Example~\ref{ex:example}.}.
This yields a \nexptime{} algorithm in which we nondeterministically guess such a strategy and check in exponential time that it is indeed a solution to the problem. 

\begin{proposition} \label{prop:tildesigma}
Let $\mathcal{G}$ be a reachability \gameAb{} or a parity \gameAb{}. Let $\sigma_0$ be a solution the the \problemAb{}. Then there exists another solution $\tilde{\sigma}_0$ that is finite-memory and has a memory size exponential in the size of $\mathcal{G}$.
\end{proposition}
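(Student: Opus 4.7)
The plan is to invoke the Challenger-Prover reduction from Section~\ref{subsec:cp}. By \autoref{thm:CP} and \autoref{prop:FPTreach} (resp. \autoref{prop:FPTparity}), since $\sigma_0$ is a solution to the \problemAb{} in $\mathcal{G}$, the Prover has a winning strategy $\sigma^*_{\prov}$ in the associated zero-sum game $\mathcal{G}^*$, whose objective is a B\"uchi condition in the reachability case and a Boolean B\"uchi condition in the parity case. I would then argue that $\sigma^*_{\prov}$ can be chosen of small memory and translate it back into a finite-memory strategy $\tilde{\sigma}_0$ for Player~$0$.

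First I would reduce the memory of $\sigma^*_{\prov}$. In the reachability case, $\mathcal{G}^*$ is a B\"uchi game, and by standard memoryless determinacy for B\"uchi objectives (see e.g.~\cite{2001automata}), Prover has a \emph{memoryless} winning strategy $\sigma^*_{\prov}$. In the parity case, $\mathcal{G}^*$ is a Boolean B\"uchi game, and by \autoref{bb_complexity} (or more precisely the construction in~\cite{BruyereHR18} underlying it), Prover admits a finite-memory winning strategy whose memory size is bounded by a function that is exponential in $|V|$, $t$, and the priorities $d_i$ --- in particular exponential in the size of~$\mathcal{G}$.

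Next I would define $\tilde{\sigma}_0$ via a Moore machine $\mathcal{M}$ whose memory states encode the ``extra'' components carried by vertices of~$\mathcal{G}^*$ (in addition to the current vertex $v \in V$): the antichain $P$ fixed at the start by $\prov$, the current set $\provWit \subseteq P$, and --- in the reachability case --- the extended payoff $(w,p)$; in the parity case, we additionally compose in the memory of the Boolean B\"uchi strategy. The transitions of $\mathcal{M}$ mirror the deterministic update of these components along edges of $G^*$, and at a history ending in $v \in V_0$, $\tilde{\sigma}_0$ plays the $V$-projection of whatever $\sigma^*_{\prov}$ plays in the corresponding vertex of $\mathcal{G}^*$. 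A counting argument then bounds the number of states of $\mathcal{M}$ by $2^{|P|} \cdot 2^{t+1}$ in the reachability case (times the Boolean B\"uchi memory in the parity case), which is exponential in the size of $\mathcal{G}$ because $P$ is an antichain of payoffs in $\{0,1\}^t$. Correctness of $\tilde{\sigma}_0$ as a solution to the \problemAb{} follows by replaying the backward direction of the proof of \autoref{thm:CP}: every play $\rho$ consistent with $\tilde{\sigma}_0$ lifts uniquely to a play $\rho^*$ consistent with $\sigma^*_{\prov}$, and the winning condition for $\prov$ on $\rho^*$ implies that $\rho$ either matches a payoff of $P = \paretoSet{\tilde{\sigma}_0}$ (in which case Player~$0$ wins) or is dominated by some payoff in $P$.

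The main obstacle will be the parity case: unlike B\"uchi games, Boolean B\"uchi games are not memoryless-determined, so the translation produces a Moore machine obtained as the \emph{product} of the ``payoff-tracking'' memory $(P,\provWit)$ and the memory of the Boolean B\"uchi winning strategy on $\mathcal{G}^*$. I would need to verify carefully that the product remains of exponential size (using that both factors are exponential in the size of $\mathcal{G}$) and that the induced Player~$0$ strategy is well-defined, i.e., the value of the Moore memory after a history $h \in \Hist_G$ depends only on $h$ and not on any additional information from $\mathcal{G}^*$.
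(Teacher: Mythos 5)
Your route through the Challenger--Prover game does not deliver the bound claimed in the proposition, and this is precisely the obstruction the paper flags at the start of Section~\ref{subsec:nexptime_member} (``the time complexity of the \FPT{} algorithms obtained in the previous section is too high, preventing us from directly using the \challengerProverAb{} game'' for membership). The problem is quantitative. The paper measures the size of $\mathcal{G}$ by $|V|$ and $\nbrObjectives$, so ``exponential in the size of $\mathcal{G}$'' means $2^{\mathrm{poly}(|V|,\nbrObjectives)}$. An antichain $P \subseteq \{0,1\}^{\nbrObjectives}$ can have up to $\binom{\nbrObjectives}{\lfloor \nbrObjectives/2\rfloor} = \Theta(2^{\nbrObjectives}/\sqrt{\nbrObjectives})$ elements, so your count $2^{|P|}\cdot 2^{\nbrObjectives+1}$ for the states of $\mathcal{M}$ is \emph{doubly} exponential in $\nbrObjectives$; the step ``which is exponential in the size of $\mathcal{G}$ because $P$ is an antichain'' is exactly where the argument breaks. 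A memoryless Prover strategy is memoryless only relative to $G^*$; translated back to $G$ it needs one memory state per reachable value of the $\provWit$-component, and these range over subsets of $P$, of which there may be doubly exponentially many in $\nbrObjectives$. The parity case is worse still: the finite-memory bound one extracts from \autoref{bb_complexity} is of order $M^M$ with $M=2^m$, and the formula $\phi$ built in the proof of \autoref{prop:FPTparity} already uses at least $|P|$ variables (the $x_p$ for $p\in P$, plus the $x^i_j$), so $m$ itself can be exponential in $\nbrObjectives$.

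The paper therefore takes a different, direct route: it fixes a set of witnesses $\Wit{\sigma_0}$, replaces the behaviour off the witness tree by at most exponentially many canonical \emph{punishing strategies} (Lemmas~\ref{lem:parity-case} and~\ref{lem:reach-case}, obtained via explicit Muller games, resp.\ reachability-or-safety games on a payoff-extended arena), and then compacts each witness region by region into elementary paths and lassos; the number of regions traversed by a witness is bounded by $(\nbrObjectives+2)\cdot|\Wit{\sigma_0}|$, which \emph{is} singly exponential since $|\Wit{\sigma_0}|=|\paretoSet{\sigma_0}|\leq 2^{\nbrObjectives}$. If you wanted to salvage your approach you would have to show that only exponentially many $(v,\provWit)$ pairs are reachable under a suitably chosen winning Prover strategy; that is essentially what the region/section analysis of the paper accomplishes by hand, and it is not a consequence of memoryless determinacy of the B\"uchi game $\mathcal{G}^*$ alone.
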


While the proof of Proposition~\ref{prop:tildesigma} requires some specific arguments to treat both reachability and parity objectives, it is based on the following common principles. 
\begin{itemize}
    \item We start from a winning strategy $\sigma_0$ for the \problemAb{} and the objectives $\ObjPlayer{0},\ObjPlayer{1},\dots,\ObjPlayer{\nbrObjectives}$ and consider a set of witnesses $\Wit{\sigma_0}$, that contains one play for each element of the set  $\paretoSet{\sigma_0}$ of \paretoOptimal{} payoffs.
    \item We start by showing the existence of a strategy $\hat{\sigma}_0$ constructed from $\sigma_0$, in which Player~$0$ follows $\sigma_0$ as long as the current consistent history is prefix of at least one witness in $\Wit{\sigma_0}$. Then when a deviation from $\Wit{\sigma_0}$ occurs, Player~$0$ switches to a so-called \emph{punishing strategy}. A deviation is a history that leaves the set of witnesses $\Wit{\sigma_0}$ after a move of Player~$1$ (this is not possible by a move of Player~$0$). After such a deviation, $\hat{\sigma}_0$ systematically imposes that the consistent play either satisfies $\ObjPlayer{0}$ or is not \paretoOptimal{}, i.e., it gives to Player~1 a payoff that is strictly smaller than the payoff of a witness in $\Wit{\sigma_0}$. This makes the deviation \emph{irrational} for Player~1.  We show that this can be done, both for reachability and parity objectives, with at most exponentially many different punishing strategies, each having a size bounded exponentially in the size of the game. The strategy $\hat{\sigma}_0$ that we obtain is therefore composed of the part of $\sigma_0$ that produces $\Wit{\sigma_0}$ and a punishment part whose size is at most exponential.
    \item Then, we show how to decompose each witness in $\Wit{\sigma_0}$ into at most exponentially many \emph{sections} that can, in turn, be compacted into finite elementary paths or lasso shaped paths of polynomial length. As $\Wit{\sigma_0}$ contains exactly $|\paretoSet{\sigma_0}|$ witnesses $\rho$, those compact witnesses $c\rho$ can be produced by a finite-memory strategy with an exponential size for both reachability and parity objectives. This allows us to construct a strategy $\tilde{\sigma}_0$ that produces the compact witnesses and acts as $\hat{\sigma}_0$ after any deviation. This strategy is a solution of the \problemAb{} and has an exponential size as announced.
\end{itemize}

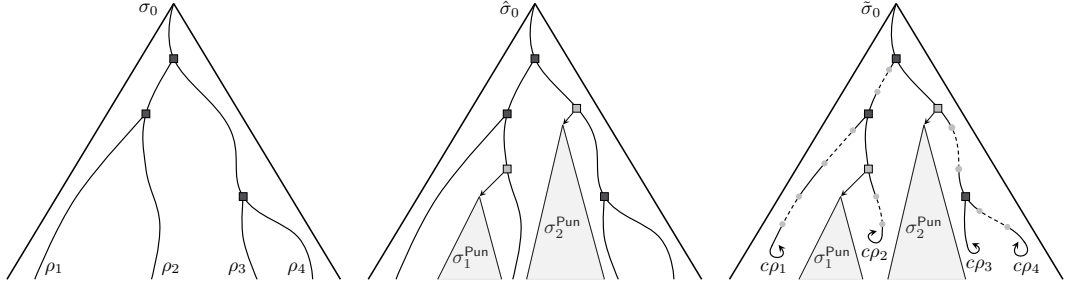
\begin{figure}
	\centering
		\resizebox{\textwidth}{!}{%
		\begin{tikzpicture}
          

        \draw[-, thick] (10,10) edge [] node {} (7, 5);
        \draw[-, thick] (10,10) edge [] node {} (13, 5);

        \node[] (0) at (9.55,9.9){$\sigma_0$};
        \node[] (0) at (10,10){};
        
        \node[rectangle, fill=darkgray, inner sep = 0pt, minimum size=4pt] (1) at (10,9){};
        \node[rectangle, fill=darkgray, inner sep = 0pt, minimum size=4pt] (2) at (9.5,8){};
        \node[rectangle, fill=darkgray, inner sep = 0pt, minimum size=4pt] (3) at (11.25,6.5){};
        
        \node[] (4) at (7.5,5){};
        \node[] () at (7.85,5.2){$\rho_1$};
        \node[] (5) at (9.6,5){};
        \node[] () at (9.95,5.2){$\rho_2$};
        \node[] (6) at (11.5,5){};
        \node[] () at (11.15,5.2){$\rho_3$};
        \node[] (7) at (12.5,5){};
        \node[] () at (12.22,5.2){$\rho_4$};

		\draw[semithick] plot [smooth, tension=0.7] coordinates { (0) (1) (11,7.75) (3) (12.25, 5.75) (7)};
		\draw[semithick] plot [smooth, tension=0.7] coordinates { (3) (11.25,5.75) (6)};
		\draw[semithick] plot [smooth, tension=0.7] coordinates { (1) (2) (9.5,7) (9.75,6) (5)};
		\draw[semithick] plot [smooth, tension=0.7] coordinates { (2) (8.25, 6.5) (4)};
		
        \node[draw, rectangle, fill=darkgray, inner sep = 0pt, minimum size=4pt] () at (10,9){};
        \node[draw, rectangle, fill=darkgray, inner sep = 0pt, minimum size=4pt] () at (9.5,8){};
        \node[draw, rectangle, fill=darkgray, inner sep = 0pt, minimum size=4pt] () at (11.25,6.5){};

        
        \draw[-, thick] (16.5,10) edge [] node {} (13.5, 5);
        \draw[-, thick] (16.5,10) edge [] node {} (19.5, 5);

        \node[] (0bis) at (16.05,9.9){$\hat{\sigma}_0$};
        \node[] (0bis) at (16.5,10){};
        
        \node[rectangle, fill=darkgray, inner sep = 0pt, minimum size=4pt] (1bis) at (16.5,9){};
        \node[rectangle, fill=darkgray, inner sep = 0pt, minimum size=4pt] (2bis) at (16,8){};
        \node[rectangle, fill=darkgray, inner sep = 0pt, minimum size=4pt] (3bis) at (17.75,6.5){};
        
        \node[] (4bis) at (14,5){};
        \node[] (5bis) at (16.1,5){};
        \node[] (6bis) at (18,5){};
        \node[] (7bis) at (19,5){};

		\draw[semithick] plot [smooth, tension=0.7] coordinates { (0bis) (1bis) (17.5,7.75) (3bis) (18.75, 5.75) (7bis)};
		\draw[semithick] plot [smooth, tension=0.7] coordinates { (3bis) (17.75,5.75) (6bis)};
		\draw[semithick] plot [smooth, tension=0.7] coordinates { (1bis) (2bis) (16,7) (16.25,6) (5bis)};
		\draw[semithick] plot [smooth, tension=0.7] coordinates { (2bis) (14.75, 6.5) (4bis)};

        \node[draw, rectangle, fill=darkgray, inner sep = 0pt, minimum size=4pt] () at (16.5,9){};
        \node[draw, rectangle, fill=darkgray, inner sep = 0pt, minimum size=4pt] () at (16,8){};
        \node[draw, rectangle, fill=darkgray, inner sep = 0pt, minimum size=4pt] () at (17.75,6.5){};
        
		\node[draw, rectangle, fill=lightgray, inner sep = 0pt, minimum size=4pt] (pun1bis) at (16,7){};
		\node[inner sep = 0pt, minimum size=0pt] (pun1bisroot) at (15.5,6.5){};
        \draw[-] (15.5,6.5) edge [] node {} (14.75, 5);
        \draw[-] (15.5,6.5) edge [] node {} (15.9, 5);
		\node[] () at (15.35,5.4){$\punStrat{1}$};
        \draw[-stealth] (pun1bis) edge [] node {} (pun1bisroot);
        \draw [draw=gray, fill=lightgray, opacity=0.2]
       (15.5,6.5) -- (14.75, 5) -- (15.9, 5) --  cycle;

        \node[draw, rectangle, fill=lightgray, inner sep = 0pt, minimum size=4pt] (pun2bis) at (17.25,8.1){}; 
		\node[inner sep = 0pt, minimum size=0pt] (pun2bisroot) at (17,7.8){};
        \draw[-] (17,7.8) edge [] node {} (16.35, 5);
        \draw[-] (17,7.8) edge [] node {} (17.75, 5);
		\node[] () at (17,5.95){$\punStrat{2}$};
        \draw[-stealth] (pun2bis) edge [] node {} (pun2bisroot);
        \draw [draw=gray, fill=lightgray, opacity=0.2]
       (17,7.8) -- (16.35, 5) -- (17.75, 5) --  cycle;
       
        \draw[-, thick] (23,10) edge [] node {} (20, 5);
        \draw[-, thick] (23,10) edge [] node {} (26, 5);

        \node[] (0tre) at (22.55,9.9){$\tilde{\sigma}_0$};
        \node[] (0tre) at (23,10){};
        
        \node[rectangle, fill=darkgray, inner sep = 0pt, minimum size=4pt] (1tre) at (23,9){};
        \node[rectangle, fill=darkgray, inner sep = 0pt, minimum size=4pt] (2tre) at (22.5,8){};
        \node[rectangle, fill=darkgray, inner sep = 0pt, minimum size=4pt] (3tre) at (24.25,6.5){};
        
        \node[] (4tre) at (20.5,5){};
        \node[] (5tre) at (22.6,5){};
        \node[] (6tre) at (24.5,5){};
        \node[] (7tre) at (25.5,5){};

		\draw[semithick] plot [smooth, tension=0.7] coordinates { (0tre) (1tre) (24,7.75) (3tre) (25.25, 5.75) (7tre)};
		\draw[semithick] plot [smooth, tension=0.7] coordinates { (3tre) (24.25,5.75) (6tre)};
		\draw[semithick] plot [smooth, tension=0.7] coordinates { (1tre) (2tre) (22.5,7) (22.75,6) (5tre)};
		\draw[semithick] plot [smooth, tension=0.7] coordinates { (2tre) (21.25, 6.5) (4tre)};

        \node[draw, rectangle, fill=darkgray, inner sep = 0pt, minimum size=4pt] () at (23,9){};
        \node[draw, rectangle, fill=darkgray, inner sep = 0pt, minimum size=4pt] () at (22.5,8){};
        \node[draw, rectangle, fill=darkgray, inner sep = 0pt, minimum size=4pt] () at (24.25,6.5){};
        
		\node[draw, rectangle, fill=lightgray, inner sep = 0pt, minimum size=4pt] (pun1tre) at (22.5,7){};
		\node[inner sep = 0pt, minimum size=0pt] (pun1treroot) at (22,6.5){};
        \draw[-] (22,6.5) edge [] node {} (21.25, 5);
        \draw[-] (22,6.5) edge [] node {} (22.4, 5);
		\node[] () at (21.85,5.4){$\punStrat{1}$};
        \draw[-stealth] (pun1tre) edge [] node {} (pun1treroot);
        \draw[draw=gray, fill=lightgray, opacity=0.2]
       (22,6.5) -- (21.25, 5) -- (22.4, 5) --  cycle;

        \node[draw, rectangle, fill=lightgray, inner sep = 0pt, minimum size=4pt] (pun2tre) at (23.75,8.1){}; 
		\node[inner sep = 0pt, minimum size=0pt] (pun2treroot) at (23.5,7.8){};
        \draw[-] (23.5,7.8) edge [] node {} (22.85, 5);
        \draw[-] (23.5,7.8) edge [] node {} (24.25, 5);
		\node[] () at (23.5,5.95){$\punStrat{2}$};
        \draw[-stealth] (pun2tre) edge [] node {} (pun2treroot);
        \draw[draw=gray, fill=lightgray, opacity=0.2]
       (23.5,7.8) -- (22.85, 5) -- (24.25, 5) --  cycle;
        
        \node[circle, color=lightgray, fill=lightgray, inner sep = 0pt, minimum size=3pt] (cycle1) at (24,7.75){};
        \node[ circle, color=lightgray, fill=lightgray, inner sep = 0pt, minimum size=3pt] (cycle2) at (24.125,7){};
        \draw[ultra thick, dotted, draw=white] plot [smooth, tension=0.7] coordinates { (cycle1) (24.09, 7.5)  (cycle2)};
                
        \node[draw, circle, color=lightgray, fill=lightgray, inner sep = 0pt, minimum size=3pt] () at (24,7.75){};
        \node[draw, circle, color=lightgray, fill=lightgray, inner sep = 0pt, minimum size=3pt] () at (24.125,7){};
        
        \node[circle, color=lightgray, fill=lightgray, inner sep = 0pt, minimum size=3pt] (cycle3) at (24.5,6.24){};
        \node[ circle, color=lightgray, fill=lightgray, inner sep = 0pt, minimum size=3pt] (cycle4) at (25,5.95){};
        \draw[ultra thick, dotted, draw=white] plot [smooth, tension=0.7] coordinates { (cycle3) (cycle4)};
        \node[circle, color=lightgray, fill=lightgray, inner sep = 0pt, minimum size=3pt] () at (24.5,6.24){};
        \node[ circle, color=lightgray, fill=lightgray, inner sep = 0pt, minimum size=3pt] () at (25,5.95){};    
        
        \node[circle, color=lightgray, fill=lightgray, inner sep = 0pt, minimum size=3pt] (cycle5) at (22.64,6.5){};
        \node[ circle, color=lightgray, fill=lightgray, inner sep = 0pt, minimum size=3pt] (cycle6) at (22.75,6){};
        \draw[ultra thick, dotted, draw=white] plot [smooth, tension=0.7] coordinates { (cycle5) (22.65,6.45) (22.7,6.35) (cycle6)};
        \node[circle, color=lightgray, fill=lightgray, inner sep = 0pt, minimum size=3pt] (cycle5) at (22.64,6.5){};
        \node[ circle, color=lightgray, fill=lightgray, inner sep = 0pt, minimum size=3pt] (cycle6) at (22.75,6){};
        
        \node[circle, color=lightgray, fill=lightgray, inner sep = 0pt, minimum size=3pt] (cycle7) at (21.25,6.5){};
        \node[ circle, color=lightgray, fill=lightgray, inner sep = 0pt, minimum size=3pt] (cycle8) at (20.955,6){};
        \draw[ultra thick, dotted, draw=white] plot [smooth, tension=0.7] coordinates { (cycle7) (cycle8)};
        \node[circle, color=lightgray, fill=lightgray, inner sep = 0pt, minimum size=3pt] (cycle7) at (21.25,6.5){};
        \node[ circle, color=lightgray, fill=lightgray, inner sep = 0pt, minimum size=3pt] (cycle8) at (20.955,6){};
        
        \node[circle, color=lightgray, fill=lightgray, inner sep = 0pt, minimum size=3pt] (cycle7) at (22.22,7.7){};
        \node[ circle, color=lightgray, fill=lightgray, inner sep = 0pt, minimum size=3pt] (cycle8) at (21.71,7.1){};
        \draw[ultra thick, dotted, draw=white] plot [smooth, tension=0.7] coordinates { (cycle7) (cycle8)};
        \node[circle, color=lightgray, fill=lightgray, inner sep = 0pt, minimum size=3pt] (cycle7) at (22.22,7.7){};
        \node[ circle, color=lightgray, fill=lightgray, inner sep = 0pt, minimum size=3pt] (cycle8) at (21.71,7.1){};
        
        \node[circle, color=lightgray, fill=lightgray, inner sep = 0pt, minimum size=3pt] (cycle9) at (22.872,8.8){};
        \node[ circle, color=lightgray, fill=lightgray, inner sep = 0pt, minimum size=3pt] (cycle10) at (22.662,8.4){};
        \draw[ultra thick, dotted, draw=white] plot [smooth, tension=0.7] coordinates { (cycle9) (cycle10)};
        \node[circle, color=lightgray, fill=lightgray, inner sep = 0pt, minimum size=3pt] (cycle9) at (22.872,8.8){};
        \node[ circle, color=lightgray, fill=lightgray, inner sep = 0pt, minimum size=3pt] (cycle10) at (22.662,8.4){};
        
        
        \node[rectangle, inner sep = 0pt, minimum size=4pt] (loop4) at (25.243, 5.76){};
        \draw[draw=white, fill=white]
       (25.29, 5.645) -- (25.4, 5.645) -- (25.58,4.99)  -- (25.43,4.99) -- cycle;
        \draw[-stealth, semithick, out=-53,in=-120, looseness=10] (loop4) edge [] node {} (loop4);
        
        \node[rotate=20, rectangle, inner sep = 0pt, minimum size=4pt] (loop3) at (24.265, 5.7){};
        \draw[draw=white, fill=white]
       (24.35, 5.6) -- (24.23, 5.6) --  (24.43,4.99) -- (24.6,4.99) -- cycle;
        \draw[-stealth, semithick, out=-85,in=-30, looseness=10] (loop3) edge [] node {} (loop3);
        
        \node[ rotate=-30, rectangle, inner sep = 0pt, minimum size=4pt] (loop2) at (22.757, 5.9){};
        \draw[draw=white, fill=white]
       (22.68, 5.8) -- (22.82, 5.8) -- (22.7,4.99) -- (22.55,4.99) -- cycle;
        \draw[-stealth, semithick, out=-90,in=-160, looseness=10] (loop2) edge [] node {} (loop2);
        
        \node[rectangle, inner sep = 0pt, minimum size=4pt] (loop1) at (20.823, 5.7){};
        \draw[draw=white, fill=white]
       (20.6, 5.6) -- (20.8, 5.6) -- (20.6,4.99) -- (20.4,4.99) -- cycle;
        \draw[-stealth, semithick, rotate=-27, out=-100,in=-30, looseness=10] (loop1) edge [] node {} (loop1);
        
        \node[] () at (20.8,5.2){$c \rho_1$};
        \node[] () at (22.615,5.45){$c \rho_2$};
        \node[] () at (24.5,5.25){$c \rho_3$};
        \node[] () at (25.35,5.2){$c \rho_4$};
    
		\end{tikzpicture}
	}%
	\caption{The creation of strategies $\hat{\sigma}_0$ and $\tilde{\sigma_0}$ from a solution $\sigma_0$ with $\Wit{\sigma_0} = \{\rho_1, \rho_2, \rho_3, \rho_4\}$.}
	\label{fig:nexptime_strategy}
\end{figure} 

We now develop the details of the construction of the strategies $\hat{\sigma}_0$ and $\tilde{\sigma}_0$. Figure~\ref{fig:nexptime_strategy} illustrates this construction. It is done in several steps to finally get the proof of Proposition~\ref{prop:tildesigma}. For the rest of this section, we fix an \gameAb{} $\mathcal{G}$ with objectives $\ObjPlayer{0},\ObjPlayer{1},\dots,\ObjPlayer{t}$, a strategy $\sigma_0$ that is solution to the \problemAb{}, a set of witnesses $\Wit{\sigma_0}$ for the \paretoOptimal{} payoffs in $P_{\sigma_0}$, and we write $\dominatedPlays{\sigma_0}$ the set of plays whose payoff is strictly smaller than some payoff in $\paretoSet{\sigma_0}$.

\subparagraph*{Deviations and Punishing Strategies.}
First, we define the set of deviations $\deviations{\sigma_0}$ as follows:
$$\deviations{\sigma_0} = \{hv \in \Histsigma{\sigma_0} \mid \prefStrat{h} \neq \emptyset \land \prefStrat{hv} = \emptyset \}.$$
As explained above, a deviation is a history that leaves the set of witnesses $\Wit{\sigma_0}$ (by a move of Player~$1$).

Second, we establish the existence of canonical forms for punishing strategies. We potentially need an exponential number of them for reachability objectives and a polynomial number of them for parity objectives. In both cases, each punishing strategy has a size which can be bounded exponentially. The existence of those strategies are direct consequences of the two following lemmas.

\begin{lemma}[Parity] 
\label{lem:parity-case}
Let $v \in V$ be such that there exists $hv \in \deviations{\sigma_0}$.

\noindent
Then there exists a finite-memory strategy $\punStrat{v}$ such that for all deviations $hv \in \deviations{\sigma_0}$, when Player~$0$ plays $\punStrat{v}$ from $hv$, all consistent plays $\rho$ starting in $v$ are such that either $h\rho \in \ObjPlayer{0}$ or $h\rho \in \dominatedPlays{\sigma_0}$. The size of $\punStrat{v}$ is at most exponential in the size of $\mathcal{G}$.
\end{lemma}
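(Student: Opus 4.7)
The plan is to exploit prefix-independence of parity objectives to reduce the construction of $\punStrat{v}$ to the resolution of a single zero-sum game on $G$ with a uniformly described $\omega$-regular winning condition, then to appeal to known memory bounds for such games.

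First I would observe that, since every $\Omega_i$ is a parity objective, membership of $h\rho$ in $\Omega_i$ depends only on the infinite suffix $\rho$. After extending $\payoff{\cdot}$ and $\won{\cdot}$ to plays starting at an arbitrary vertex of $G$ in the obvious way, this makes $\Omega_0 \cup \dominatedPlays{\sigma_0}$ prefix-independent. Moreover it can be written explicitly as the Boolean combination of parity conditions
\[
\dominatedPlays{\sigma_0} \;=\; \bigcup_{p'\in \paretoSet{\sigma_0}} \Big(\bigcap_{i \,:\, p'_i = 0} \overline{\Omega_i}\Big) \cap \Big(\bigcup_{j \,:\, p'_j = 1} \overline{\Omega_j}\Big),
\]
so the whole objective is a Boolean combination of at most $t+1$ parity conditions and hence $\omega$-regular.

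Next I would consider the two-player zero-sum game $\mathcal{H}$ played on $G$ in which Player~$0$ has objective $\Omega_0 \cup \dominatedPlays{\sigma_0}$ and Player~$1$ its complement, with winning region $W_0 \subseteq V$ for Player~$0$. The key claim is that every vertex $v$ occurring as the last vertex of some deviation $hv \in \deviations{\sigma_0}$ belongs to $W_0$. The witness is the ``continuation'' of $\sigma_0$ after $h$: any play $\rho$ from $v$ consistent with this continuation, prepended with $h$, is consistent with $\sigma_0$, and hence, because $\sigma_0$ solves the \problemAb{}, either lies in $\Omega_0$ or has payoff strictly below some $p' \in \paretoSet{\sigma_0}$; prefix-independence then transfers the property from $h\rho$ to $\rho$, proving $v \in W_0$. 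Once $v \in W_0$ is established, I would extract a single \emph{uniform} finite-memory winning strategy for Player~$0$ on all of $W_0$ (possible for any $\omega$-regular objective by determinacy) and take $\punStrat{v}$ to be that strategy. Uniformity in the starting position is exactly what is needed to handle all deviations $hv$ sharing the same $v$ simultaneously.

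The main obstacle will be the memory bound: I need the resulting $\punStrat{v}$ to have size at most exponential in the size of $\mathcal{G}$. A naive Muller encoding of $\Omega_0 \cup \dominatedPlays{\sigma_0}$ followed by a Latest Appearance Record construction yields doubly exponential memory, so the bound must be obtained more carefully. The most convenient route is to express the objective directly as a Boolean B\"uchi condition in the style of \autoref{bb_complexity}, whose formula and number of variables depend only on $t$ and the maximal priorities $d_i$, and then solve the resulting game of singly exponential size; memoryless strategies on the product graph fold back into a finite-memory winning strategy on $G$ of size singly exponential in the size of $\mathcal{G}$, as announced.
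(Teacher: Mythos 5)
Your proof follows the same skeleton as the paper's: continue $\sigma_0$ after the deviation to see that every consistent continuation lies in $\ObjPlayer{0} \cup \dominatedPlays{\sigma_0}$, use prefix-independence of parity objectives to make the punishing strategy depend on $v$ only, and then solve a zero-sum game on $G$ with the prefix-independent objective $\ObjPlayer{0} \cup \dominatedPlays{\sigma_0}$ to extract a finite-memory winning strategy from the winning region. The only real divergence is the final memory-bound step, and there your diagnosis of the ``obstacle'' is off: the paper encodes $\ObjPlayer{0} \cup \dominatedPlays{\sigma_0}$ as an explicit Muller condition $\{C \subseteq V \mid \exists \rho,\ \infOcc{\rho}=C \land \rho \in \ObjPlayer{0}\cup\dominatedPlays{\sigma_0}\}$ and invokes the classical bound of Dziembowski--Jurdzi\'nski--Walukiewicz, which gives winning strategies with memory at most $|V|!$, i.e.\ $2^{\mathcal{O}(|V|\log|V|)}$ --- singly exponential, not doubly. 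The explicit \emph{representation} of the Muller condition may be exponentially long, but that affects the time to solve the game, not the memory of the strategy, and only the memory matters for this lemma. Your Boolean-B\"uchi alternative can also be made to work (it is a Muller condition over $m = \mathcal{O}(t\cdot|V|)$ index sets, so a latest-appearance-record over those indices gives memory $\mathcal{O}(m!\cdot m)$, again singly exponential), but as written it is under-justified: the product arena implicit in \autoref{bb_complexity} has size governed by $M^M$ with $M = 2^m$, which is \emph{doubly} exponential when $m$ is polynomial in $|V|$, so you cannot simply point to ``the resulting game of singly exponential size''; you would need to invoke the very LAR/Zielonka-style memory bound you dismissed. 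In short, the argument is correct in substance and essentially the paper's, but the last step should either cite the Muller-game memory bound directly (as the paper does) or make the $m!$ bound for the Boolean-B\"uchi encoding explicit.
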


\begin{proof}[Proof of \autoref{lem:parity-case}.]
First, we note that, after a deviation $hv \in \deviations{\sigma_0}$, if Player~$0$ continues to play the strategy $\sigma_0$ from $hv$, then all consistent plays $\rho$ are such that either $\rho \in \ObjPlayer{0}$ or $\rho \in \dominatedPlays{\sigma_0}$ as $\sigma_0$ is a solution to the \problemAb{}. Therefore, we know that Player~$0$ has a punishing strategy for all such deviations $hv$. Second, as parity objectives are prefix-independent, he can use one uniform strategy that only depends on $v$ (and not on $hv$). There exists such a strategy with finite memory that can be constructed as follows. We express the objective $\ObjPlayer{0} \cup \dominatedPlays{\sigma_0}$ as an explicit Muller objective~\cite{Horn08} for a zero-sum game played on the arena $G$ from initial vertex $v$. This objective is defined by the set $\{ C \subseteq V \mid \exists \rho \mbox{ such that }\infOcc{\rho}= C \land \rho \in \ObjPlayer{0} \cup \dominatedPlays{\sigma_0} \}$. This exactly encodes the objective of Player~$0$ when he plays the punishing strategy after a deviation $hv$. It is well-known that in zero-sum explicit Muller games, there always exist finite-memory winning strategies with a size exponential in the number $|V|$ of vertices of the arena~\cite{DziembowskiJW97}.
\end{proof}

\begin{lemma}[Reachability] 
\label{lem:reach-case}
Let $v \in V$ and $(w,p) \in \{0,1\} \times \{0,1\}^{\nbrObjectives}$ be such that there exists $hv \in \deviations{\sigma_0}$ with $(\won{hv},\payoff{hv}) = (w,p)$. 

\noindent
Then there exists a finite-memory strategy $\punStrat{(v,w,p)}$ such that for all deviations $hv \in \deviations{\sigma_0}$ with $(\won{hv},\payoff{hv}) = (w,p)$, when Player~$0$ plays $\punStrat{(v,w,p)}$ from $hv$, all consistent plays $\rho$ starting in $v$ are such that either $h\rho \in \ObjPlayer{0}$ or $h\rho \in \dominatedPlays{\sigma_0}$. The size of $\punStrat{(v,w,p)}$ is at most exponential in the size of $\mathcal{G}$.
\end{lemma}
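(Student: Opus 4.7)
The plan is to mimic the proof of Lemma~\ref{lem:parity-case} but, since reachability objectives are prefix-dependent, to replace the prefix-independence argument by an explicit tracking of the extended payoff accumulated so far. Fix $v$ and $(w,p)$ as in the statement and pick any witnessing deviation $h_0 v \in \deviations{\sigma_0}$ with $(\won{h_0 v}, \payoff{h_0 v}) = (w,p)$. Because $\sigma_0$ is a solution to the \problemAb{}, continuing to play $\sigma_0$ after $h_0 v$ guarantees that every consistent continuation $\rho$ produces $h_0 \rho \in \ObjPlayer{0} \cup \dominatedPlays{\sigma_0}$, so Player~$0$ has some winning strategy for this continuation objective. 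The goal is to extract from it a uniform strategy that depends only on the triple $(v, w, p)$ and can therefore be reused for every other deviation $hv$ with the same extended payoff.

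The main construction I would carry out is to set up a zero-sum two-player game on the extended arena $\hat{G}$ whose vertices are $V \times \{0,1\} \times \{0,1\}^{\nbrObjectives}$, with the $(w',p')$-component updated through the function $\update{\cdot}$ of Section~\ref{subsec:fptreach}. Starting from the initial vertex $(v,w,p)$, monotonicity of $\update{\cdot}$ forces the $(w',p')$-component to stabilise along every play to some final value $(w^*, p^*)$. I declare Player~$0$ the winner iff $w^* = 1$ or there exists $p'' \in \paretoSet{\sigma_0}$ with $p^* < p''$. This $\omega$-regular condition on $\hat{G}$ exactly captures $\ObjPlayer{0} \cup \dominatedPlays{\sigma_0}$ when a play on $\hat{G}$ is reinterpreted as the suffix $\rho$ of any play $h\rho$ in $G$ whose prefix ends at $v$ with extended payoff $(w,p)$. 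Conversely, $\sigma_0$ continued after $h_0 v$ translates to a winning strategy for Player~$0$ on $\hat{G}$ from $(v,w,p)$, so Player~$0$ does win the zero-sum game. Since the arena is finite and the objective $\omega$-regular, Player~$0$ admits a finite-memory winning strategy $\hat{\sigma}$ whose memory is at most exponential in the size of $\mathcal{G}$ (for instance via an encoding as an explicit Muller game on $\hat{G}$ and the bounds of~\cite{DziembowskiJW97}). I would then define $\punStrat{(v,w,p)}$ on $G$ by simulating $\hat{\sigma}$, using auxiliary memory that stores the current $(w',p')$ initialised at $(w,p)$; its memory size remains exponential in the size of $\mathcal{G}$, as required.

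The main obstacle, absent from the parity case, is that the continuation objective after a deviation $hv$ genuinely depends on the prefix $h$, so Player~$0$ cannot reuse a strategy indexed only by $v$. The encoding above shows that the only feature of $h$ that matters is its extended payoff, which is precisely recorded by $(w,p)$; this is what makes a single uniform punishing strategy possible for all deviations sharing the same triple $(v,w,p)$, at the price of the exponential blow-up in memory.
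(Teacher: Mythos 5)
Your overall construction coincides with the paper's: you pass to the extended arena $\hat{G}$ whose vertices carry the accumulated extended payoff $(w',p')$ via the update function, start the zero-sum game at $(v,w,p)$, observe that the only feature of the prefix $h$ that matters is its extended payoff, and conclude that Player~$0$ wins this game because $\sigma_0$ itself, continued after a witnessing deviation, already enforces $\ObjPlayer{0} \cup \dominatedPlays{\sigma_0}$. All of that matches the paper's argument, including the reinterpretation of the winning condition on limits $(w^*,p^*)$ as the disjunction ``$w^*=1$ or $p^*$ strictly dominated''.

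The genuine gap is in the memory bound. You justify the exponential size of $\punStrat{(v,w,p)}$ by encoding the objective as an explicit Muller condition on $\hat{G}$ and invoking the bounds of~\cite{DziembowskiJW97}. Those bounds are exponential in the number of vertices of the arena on which the Muller game is played, and $\hat{G}$ already has $|V|\cdot 2^{\nbrObjectives+1}$ vertices, i.e., exponentially many in the size of $\mathcal{G}$ (which counts both $|V|$ and $\nbrObjectives$). Your route therefore only yields a strategy of \emph{doubly} exponential size, which is not what the lemma claims and would break the \nexptime{}-membership argument it feeds into. The paper closes exactly this gap by noting that on the extended arena the objective is the disjunction of a reachability objective (reach a vertex with $w'=1$) and a safety objective (stay forever in the set of vertices whose $p'$-component is strictly smaller than some payoff of $\paretoSet{\sigma_0}$), and that such disjunctions admit \emph{memoryless} winning strategies (see~\cite{BruyereHR18}); a memoryless strategy on $\hat{G}$ then induces a finite-memory strategy on $G$ of exponential size, as required. (The paper also dispatches the case $w=1$ separately with an arbitrary memoryless strategy, but that case is subsumed.) Your proof is repaired by replacing the Muller-game appeal with this memoryless-determinacy argument.
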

\begin{proof}[Proof of \autoref{lem:reach-case}.]
We follow the same reasoning as in the proof of Lemma~\ref{lem:parity-case}, except that reachability objectives are not prefix-independent. We thus need to take into account the set of objectives $\ObjPlayer{i}$ already satisfied along the history $hv$, which is recorded in $(w,p)$. The uniform finite-memory strategy $\punStrat{(v,w,p)}$ that Player~$0$ can use from all deviations $hv$ such that $\won{hv} = w$ and $\payoff{hv} = p$ is constructed as follows. First, notice that if $w = 1$, meaning that objective $\Omega_0$ is already satisfied, then Player~$0$ can play using any memoryless strategy as punishing strategy. Second, if $w = 0$, as done in \autoref{subsec:fptreach}, we consider the extension of $G$ such that its vertices are of the form $(v', w', p')$ where the $(w', p')$-component keeps track of the objectives that have been satisfied so far and such that its initial vertex is equal to $(v,w,p)$. On this extended arena, we consider the zero-sum game with the objective $\ObjPlayer{0} \cup \dominatedPlays{\sigma_0}$ encoded as the disjunction of a reachability objective ($\ObjPlayer{0}$) and a safety objective ($\dominatedPlays{\sigma_0}$). More precisely, in the extended game, Player~0 has the objective either to reach a vertex in the set $\{ (v', w', p') \mid w' = 1 \}$ or to stay forever within the set of vertices $\{ (v', w', p') \mid \exists p'' \in \paretoSet{\sigma_0} : p' < p'' \}$. It is known, see e.g.~\cite{BruyereHR18}, that there always exist memoryless winning strategies for zero-sum games with an objective which is the disjunction of a reachability objective and a safety objective. Therefore, this is the case here for the extended game, and thus also in the original game however with a winning finite-memory strategy with exponential size. 
\end{proof}

If we systematically change within $\sigma_0$ the behavior of Player~$0$ after a deviation from $\Wit{\sigma_0}$, and use the punishing strategies as defined in the proofs of Lemmas~\ref{lem:parity-case} and~\ref{lem:reach-case}, we obtain a new strategy $\hat{\sigma}_0$ that is solution to the \problemAb{}. The total size of the punishing finite-memory strategies in $\hat{\sigma}_0$ is at most exponential in the size of $\mathcal{G}$. To obtain our results, it remains to show how to compact the plays in $\Wit{\sigma_0}$. To that end, we study the histories and plays within $\Wit{\sigma_0}$. 

\subparagraph*{Compacting Witnesses.}
We now show how to compact the set of witnesses in a way to produce them with a finite-memory strategy. Together with the punishing strategies this will lead to a solution $\tilde{\sigma}_0$ to \problemAb{} with a memory of exponential size. We first consider reachability objectives and explain later how to modify the construction for parity objectives. 

Given a history $h$ that is prefix of at least one witness in $\Wit{\sigma_0}$, we call \emph{\region} and we denote by $\Reg{h}$ the tuple $\Reg{h} = (\won{h},\payoff{h},\prefStrat{h})$.
We also use notation $R = (w,p,\W)$ for a \region. Given a witness $\rho = v_0v_1 \ldots \in \Wit{\sigma_0}$, we consider $\rho^* = (v_0,R_0) (v_1,R_1) \ldots$ such that each $v_j$ is extended with the \region\ $R_j = (w_j,p_j,\W_j) = \Reg{v_0v_1 \dots v_j}$. Similarly we define $h^*$ associated with any history $h$ prefix of a witness. The following properties hold for a witness $\rho$ and its corresponding play $\rho^*$:
\begin{itemize}
    \item for all $j \geq 0$, we have $w_j \leq w_{j+1}$, $p_j \leq p_{j+1}$, and $\W_j \supseteq \W_{j+1}$,
    \item the sequence $(w_j,p_j)_{j\geq 0}$ eventually stabilizes on $(w,p)$ equal to the extended payoff $( \won{\rho}, \payoff{\rho})$ of $\rho$, 
    \item the sequence $(\W_j)_{j\geq 0}$ eventually stabilizes on a set $\W$ which is a singleton such that $\W = \{\rho\}$.
\end{itemize}

Thanks to the previous properties, each $\rho \in \Wit{\sigma_0}$ can be \emph{\region\ decomposed} into a sequence of paths $\pi[1]\pi[2]\cdots\pi[k]$ where the corresponding decomposition $\pi^*[1]\pi^*[2]\cdots\pi^*[k]$ of $\rho^*$ is such that for each $\ell$: \emph{(i)} the \region\ is constant along the path $\pi^*[\ell]$ and \emph{(ii)} it is distinct from the \region\ of the next path $\pi^*[\ell + 1]$ (if $\ell < k$). Each $\pi[\ell]$ is called a \emph{\sect} of $\rho$, such that it is \emph{\internal} (resp.\ \emph{\final}) if $\ell < k$ (resp.\ $\ell = k$).

Notice that the number of regions that are traversed by $\rho$ is bounded by 
\begin{eqnarray} \label{eq:traversed}
(\nbrObjectives + 2) \cdot |\Wit{\sigma_0}|.
\end{eqnarray}
Indeed along $\rho$, the first two components $(w,p)$ of a region correspond to a monotonically increasing vector of $\nbrObjectives + 1$ Boolean values (from $(0,(0,\ldots,0))$ to $(1,(1,\ldots,1))$ in the worst case), and the last component  $W$ is a monotonically decreasing set of witnesses (from $\Wit{\sigma_0}$ to $\{\rho\}$ in the worst case). So the number of regions traversed by a witness is bounded exponentially in the size of the game $\mathcal{G}$.

We have the following important properties for the \sect s of the witnesses of $\Wit{\sigma_0}$.
\begin{itemize}
    \item Let $\rho,\rho' \in \Wit{\sigma_0}$, with region decompositions $\rho = \pi[1]\cdots\pi[k]$ and $\rho' = \pi'[1]\cdots\pi'[k']$ and let $h$ be the longest common prefix of $\rho$ and $\rho'$. Then there exists $k_1 < k,k'$ such that $h = \pi[1]\cdots\pi[k_1]$, $\pi[\ell] = \pi'[\ell]$ for all $\ell \in \{1,\ldots,k_1\}$ and $\pi[k_1+1] \neq \pi'[k_1+1]$. Therefore, when $\Wit{\sigma_0}$ is seen as a tree, the branching structure of this tree is respected by the \sect s. 
    \item Let $R = (w,p,\W)$ be a region and consider the set of all histories $h$ such that $\Reg{h} = R$. Then all these histories are prefixes of each other and are prefixes of exactly $|W|$ witnesses (as $\prefStrat{h} = \W$ for each such $h$). Therefore, the branching structure of $\Wit{\sigma_0}$ is respected by the \sect s such that the associated \region s are all pairwise distinct. The latter property is called the \emph{\Witproperty} of $\Wit{\sigma_0}$.
\end{itemize}
\noindent
We consider a \emph{compact} version $c \Wit{\sigma_0}$ of $\Wit{\sigma_0}$ defined as follows:
\begin{itemize}
    \item each \internal\ \sect\ $\pi$ of $\Wit{\sigma_0}$ is replaced by the elementary path $c \pi$ obtained by eliminating all the cycles of $\pi$. Each \final\ \sect\ $\pi$ of $\Wit{\sigma_0}$ is replaced by a lasso $c \pi = \pi'_1(u\pi'_2)^\omega$ such that $u$ is a vertex, $\pi'_1u\pi'_2$ is an elementary path, and $\pi'_1u\pi'_2u$ is prefix of $\pi$.
    \item each witness $\rho$ of $\Wit{\sigma_0}$ with \region\ decomposition $\rho = \pi[1]\cdots\pi[k]$ is replaced by $c \rho = c \pi[1] \cdots c \pi[k]$ such that each $\pi[\ell]$ is replaced by $c \pi[\ell]$. Notice that as the \region\ is constant inside the \sect s, the \region\ decomposition of $c \rho$ coincide with the sequence of its $c \pi[\ell]$, $\ell \in \{1, \ldots, k\}$.
\end{itemize}
Therefore, by construction of the compact witnesses, the \Witproperty\ of $\Wit{\sigma_0}$ is kept by the set $\{ c\rho \mid \rho \in \Wit{\sigma_0}\}$ and for each $c\rho \in c\Wit{\sigma_0}$,
\begin{eqnarray} \label{eq:payoffkept}
(\won{c \rho},\payoff{c \rho}) =  (\won{\rho},\payoff{\rho}).
\end{eqnarray}
 
We then construct the announced strategy $\tilde{\sigma}_0$ that produces the set $c \Wit{\sigma_0}$ of compact witnesses and after any deviation acts with the adequate punishing strategy (as mentioned in Lemma~\ref{lem:reach-case}). More precisely, let $gv$ be such that $g$ is prefix of a compact witness and $gv$ is not (Player~$1$ deviates from $c \Wit{\sigma_0}$). Then by definition of the compact witnesses, there exists a deviation $hv$ such that $(\won{gv},\payoff{gv}) = (\won{hv},\payoff{hv}) = (w,p)$. Then from $gv$ Player~0 switches to the punishing strategy $\punStrat{(v,w,p)}$.

\begin{lemma}
\label{lem:reach-correct} 
The strategy $\tilde{\sigma}_0$ is a solution to the \problemAb{} for reachability \gamesAb{} and its size is bounded exponentially in the size of the game $\mathcal{G}$.
\end{lemma}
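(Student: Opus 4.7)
My plan is to establish Lemma~\ref{lem:reach-correct} in two parts: first, show that $\tilde{\sigma}_0$ is indeed a solution to the \problemAb{}, and second, bound its memory size. For correctness, I will classify every play $\pi \in \Playsigma{\tilde{\sigma}_0}$ according to whether it stays inside the tree of compact witnesses $c\Wit{\sigma_0}$ or eventually takes a deviation edge past some prefix $gv$, after which Player~$0$ switches to the punishing strategy $\punStrat{(v,w,p)}$ with $(w,p) = (\won{gv},\payoff{gv})$.

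For a play $\pi = c\rho \in c\Wit{\sigma_0}$, equation~(\ref{eq:payoffkept}) gives $(\won{c\rho},\payoff{c\rho}) = (\won{\rho},\payoff{\rho})$; since $\rho$ is a witness, $\won{\rho} = 1$ and $\payoff{\rho} \in \paretoSet{\sigma_0}$, so such a play is won and has a payoff in $\paretoSet{\sigma_0}$. For a deviation play $\pi$ branching at $gv$, I will invoke Lemma~\ref{lem:reach-case}, noting that by construction of $\tilde{\sigma}_0$ the accumulated extended payoff $(\won{gv},\payoff{gv})$ equals the one at the matching deviation $hv \in \deviations{\sigma_0}$, and that reachability objectives are prefix-monotone; hence the punishing strategy forces $\pi \in \ObjPlayer{0} \cup \dominatedPlays{\sigma_0}$. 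To conclude, I take any $\pi \in \Playsigma{\tilde{\sigma}_0}$ with $\payoff{\pi} \in \paretoSet{\tilde{\sigma}_0}$: if $\pi$ is a compact witness, it is winning by the previous paragraph; if $\pi$ is a deviation play, then $\pi$ cannot lie in $\dominatedPlays{\sigma_0}$, for otherwise $\payoff{\pi}$ would be strictly dominated by some $p \in \paretoSet{\sigma_0}$, itself realized by a compact witness in $\Playsigma{\tilde{\sigma}_0}$, contradicting Pareto-optimality; hence $\won{\pi} = 1$.

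For the memory bound, I will account for the two building blocks of $\tilde{\sigma}_0$ separately. The compact-witness tree has at most $|\paretoSet{\sigma_0}| \leq 2^{\nbrObjectives}$ branches; each branch traverses at most $(\nbrObjectives+2)\cdot 2^{\nbrObjectives}$ sections by~(\ref{eq:traversed}), and each compacted section is either an elementary path or a lasso of length polynomial in $|V|$. Therefore $c\Wit{\sigma_0}$ can be produced by a finite-memory controller of size exponential in the size of $\mathcal{G}$. The punishing component contributes one strategy $\punStrat{(v,w,p)}$ per triple $(v,w,p)$, so at most $|V|\cdot 2^{\nbrObjectives+1}$ of them, each of exponential size by Lemma~\ref{lem:reach-case}. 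Combining the two yields the announced exponential memory size.

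The main subtle point will be the Pareto-optimality argument used for deviation plays: I must ensure that every candidate dominator of a strictly-dominated deviation payoff is itself realized by some play consistent with $\tilde{\sigma}_0$. This is precisely why the construction keeps one compact witness for each $p \in \paretoSet{\sigma_0}$, so that the payoffs produced by $c\Wit{\sigma_0}$ alone already cover $\paretoSet{\sigma_0}$ and can dominate every strictly-smaller payoff arising from a deviation play.
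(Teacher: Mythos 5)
Your proposal is correct and follows essentially the same route as the paper's proof: compact witnesses are handled via~(\ref{eq:payoffkept}), deviation plays via Lemma~\ref{lem:reach-case}, and the memory bound is obtained by counting regions per~(\ref{eq:traversed}) and punishing strategies per triple $(v,w,p)$. Your explicit argument that a Pareto-optimal deviation play cannot lie in $\dominatedPlays{\sigma_0}$ (because every dominator in $\paretoSet{\sigma_0}$ is realized by a compact witness consistent with $\tilde{\sigma}_0$) merely spells out a step the paper leaves implicit, and is a correct and welcome elaboration.
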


\begin{proof}[Proof of \autoref{lem:reach-correct}.]
Let us first prove that $\tilde{\sigma}_0$ is a solution to the \problemAb{}. \emph{(i)} By (\ref{eq:payoffkept}), the set of extended payoffs of plays in $c \Wit{\sigma_0}$ is equal to the set of extended payoffs of witnesses in $\Wit{\sigma_0}$. This means that with $c \Wit{\sigma_0}$, we keep the same set $\paretoSet{\sigma_0}$ and the objective $\Omega_0$ is satisfied along each compact witness. \emph{(ii)} The punishing strategies used by $\tilde{\sigma}_0$ guarantee the satisfaction of the objective $\ObjPlayer{0} \cup \dominatedPlays{\sigma_0}$ by Lemma~\ref{lem:reach-case}. Therefore $\tilde{\sigma}_0$ is a solution to the \problemAb{}.

Let us now show that the memory size of $\tilde{\sigma}_0$ is bounded exponentially in the size of $\mathcal{G}$. \emph{(i)} By Lemma~\ref{lem:reach-case}, each punishing strategy used by $\tilde{\sigma}_0$ is of exponential size and the number of punishing strategies is exponential. \emph{(ii)} To produce the compact witnesses, $\tilde{\sigma}_0$ keeps in memory the current region and produces in a memoryless way the corresponding compact section (which is an elementary path or lasso). Thus the required memory size for producing $c \Wit{\sigma_0}$ is the number of regions. By (\ref{eq:traversed}), every play in $c \Wit{\sigma_0}$ traverses at most an exponential number of regions and there is an exponential number of such plays (equal to $|P_{\sigma_0} |$).
\end{proof}

We now switch to parity \gamesAb{} and state the following lemma whose proof follows the same line of arguments as those given for reachability objectives.

\begin{lemma}
\label{lem:parity-correct}
The strategy $\tilde{\sigma}_0$ is a solution to the \problemAb{} for parity \gamesAb{} and its size is bounded exponentially in the size of the game $\mathcal{G}$.
\end{lemma}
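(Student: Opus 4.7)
The plan is to mirror the proof of Lemma~\ref{lem:reach-correct}, adapting every step to the parity setting. I would first argue that $\tilde{\sigma}_0$ is a solution to the \problemAb{} by splitting any play $\rho'$ consistent with $\tilde{\sigma}_0$ into two cases: either $\rho'$ follows a compact witness $c\rho \in c\Wit{\sigma_0}$ throughout, so that its payoff lies in $\paretoSet{\sigma_0}$ and is won by Player~$0$; or at some point Player~$1$ deviates and $\tilde{\sigma}_0$ switches to the uniform punishing strategy $\punStrat{v}$ of Lemma~\ref{lem:parity-case}, forcing $\rho' \in \ObjPlayer{0} \cup \dominatedPlays{\sigma_0}$. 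Because parity is prefix-independent, each $\punStrat{v}$ depends only on the current vertex $v$, so there are at most $|V|$ such punishing strategies, each of exponential size.

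Second, I would establish the analogue of equation~(\ref{eq:payoffkept}) for parity objectives, namely that every compact witness $c\rho$ satisfies exactly the same subset of $\{\ObjPlayer{0}, \ObjPlayer{1}, \ldots, \ObjPlayer{\nbrObjectives}\}$ as $\rho$. Since $\won{h}$ and $\payoff{h}$ are not defined on finite histories in the parity setting, I would redefine the region to be $R = W$ alone and decompose each witness into sections according to when the $W$-component strictly shrinks, yielding at most $|\Wit{\sigma_0}|$ sections per witness. Internal sections can be replaced by elementary paths of length at most $|V|$ exactly as for reachability, since parity depends only on the infinite tail and removing cycles from a finite prefix cannot affect satisfaction of any $\ObjPlayer{i}$.

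The main obstacle I anticipate is the compaction of the final section into a lasso that faithfully preserves the parity payoff vector. Writing $c\pi = \pi'_1(u\pi'_2)^\omega$, I need $\min_{v \in \occ{u\pi'_2}} c_i(v)$ to have the same parity as $\min_{v \in \infOcc{\rho}} c_i(v)$ for every $i \in \{0, 1, \ldots, \nbrObjectives\}$. The key observation is that $\infOcc{\rho}$ is strongly connected in~$G$, since its vertices are precisely those visited infinitely often by a single infinite play, so there exists a closed walk in~$G$ of length at most $|V|^2$ that visits every vertex of $\infOcc{\rho}$; taking $u\pi'_2$ to be such a walk (which need not be simple, a mild deviation from the reachability construction) guarantees that every per-dimension minimum priority is attained, and hence the full extended payoff of $\rho$ is preserved. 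Combining the exponentially many compact witnesses of polynomially-long sections with the $|V|$ punishing strategies of exponential size then yields the announced exponential memory bound on $\tilde{\sigma}_0$, and the remainder of the bookkeeping carries over directly from the reachability case.
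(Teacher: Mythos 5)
Your proposal is correct and follows essentially the same route as the paper: you redefine the region as the $\provWit$-component alone (since extended payoffs of finite histories are meaningless for prefix-independent objectives), keep the section decomposition and the de-cycling of internal sections, replace the terminal section by a short lasso whose cycle realizes exactly $\infOcc{\rho}$, and reuse the uniform punishing strategies $\punStrat{v}$ of Lemma~\ref{lem:parity-case}. The only divergence is that where the paper invokes a known result (a lasso with the same $\occ$ and $\infOcc$ sets and length quadratic in $|V|$) for the terminal-section compaction, you prove the needed special case directly via strong connectivity of $\infOcc{\rho}$ and a closed walk of length at most $|V|^2$, which is a sound and self-contained substitute.
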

\begin{proof}[Proof of \autoref{lem:parity-correct}.]
We highlight here the main differences from reachability \gamesAb{}.
 \begin{itemize}
      \item As parity objectives are prefix-independent, we associate to each history $h$ of a play $\rho \in \Wit{\sigma_0}$ a singleton $\Reg{h} = \prefStrat{h}$ instead of the triplet $(\won{h},\payoff{h},\prefStrat{h})$ as in the case of reachability. This is because $(\won{h},\payoff{h})$ does not make sense for prefix-independent objectives.
      \item For the definition of the compact witnesses, we proceed identically as for reachability by simply removing cycles inside each section with the exception of terminal sections. Given the terminal section $\pi[k]$ of a witness $\rho \in \Wit{\sigma_0}$, we replace it by a lasso $c \pi[k] = \pi'_1(\pi'_2)^\omega$ such that $c\pi[k]$ and $\pi[k]$ start at the same vertex, $\occ{c \pi[k]} = \occ{\pi[k]}$, $\infOcc{c \pi[k]} = \infOcc{\pi[k]}$, and $|\pi'_1\pi'_2|$ is quadratic in $|V|$~\cite[Proposition 3.1]{BouyerBMU15}. Therefore, by construction, the objectives $\Omega_i$ satisfied by a witness $\rho$ are exactly the same as for its corresponding compact play $c\rho$. 
  \end{itemize} 

We then construct the strategy $\tilde{\sigma}_0$ that produces the set $c \Wit{\sigma_0}$ of compact witnesses and after any deviation $gv$ from $c \Wit{\sigma_0}$ acts with the adequate punishing strategy $\punStrat{v}$ (as mentioned in Lemma~\ref{lem:parity-case}).
\end{proof}
 
Lemmas~\ref{lem:reach-correct} and~\ref{lem:parity-correct} lead to \autoref{prop:tildesigma}. Using this proposition, we are now able to prove our result on the \nexptime-membership of reachability and parity \gamesAb{}.

\begin{proof}[Proof of Theorem~\ref{thm:nexptime}.] 
We have established the existence of solutions to the \problemAb{} that use a finite memory bounded exponentially, both for reachability (Lemma~\ref{lem:reach-correct}) and for parity (Lemma~\ref{lem:parity-correct}) \gamesAb{}. Let $\sigma_0$ be such a solution. As it is finite-memory, we can guess it as a Moore machine $\mathcal{M}$ with a set of memory states at most exponential in the size of $\mathcal{G}$.

Let us explain how to verify that the guessed solution $\sigma_0$ is a solution to the \problemAb{} for parity objectives, i.e., every play in $\Playsigmazero$ which is \paretoOptimal{} satisfies the objective $\Omega_0$ of Player~$0$. First, we construct the cartesian product $G \times \mathcal{M}$ of the arena $G$ with the Moore machine $\mathcal{M}$ which is a graph whose infinite paths (starting from the initial vertex $v_0$ and the initial memory state) are exactly the plays consistent with $\sigma_0$. Second, to compute $\paretoSet{\sigma_0}$, we test for the existence of a play $\rho$ in $G \times \mathcal{M}$ with a given payoff $p = \payoff{\rho}$, beginning with the largest possible payoff $p = (1,\ldots,1)$ and finishing with the smallest possible one $p = (0,\ldots,0)$. Verifying this corresponds to deciding whether there exists a play that satisfies an intersection of parity objectives. The latter property can be checked in polynomial time in the size of $G \times \mathcal{M}$~\cite{EmersonL87}. Third, to check that each Pareto optimal play in $\Playsigmazero$ satisfies $\Omega_0$, we test for each $p \in \paretoSet{\sigma_0}$ whether there exists a play that satisfies the objectives $\Omega_i$ such that $p_i = 1$ as well as the objective $\Plays_G \setminus \Omega_0$. As the complement of a parity objective is again a parity objective, we use again the polynomial algorithm of~\cite{EmersonL87}. As a consequence we have a \nexptime{} algorithm for parity \gameAb{}s.

The case of reachability \gameAb{}s is solved similarly with the following two differences. Concerning the second step, the existence of a play in $G \times \mathcal{M}$ that satisfies an intersection of reachability objectives can be checked in polynomial time by first extending this graph with a Boolean vector in $\B^\nbrObjectives$ keeping track of the objectives of Player~$1$ already satisfied. Notice that the resulting graph is still of exponential size and that the intersection of reachability objectives becomes a single reachability objective. Concerning the third step, as the complement $\Plays_G \setminus \Omega_0$ of $\Omega_0$ is not a reachability objective, we rather remove vertices of $G \times \mathcal{M}$ that contains an element of the target set $T_0$ before checking whether there exists a play that satisfies the objectives $\Omega_i$ such that $p_i = 1$ for a given $p \in \paretoSet{\sigma_0}$.
\end{proof}

\subsection{\textsf{NP}-Completeness for Tree Arenas}
Before turning to the \nexptime-hardness of the \problemAb{} in the next section, we first want to show that the \problemAb{} is already \npComplete{} in the simple setting of reachability objectives and arenas that are trees. To do so, we use a reduction from the \setCover{} (\setCoverAb{}) which is \npComplete{}~\cite{Karp72}.
 
\begin{theorem} \label{thm:npcomplete}
    The \problemAb{} is \npComplete{} for reachability \gamesAb{} on tree arenas. 
\end{theorem}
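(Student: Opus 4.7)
The theorem splits into an NP-membership claim and an NP-hardness claim, which I would treat independently.

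For \textbf{NP-membership}, the key fact to exploit is that on a tree arena every history ends at a unique vertex, so any solution of Player~$0$ to the \problemAb{} can be taken \emph{memoryless}, i.e.\ a function $\sigma_0 : V_0 \to V$ of size polynomial in $|V|$. My algorithm would therefore nondeterministically guess such a $\sigma_0$ and verify in deterministic polynomial time that it is a solution. The plays consistent with $\sigma_0$ are in bijection with the leaves of the sub-tree obtained by keeping $\sigma_0(v)$ as the sole successor at each $v \in V_0$; there are at most $|V|$ of them, each of the form $h\,v^\omega$ with $v$ a leaf, and the payoff of each is read off from the targets crossed by $h$. Computing these payoffs, extracting $\paretoSet{\sigma_0}$ by pairwise comparison, and checking $\won{\rho}=1$ for every $\rho$ whose payoff is Pareto-optimal all run in polynomial time in $|V|$ and $\nbrObjectives$.

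For \textbf{NP-hardness}, I would reduce from the \setCover{}: given $U = \{u_1, \dots, u_n\}$, sets $S_1, \dots, S_m \subseteq U$ and $k \in \N$, decide whether some $k$ of the $S_i$ cover $U$. The tree I propose has a Player~$1$ root $v_0$ whose $k + n$ children are $k$ ``slot'' Player~$0$ vertices $v_1, \dots, v_k$ and $n$ ``element'' leaves $e_1, \dots, e_n$; each $v_s$ has $m$ leaf children $d_{s,1}, \dots, d_{s,m}$, one per set. Player~$1$ is equipped with $n+1$ reachability objectives: for $j \in \{1, \dots, n\}$, $\Omega_j$ has target $\{d_{s,i} \mid u_j \in S_i\} \cup \{e_j\}$, and a tag objective $\Omega_{n+1}$ has target the set of all $d_{s,i}$; Player~$0$'s objective $\Omega_0$ has the same target as $\Omega_{n+1}$. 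A memoryless $\sigma_0$ then amounts to a function $i : \{1, \dots, k\} \to \{1, \dots, m\}$, and the $k+n$ consistent plays are the slot plays $v_0\,v_s\,d_{s,i(s)}^\omega$ of payoff $(\mathrm{char}(S_{i(s)}),1)$, all winning for $\Omega_0$, and the element plays $v_0\,e_j^\omega$ of payoff $(\mathbf{1}_j,0)$, all losing for $\Omega_0$. Since $(\mathbf{1}_j,0)$ is strictly dominated by a slot payoff iff $u_j$ lies in some chosen $S_{i(s)}$, the strategy $\sigma_0$ is a solution iff the selected sets cover $U$.

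The \textbf{subtlety} I anticipate is calibrating the payoffs so that Pareto-dominance faithfully mirrors ``being covered''. The tag objective $\Omega_{n+1}$ is essential: without it, a singleton set $S_i = \{u_j\}$ would cause its slot play and the element play at $e_j$ to share the same payoff while having opposite $\Omega_0$-values, so the same Pareto-optimal payoff would carry both a winning and a losing witness, wrecking correctness. Once this detail is taken care of, both directions of the reduction are routine and the construction is clearly polynomial in $n$, $m$ and $k$.
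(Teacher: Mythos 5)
Your proposal is correct and follows essentially the same route as the paper: NP-membership by guessing a memoryless strategy on the tree and verifying it in polynomial time, and NP-hardness by a reduction from the \setCover{} in which each element yields a losing branch with an indicator payoff, each of $k$ Player-$0$ slots selects a set yielding a winning branch with the set's characteristic vector, and an extra objective (your ``tag'', the paper's $\ObjPlayer{n+1}=\reach{\{v_2\}}$) forces strict dominance exactly when an element is covered. The only differences are cosmetic (your tree is slightly flattened compared to the paper's two sub-arenas rooted at $v_1$ and $v_2$).
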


Notice that when the game arena is a tree, it is easy to design an algorithm for solving the \problemAb{} that is in \np{}. First, we nondeterministically guess a strategy $\sigma_0$ that can be assumed to be memoryless as the arena is a tree. Second, we apply a depth-first search algorithm from the root vertex which accumulates to leaf vertices the extended payoff of plays which are consistent with $\sigma_0$. Finally, we check that $\sigma_0$ is a solution.

Let us explain why the \problemAb{} is \npHard{} on tree arenas by reduction from the \setCoverAb{}. We recall that an instance of the \setCoverAb{} is defined by a set $C = \{e_1,e_2, \dots, e_n\}$ of $n$ elements, $m$ subsets $S_1, S_2, \dots, S_m$ such that $S_i \subseteq C$ for each $i\in \{1, \dots, m\}$, and an integer $\problemParam \leq m$. The problem consists in finding $\problemParam$ indexes $i_1, i_2, \dots, i_\problemParam$ such that the union of the corresponding subsets equals $C$, i.e., $C = \bigcup\limits_{j = 1}^\problemParam S_{i_j}$.

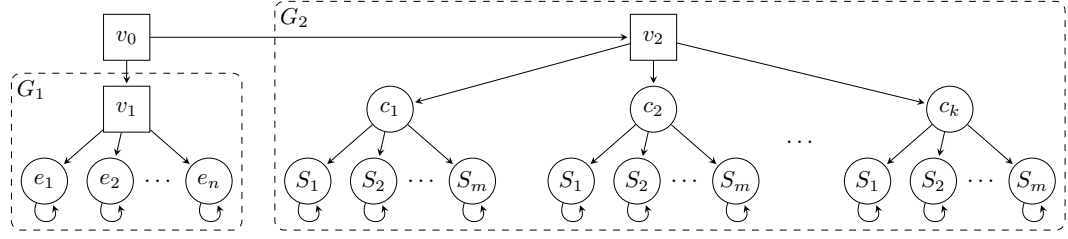
\begin{figure}
	\centering
	\resizebox{\textwidth}{!}{%
		
		\begin{tikzpicture}[->,>=stealth, shorten >=1pt,auto]
		
		\draw[dashed, rounded corners] (-1.5,0.15) rectangle (2, -2.25) {};
		\node[] at (-1.2,-0.1) {$G_1$};
		
		\node[draw, rectangle, minimum size=2em,inner sep=1] (elements) at (0.25,-0.4){$v_1$};
		\node[draw, circle, minimum size=2em,inner sep=1] (1) at (-1,-1.5){$e_1$};
		\node[draw, circle, minimum size=2em,inner sep=1] (2) at (0,-1.5){$e_2$};
		\node[] (dots) at (0.75,-1.5){$\dots$};
		\node[draw, circle, minimum size=2em,inner sep=1] (m) at (1.5,-1.5){$e_n$};
		
		\draw[-stealth] (elements) edge [] node {} (1);
		\draw[-stealth] (elements) edge [] node {} (2);
		\draw[-stealth] (elements) edge [] node {} (m);

		\draw[-stealth,in=290, out=250, looseness=4] (1) edge [ ] node {} (1);
		\draw[-stealth,in=290, out=250, looseness=4] (2) edge [ ] node {} (2);
		\draw[-stealth,in=290, out=250, looseness=4] (m) edge [ ] node {} (m);
		
        \draw[dashed, rounded corners] (2.5,1.25) rectangle (14.5, -2.25) {};
		\node[] at (2.8,1) {$G_2$};
        
		\node[draw, circle, minimum size=2em,inner sep=1] (choice_1) at (4.25,-0.4){$c_1$};
		\node[draw, circle, minimum size=2em,inner sep=1] (set_1) at (3,-1.5){$S_1$};
		\node[draw, circle, minimum size=2em,inner sep=1] (set_2) at (4,-1.5){$S_2$};
		\node[] (next_dots) at (4.75,-1.5){$\dots$};
		\node[draw, circle, minimum size=2em,inner sep=1] (set_m) at (5.5,-1.5){$S_m$};
		
		\draw[-stealth] (choice_1) edge [] node {} (set_1);
		\draw[-stealth] (choice_1) edge [] node {} (set_2);
		\draw[-stealth] (choice_1) edge [] node {} (set_m);
		\draw[-stealth,in=290, out=250, looseness=4] (set_1) edge [ ] node {} (set_1);
		\draw[-stealth,in=290, out=250, looseness=4] (set_2) edge [ ] node {}  (set_2);
		\draw[-stealth,in=290, out=250, looseness=4] (set_m) edge [ ] node {} (set_m);


		\node[draw, circle, minimum size=2em,inner sep=1] (choice_2) at (8.25,-0.4){$c_2$};
		\node[draw, circle, minimum size=2em,inner sep=1] (set_12) at (7,-1.5){$S_1$};
		\node[draw, circle, minimum size=2em,inner sep=1] (set_22) at (8,-1.5){$S_2$};
		\node[] (next_dots2) at (8.75,-1.5){$\dots$};
		\node[draw, circle, minimum size=2em,inner sep=1] (set_m2) at (9.5,-1.5){$S_m$};
		
		\draw[-stealth] (choice_2) edge [] node {} (set_12);
		\draw[-stealth] (choice_2) edge [] node {} (set_22);
		\draw[-stealth] (choice_2) edge [] node {} (set_m2);
		\draw[-stealth,in=290, out=250, looseness=4] (set_12) edge [ ] node {} (set_12);
		\draw[-stealth,in=290, out=250, looseness=4] (set_22) edge [ ] node {}  (set_22);
		\draw[-stealth,in=290, out=250, looseness=4] (set_m2) edge [ ] node {} (set_m2);
		
		\node[minimum size=0.8cm] (next_dots_inv) at (10.5,-0.9){$\dots$};


		\node[draw, circle, minimum size=2em,inner sep=1] (choice_t) at (12.75,-0.4){$c_\problemParam$};
		\node[draw, circle, minimum size=2em,inner sep=1] (set_1t) at (11.5,-1.5){$S_1$};
		\node[draw, circle, minimum size=2em,inner sep=1] (set_2t) at (12.5,-1.5){$S_2$};
		\node[] (next_dotst) at (13.25,-1.5){$\dots$};
		\node[draw, circle, minimum size=2em,inner sep=1] (set_mt) at (14,-1.5){$S_m$};
		
		\draw[-stealth] (choice_t) edge [] node {} (set_1t);
		\draw[-stealth] (choice_t) edge [] node {} (set_2t);
		\draw[-stealth] (choice_t) edge [] node {} (set_mt);
		\draw[-stealth,in=290, out=250, looseness=4] (set_1t) edge [] node {} (set_1t);
		\draw[-stealth,in=290, out=250, looseness=4] (set_2t) edge [] node {}  (set_2t);
		\draw[-stealth,in=290, out=250, looseness=4] (set_mt) edge [] node {} (set_mt);
		
        \node[draw, rectangle, minimum size=2em,inner sep=1] (root) at (0.25,0.7){$v_0$};
        \node[draw, rectangle, minimum size=2em,inner sep=1] (0) at (8.25,0.7){$v_2$};
		\draw[-stealth] (0) edge [] node {} (choice_1);
		\draw[-stealth] (0) edge [] node {} (choice_2);
		\draw[-stealth] (0) edge [] node {} (choice_t);
		
		\draw[-stealth] (root) edge [] node {} (elements);
		\draw[-stealth] (root) edge [] node {} (0);

		\end{tikzpicture}
		}%
	
	\caption{The tree arena used in the reduction from the \setCoverAb{}.}
	\label{scp_game}
\end{figure}

Given an instance of the \setCoverAb{}, we construct a game with an arena consisting of $n + \problemParam \cdot (m + 1) + 3$ vertices. The arena $G$ of the game is provided in Figure \ref{scp_game} and can be seen as two sub-arenas reachable from the initial vertex $v_0$. The game is such that there is a solution to the \setCoverAb{} if and only if Player~$0$ has a strategy from $v_0$ in $G$ which is a solution to the \problemAb{}. The game is played between Player~$0$ with reachability objective $\Omega_0$ and Player~$1$ with $n + 1$ reachability objectives. The objectives are defined as follows: $\ObjPlayer{0} = \reach{\{v_2\}}$, $\ObjPlayer{i} = \reach{\{e_i\} \cup \{S_j \mid e_i \in S_j\}}$ for $i \in \{1,2, \dots, n\}$ and $\ObjPlayer{n+1} = \reach{\{v_2\}}$. First, notice that every play in $G_1$ is consistent with any strategy of Player~$0$ and is lost by that player. It holds that for each $\ell \in \{1, 2, \dots, n\}$, there is such a play with payoff $(p_1, \ldots, p_{n+1})$ such that $p_\ell = 1$ and $p_j = 0$ for $j \neq \ell$. These payoffs correspond to the elements $e_\ell$ we aim to cover in the \setCoverAb{}. A play in $G_2$ visits $v_2$ and then a vertex $c$ from which Player~$0$ selects a vertex $S$. Such a play is always won by Player~$0$ and its payoff is $(p_1, \ldots, p_{n+1})$ such that $p_{n+1} = 1$ and $p_r = 1$ if and only if the element $e_r$ belongs to the set $S$. It follows that the payoff of such a play corresponds to a set of elements in the \setCoverAb{}. It is easy to see that the following proposition holds and it follows that, as a consequence, Theorem~\ref{thm:npcomplete} holds.

\begin{restatable}{proposition}{proptreenphard}
\label{prop:tree_nphard}
    There is a solution to an instance of the \setCoverAb{} if and only if Player~$0$ has a strategy from $v_0$ in the corresponding \gameAb{} that is a solution to the \problemAb{}.
\end{restatable}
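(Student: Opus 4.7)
The plan is to show that a strategy $\sigma_0$ of Player~$0$ in the constructed \gameAb{} is a solution to the \problemAb{} if and only if the sets chosen by $\sigma_0$ at $c_1,\ldots,c_\problemParam$ form a set cover of $C$ of size~$\problemParam$.

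First I would note that, since the arena is a tree and Player~$0$ only controls the vertices $c_1,\ldots,c_\problemParam$, each of which appears in exactly one history from $v_0$, any strategy $\sigma_0$ is uniquely determined by a tuple of sets $S_{i_1},\ldots,S_{i_\problemParam}$, one picked at each $c_j$. The consistent plays then split into two families: the $G_1$-plays $v_0 v_1 (e_\ell)^\omega$ for $\ell \in \{1,\ldots,n\}$, which are lost by Player~$0$ and whose payoff $q^{(\ell)}$ has a single~$1$, at coordinate~$\ell$; and the $G_2$-plays $v_0 v_2 c_j (S_{i_j})^\omega$ for $j \in \{1,\ldots,\problemParam\}$, which are all won by Player~$0$ and whose payoff $p^{(j)}$ has a~$1$ at coordinate $n+1$ together with a~$1$ at each coordinate $r$ such that $e_r \in S_{i_j}$.

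Next I would carry out a short dominance analysis. Because every $G_2$-payoff has a~$1$ at coordinate $n+1$ that every $G_1$-payoff lacks, the only way to strictly dominate $q^{(\ell)}$ within the consistent plays is through some $G_2$-payoff $p^{(j)}$, and this happens exactly when $p^{(j)}_\ell = 1$, i.e.\ when $e_\ell \in S_{i_j}$. Distinct $G_1$-payoffs are pairwise incomparable, so no $G_1$-play ever dominates another. Therefore $\sigma_0$ is a solution to the \problemAb{} (equivalently, no losing consistent play is Pareto-optimal) if and only if, for every $\ell \in \{1,\ldots,n\}$, at least one chosen set $S_{i_j}$ contains $e_\ell$, which is precisely the condition that $S_{i_1},\ldots,S_{i_\problemParam}$ cover $C$. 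Both implications of the proposition then follow, and together with the \np{} membership sketched in the paragraph preceding the proposition this yields \autoref{thm:npcomplete}.

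I do not expect a genuine obstacle here: the whole argument is a routine case analysis, and the delicate design decision was made in the construction of the arena and objectives. In particular the auxiliary target $\{v_2\}$ for $\ObjPlayer{n+1}$ is precisely what guarantees that every $G_2$-payoff is incomparable-from-above to every $G_1$-payoff, so that dominance of $q^{(\ell)}$ reduces cleanly to the coverage of $e_\ell$ by some selected set.
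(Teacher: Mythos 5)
Your argument is correct and is essentially the paper's own proof: both rest on the observation that, because coordinate $n+1$ separates the $G_2$-payoffs from the $G_1$-payoffs, a $G_1$-payoff with its single $1$ at coordinate $\ell$ is strictly dominated by a consistent play exactly when some selected set $S_{i_j}$ contains $e_\ell$. You merely package the two implications into one dominance analysis (adding the harmless remark that distinct $G_1$-payoffs are incomparable), whereas the paper writes the two directions out separately; the content is the same.
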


\begin{proof}[Proof of \autoref{prop:tree_nphard}.]
    First, let us assume that there is a solution to the \setCoverAb{}. It holds that there exists a set of $\problemParam$ indexes $i_1, i_2, \dots, i_\problemParam$ such that the union of the corresponding sets equals the set $C$ of elements we aim to cover. We define the strategy $\sigma_0$ as follows: $\sigma_0(v_0 v_2 c_j) = S_{i_j}$. Let us show that this strategy is solution to the \problemAb{} by showing that any play with a \paretoOptimal{} payoff is won by Player~$0$. This amounts to showing that for every play in $G_1$ there is a play in $G_2$ with a strictly larger payoff. This is sufficient as it makes sure that the payoff of plays in $G_1$ are not \paretoOptimal{} and as every play in $G_2$ is won by Player~$0$. Let $p =(p_1, \ldots, p_{n+1})$ be the payoff of a play in $G_1$. It holds that $p_\ell = 1$ for some $\ell \in \{1, 2, \dots, n\}$ and $p_j = 0$ for $\ell \neq j$. This corresponds to the element $e_\ell$ in $C$. Since the $\problemParam$ indexes $i_1, i_2, \dots, i_\problemParam$ are a solution to the \setCoverAb{}, it holds that there exists some index $i_j$ such that $e_\ell \in S_{i_j}$. It also holds that the play $v_0 v_2 c_j (S_{i_j})^\omega$ is consistent with $\sigma_0$. Its payoff is $p' = (p'_1, \ldots, p'_{n+1})$ with $p'_\ell = 1$ since $e_\ell \in S_{i_j}$ and $p'_{n+1} = 1$. It follows that payoff $p'$ is strictly larger than $p$. 
    
    Now, let us assume that Player~$0$ has a strategy $\sigma_0$ from $v_0$ that is a solution to the \problemAb{}. We can show that the set of indexes $\{i_j \mid \sigma_0(v_0 v_2 c_j) = S_{i_j}, j \in \{1, \dots, \problemParam \}\}$ is a solution to the \setCoverAb{}. It is easy to see that since strategy $\sigma_0$ is a solution to the \problemAb{}, every payoff $p$ in $G_1$ is strictly smaller than some payoff $p'$ in $G_2$. It follows that in the \setCoverAb{}, each element $e \in C$ corresponding to $p$ is contained in some set $S$ corresponding to $p'$. Since it also holds that $S \subseteq C$ for each set $S$, it follows that the sets mentioned above are an exact cover of $C$.
\end{proof}

\subsection{\textsf{NEXPTIME}-Hardness}

Let us come back to regular game arenas and show the \nexptime{}-hardness result for both reachability and parity \gamesAb{}. Each type of objective is studied in a dedicated subsection. 

\begin{restatable}{theorem}{thmnexptimehardreach}
\label{thm:nexptimehard-reach}
    The \problemAb{} is \nexptimeHard{} for reachability \gamesAb{}.  
\end{restatable}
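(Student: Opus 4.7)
The plan is to reduce from the \succinctSetCoverAb{}, which by \autoref{thm:ssc-completeness} is \nexptimeComplete{}. An \succinctSetCoverAb{} instance is given by integers $n$ and $k$ together with polynomial-size descriptions (e.g., Boolean circuits) of $m$ candidate subsets $S_1, \ldots, S_m$ of the universe $C = \{0,1\}^n$, and asks whether $k$ of them cover $C$. The idea is to adapt the tree construction of \autoref{prop:tree_nphard} to this succinct setting by replacing the explicit branching over elements and subsets with bit-by-bit choices of the players, so that both the arena and the number of reachability objectives remain polynomial in the input.

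Concretely, the reachability \gameAb{} uses $2n + 2$ reachability objectives: $\Omega_0$, a distinguishing objective $\Omega_{\mathrm{sec}}$, and bit objectives $\Omega_{i, b}$ for $i \in \{1, \ldots, n\}$, $b \in \{0,1\}$. Its arena has two sub-arenas $G_1$ and $G_2$ reachable from the initial Player-$1$ vertex $v_0$. In $G_1$, Player~$1$ picks the bits $b_1, \ldots, b_n$ of some $e \in C$ one after the other, each choice visiting the target of $\Omega_{i, b_i}$; the play then enters a Player-$0$-losing sink that visits neither $\Omega_0$ nor $\Omega_{\mathrm{sec}}$, so the $G_1$-payoff of $e$ is a fixed vector $q_e$ with exactly $n$ ones. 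In $G_2$, Player~$1$ first selects a slot $\ell \in \{1, \ldots, k\}$, Player~$0$ then selects a subset index $i_\ell$ bit by bit, and finally a polynomial-size gadget built from the succinct description of $S_{i_\ell}$ allows Player~$1$ to exhibit a challenge $e'$ together with a proof that $e' \in S_{i_\ell}$: in a standard alternating circuit-to-game simulation, Player~$1$ resolves the OR-gates and Player~$0$ the AND-gates, while the input leaves commit the bits of $e'$ by visiting the targets of $\Omega_{i, b'_i}$. Every $G_2$-play eventually visits the targets of $\Omega_0$ and $\Omega_{\mathrm{sec}}$ and enters a Player-$0$-winning sink.

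With this design, an ``honest'' $G_2$-play has a payoff strictly dominating $q_e$ iff the challenge $e'$ equals $e$ and $e \in S_{i_\ell}$ for some chosen $i_\ell$. Consequently, $\sigma_0$ is a solution to the \problemAb{} iff every $q_e$ is dominated by some $G_2$-payoff among plays consistent with $\sigma_0$, which is equivalent to the $k$ subsets chosen by $\sigma_0$ covering $C$ --- a solution of the \succinctSetCoverAb{} instance. I expect the main obstacle to lie in the design of the verification gadget: it must faithfully simulate the succinct membership test in polynomial size and, crucially, prevent Player~$1$ from forging an ``over-weight'' payoff --- for instance by inconsistently visiting both $\Omega_{i,0}$ and $\Omega_{i,1}$, or by skipping circuit inputs --- that would spuriously dominate $G_1$-payoffs and make $\sigma_0$ a solution for non-covers. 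This will be handled by a bit-commitment sub-gadget enforcing that each bit of $e'$ is committed exactly once together with ``cheating'' branches redirected to sinks whose payoffs are strictly dominated by honest ones and therefore never \paretoOptimal{}.
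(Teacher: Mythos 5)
Your overall architecture is the same as the paper's: reduce from the \succinctSetCoverAb{}, split the arena into a Player-$0$-losing part where Player~$1$ names an element bit by bit (its payoff recorded through per-bit reachability objectives) and a Player-$0$-winning part where the $k$ chosen sets live, and encode ``the chosen sets cover the universe'' as ``every problematic payoff is strictly dominated by a consistent play in the winning part, hence not \paretoOptimal{}''. However, the step you yourself flag as the main obstacle --- the membership-verification gadget --- is where the proposal genuinely breaks, and the fix you sketch does not repair it. In an alternating gate-resolution game, a single play traces one root-to-leaf path of the circuit and therefore visits exactly one input leaf; its payoff sets at most one of the $2n$ bit objectives, so it can never dominate $q_e$, which has $n$ of them set. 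If you instead commit all $n$ bits of $e'$ in a separate phase and then run the evaluation game, you face two further problems: a single play still explores only one branch per AND-gate (Player~$0$'s strategy fixes the others), so it cannot certify that the whole circuit evaluates to true; and there is no mechanism for a reachability payoff to detect that the leaf reached by the evaluation path is inconsistent with the earlier bit commitment. Pareto-domination can only read off information that is visible in the set of targets visited by one play, so the verification must be ``flattened'' into the payoff vector rather than played out as a game. This is exactly how the paper resolves it: the succinct sets are given by a CNF formula $\psi$ over $X \cup Y$, there is one reachability objective per clause whose target set is the set of literal vertices of that clause, and the problematic payoffs are arranged to have \emph{all} clause objectives of $\psi$ already satisfied (via the vertex $v_1$), so that domination forces the covering play to visit a satisfying literal of every clause --- no evaluation game and no consistency gadget are needed. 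Without restricting to a formula-like representation and adding one objective per clause (or per gate), your construction does not go through.

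Two secondary points. First, as literally stated your source problem lists $m$ candidate sets explicitly; then a cover is described by $k\le m$ indices, i.e.\ a polynomial-size witness, and the problem is decidable in deterministic exponential time (enumerate all index sets, check coverage over $\{0,1\}^n$), so it cannot be \nexptimeHard{} unless $\mathsf{EXPTIME}=\mathsf{NEXPTIME}$. For hardness the family of candidate sets must itself be exponential and succinctly indexed --- in the paper the sets are $\llbracket \psi[val_Y]\rrbracket$ for the $2^{|Y|}$ valuations $val_Y$, and the witness $K$ is exponentially large. Your ``index chosen bit by bit'' hints at this but needs to be made explicit. Second, since $k$ is given in binary, ``Player~$1$ selects a slot $\ell\in\{1,\dots,k\}$'' requires a dedicated polynomial-size gadget providing exactly $k$ distinct paths (the paper's $Q_k$); this is routine but cannot be omitted.
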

\begin{restatable}{theorem}{thmnexptimehardparity}
\label{thm:nexptimehard-parity}
    The \problemAb{} is \nexptimeHard{} for parity \gamesAb{}.  
\end{restatable}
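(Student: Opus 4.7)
The plan is to prove NEXPTIME-hardness of parity SPS by a polynomial reduction from the already-hard reachability SPS problem (\autoref{thm:nexptimehard-reach}). The key observation is that the game arena produced by the reachability reduction can be re-interpreted as a parity SP game via a simple change of objectives, provided every relevant play eventually settles into an infinite self-loop on a single vertex (a property that already holds in the tree-case construction of \autoref{thm:npcomplete} and is expected to hold in the reduction underlying \autoref{thm:nexptimehard-reach} as well).

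Concretely, given a reachability SP game $\mathcal{G} = (G, \reach{T_0}, \reach{T_1}, \ldots, \reach{T_t})$ built from a Succinct Set Cover instance as in \autoref{thm:nexptimehard-reach}, I would construct the parity SP game $\mathcal{G}' = (G, \parity{c_0}, \parity{c_1}, \ldots, \parity{c_t})$ on essentially the same arena, with priority functions defined by $c_i(v) = 0$ if $v \in T_i$ and $c_i(v) = 1$ otherwise for each $i \in \{0, 1, \ldots, t\}$. Under the eventual-self-loop property of $G$, for any play $\rho$ and any $i$, $\rho$ visits $T_i$ iff it loops on some vertex of $T_i$ iff $\min_{v \in \infOcc{\rho}}(c_i(v)) = 0$ iff $\rho \in \parity{c_i}$. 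Consequently, the extended payoff $(\won{\rho}, \payoff{\rho})$ of every play is identical in $\mathcal{G}$ and $\mathcal{G}'$, so the set $\paretoSet{\sigma_0}$ of Pareto-optimal payoffs is preserved under every strategy $\sigma_0$, and $\sigma_0$ is a solution to SPS in $\mathcal{G}$ iff it is a solution to SPS in $\mathcal{G}'$.

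If the reachability construction does not already enforce the loop-at-a-single-vertex property, I would add, at each vertex of $G$, a self-loop controlled by its owner, allowing each player to commit to stopping exploration at the current vertex. I would then have to verify that this arena modification does not enable Player~$1$ to realize new Pareto-optimal payoffs that would break the equivalence between $\mathcal{G}$ and $\mathcal{G}'$. The main obstacle is precisely this: if adding self-loops lets Player~$1$ stop at a vertex whose payoff is incomparable to the previously Pareto-optimal payoffs, then $\paretoSet{\sigma_0}$ changes and the reduction breaks. To sidestep this, I would inspect the reachability construction of \autoref{thm:nexptimehard-reach} directly and argue either (i) that the construction already has the lasso structure needed for the translation, or (ii) that the extra self-loops only produce dominated payoffs, so that the Pareto frontier is unaffected. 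In either case the reduction is polynomial-time since no new vertices (or at most $|V|$ new edges) are introduced and the priority functions $c_i$ are computable in constant time per vertex, and the \nexptime-hardness from \autoref{thm:nexptimehard-reach} transfers to parity \gamesAb{}.
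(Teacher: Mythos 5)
There is a genuine gap, and it sits exactly at the step you flag as an "observation": the claim that, for the arena of the reachability reduction, a play $\rho$ visits $T_i$ if and only if it loops on some vertex of $T_i$. This is false for the actual construction. In the arena of Figure~\ref{sscp_game_long} every play does eventually self-loop on a single vertex ($x_m$ or $\neg x_m$), but the target sets of the reachability objectives contain many vertices that are visited only \emph{transiently}: $\Omega_0$ and $\Omega_1$ target $\{v_2, v_3\}$, the objectives $\Omega_{x_i}, \Omega_{\neg x_i}$ for $i < m$ target literal vertices passed through on the way to the terminal loop, and the clause objectives $\Omega_{C_j}, \Omega_{D_j}$ target literal vertices and the $i_j$ vertices, none of which lie in $\infOcc{\rho}$. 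The same already happens in the tree-arena construction of \autoref{thm:npcomplete}, where $\Omega_0 = \reach{\{v_2\}}$ and $v_2$ is never a loop vertex. Consequently your priority functions $c_i$ (even on the loop vertex, odd elsewhere) compute the wrong payoff for essentially every objective, the Pareto sets $\paretoSet{\sigma_0}$ are not preserved, and the equivalence between $\mathcal{G}$ and $\mathcal{G}'$ breaks. Your fallback of adding self-loops everywhere does not repair this: the payoff of a reachability play is accumulated over its entire prefix, and no assignment of priorities to the \emph{existing} vertices can make a prefix-independent condition recover prefix-dependent information. The generic fix --- taking the product with the set of objectives already satisfied --- costs a factor $2^{t+1}$ and is therefore not a polynomial reduction.

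This is precisely the obstacle the paper's proof is organized around. Rather than reusing the reachability arena, it rebuilds each sub-arena with loops back to its entry vertex so that the players' choices are made \emph{infinitely often}; it penalizes oscillation with the priority scheme $c_{l_i}(l_i)=2$, $c_{l_i}(\neg l_i)=1$, $c_{l_i}(v)=3$, so that a play that does not settle on a valuation gets a dominated payoff; and it replaces each per-clause reachability objective by one parity objective \emph{per literal} of the clause, since "the clause was satisfied at some point" has no prefix-independent analogue. If you want to salvage a reduction-from-reachability strategy, you would have to first transform the reachability instance so that all payoff-relevant information is reflected in $\infOcc{\rho}$, which amounts to redoing this construction rather than invoking \autoref{thm:nexptimehard-reach} as a black box.
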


The \nexptime{}-hardness is obtained thanks to the succinct variant of the \setCoverAb{} presented below.

\subsubsection{Succinct Set Cover Problem}

The \emph{\succinctSetCover{} (\succinctSetCoverAb{})} is defined as follows. We are given a Conjunctive Normal Form (CNF) formula $\phi = C_1 \land C_2 \land \dots \land C_p$ over the variables $X = \{x_1, x_2, \ldots, x_m\}$ made up of $p$ clauses, each containing some disjunction of literals of the variables in $X$. The set of valuations of the variables $X$ which satisfy $\phi$ is written $\llbracket \phi \rrbracket$. We are also given an integer $\problemParam \in \mathbb{N}$ (encoded in binary) and an other CNF formula $\psi = D_1 \land D_2 \land \dots \land D_q$ over the variables $X \cup Y$ with $ Y = \{y_1, y_2, \ldots, y_n \}$, made up of $q$ clauses. Given a valuation $val_Y: Y \rightarrow \{0, 1\}$ of the variables in $Y$, called a \emph{partial valuation}, we write $\psi[val_Y]$ the CNF formula obtained by replacing in $\psi$ each variable $y \in Y$ by its valuation $val_Y(y)$. We write $\llbracket \psi[val_Y] \rrbracket$ the valuations of the remaining variables $X$ which satisfy $\psi[val_Y]$. The \succinctSetCoverAb{} is to decide whether there exists a set $K = \big{\{}val_Y \mid val_Y: Y \rightarrow \{0, 1\} \big{\}}$ of $\problemParam$ valuations of the variables in $Y$ such that the valuations of the remaining variables $X$ which satisfy the formulas $\psi[val_Y]$ include the valuations of $X$ which satisfy $\phi$. Formally, we write this $\llbracket \phi \rrbracket \subseteq \bigcup\limits_{val_Y \in K} \llbracket \psi[val_Y] \rrbracket$. 

We can show that this corresponds to a set cover problem succinctly defined using CNF formulas. The set $\llbracket \phi \rrbracket$ of valuations of $X$ which satisfy $\phi$ corresponds to the set of elements we aim to cover. Parameter $\problemParam$ is the number of sets that can be used to cover these elements. Such a set is described by a formula $\psi[val_Y]$, given a partial valuation $val_Y$, and its elements are the valuations of $X$ in $\llbracket \psi[val_Y] \rrbracket$. This is illustrated in the following example.

\begin{example}
\label{example_phi}
Consider the CNF formula $\phi = (x_1 \lor \neg x_2) \land (x_2 \lor x_3)$ over the variables $X = \{x_1, x_2, x_3\}$. The set of valuations of the variables which satisfy $\phi$ is $\llbracket \phi \rrbracket = \{(1,1,1), (1,1,0), (1,0,1), (0,0,1)\}$. Each such valuation corresponds to one element we aim to cover. Consider the CNF formula $\psi = (y_1 \lor y_2) \land (x_1 \lor y_2) \land (x_2 \lor x_3 \lor y_1)$ over the variables $X \cup Y$ with $Y = \{y_1, y_2\}$. Given the partial valuation $val_Y$ of the variables in $Y$ such that $val_Y(y_1) = 0$ and $val_Y(y_2) = 1$, we get the CNF formula $\psi[val_Y] = (0 \lor 1) \land (x_1 \lor 1) \land (x_2 \lor x_3 \lor 0)$. This formula describes the contents of the set identified by the partial valuation (as a partial valuation yields a unique formula). The valuations of the variables $X$ which satisfy $\psi[val_Y]$ are the elements contained in the set. In this case, these elements are $\llbracket \psi[val_Y] \rrbracket = \{(0,1,0), (0,0,1), (0,1,1), (1,1,0), (1,0,1), (1,1,1)\}$. We can see that this set contains the elements $\{(1,1,1), (1,1,0), (1,0,1), (0,0,1)\}$ of $\llbracket \phi \rrbracket$.
\end{example}

The following result is used in the proof of our \nexptime{}-hardness results and is of potential independent interest.

\begin{restatable}{theorem}{thmssccompleteness}
\label{thm:ssc-completeness}
The \succinctSetCoverAb{} is \nexptimeComplete{}.
\end{restatable}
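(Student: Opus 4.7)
My plan has two directions. For the \nexptime{} upper bound, I would design a nondeterministic algorithm that first guesses a candidate cover, i.e., a list of $k$ partial valuations $val_Y^1,\ldots,val_Y^k$ of $Y$. Since $k$ is given in binary, $k$ itself may be exponential in the bit-size of the input; together with the polynomial size of each $val_Y^i$, this guess has exponential size and is thus producible in \nexptime{}. To verify the guess deterministically, I would enumerate all $2^{|X|}$ valuations $val_X$ of $X$, and for each one first evaluate $\phi(val_X)$ and, if this returns true, then evaluate $\psi[val_Y^i](val_X)$ for $i=1,\ldots,k$ until a satisfied one is found. The enumeration costs $2^{|X|}$ iterations, each running in time polynomial in $|\psi|\cdot k$, so the total verification time is singly exponential in the input size, establishing membership in \nexptime{}.

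For the \nexptime{} lower bound, I would reduce from a canonical \nexptime-complete problem, taking advantage of the fact that SSC is essentially a succinct (CNF-compressed) encoding of the NP-complete Set Cover problem; by the Papadimitriou--Yannakakis succinctness principle, one expects the succinct version of an NP-complete problem under sufficiently uniform reductions to be \nexptime-complete. Concretely, I would reduce from the acceptance problem of a nondeterministic Turing machine $M$ running in time $T = 2^{p(n)}$ on an input $x$ of size $n$. By a Cook--Levin construction at the exponential scale, $M$ accepts $x$ iff an exponentially large 3CNF formula $\Phi_x$, whose variables index cells of the $T\times T$ computation tableau, is satisfiable. Because $M$ and the reduction are uniform, the positions, contents, and clauses of $\Phi_x$ are all computable from their polynomially-many index bits by fixed Boolean formulas, so $\Phi_x$ admits a succinct CNF description of polynomial size. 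The reduction would then encode the universe $\llbracket\phi\rrbracket$ as the set of ``violation events'' (clause-index, assignment-snippet) pairs of $\Phi_x$, and encode each partial valuation $val_Y$ as a local piece of a purported satisfying assignment, with $\psi$ asserting that the piece satisfies every locally-relevant clause; the parameter $k$ is set so that exactly one choice of $val_Y$ per tableau row (or per variable block) is available. A yes-instance for SSC then corresponds precisely to a satisfying assignment of $\Phi_x$, hence to an accepting computation of $M$ on $x$.

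The main obstacle will be the third step: crafting the two CNFs $\phi,\psi$ so that the cover structure faithfully captures satisfiability of an exponentially large formula. The framework of SSC is rigid: it offers only an outer existential quantifier over $k$ partial $Y$-valuations and a single universal quantifier over $X$-valuations, with membership in each set defined by a plain CNF (no auxiliary existentials or Tseitin variables for free). I would therefore have to either absorb the auxiliary variables of a Tseitin-style encoding into $X$ (and use $\phi$ to force them into consistent values on the universe) or into $Y$ (and use $k$ to sweep over all aux assignments for each ``real'' set index). Balancing these choices so that the resulting $\phi$ stays polynomial, $\psi$ stays polynomial, and the cover bound $k$ matches the intended number of assignment pieces is where the main technical work lies. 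Once this alignment is in place, correctness of the reduction reduces to routine verification that covering $\llbracket\phi\rrbracket$ is equivalent to satisfying every clause of $\Phi_x$, and hence to acceptance of $M$ on $x$, completing \nexptime-hardness and therefore \nexptime-completeness of SSC.
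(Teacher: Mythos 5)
Your \nexptime{} membership argument is fine and matches what the paper needs: guess the $\problemParam$ partial valuations (an exponential-size certificate, since $\problemParam$ is given in binary) and verify by brute-force enumeration of the $2^{|X|}$ valuations of $X$ in deterministic exponential time.

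The hardness direction, however, has a genuine gap: you never actually construct the reduction. You correctly identify that the rigid $\exists$-$\forall$-$\exists$ shape of the \succinctSetCoverAb{} (choose $\problemParam$ partial $Y$-valuations, then for every $X$-valuation in $\llbracket\phi\rrbracket$ some chosen $\psi[val_Y]$ must hold) does not obviously accommodate a Tseitin-style succinct Cook--Levin encoding --- one must somehow distribute the auxiliary variables between $X$ and $Y$, force consistency of the assignment pieces across the $\problemParam$ chosen valuations, and make $\problemParam$ count exactly the intended blocks --- and you explicitly defer all of this as ``the main technical work.'' That deferred step is precisely the content of the hardness proof; without exhibiting $\phi$, $\psi$ and $\problemParam$ and proving the equivalence, the argument is a research plan rather than a proof. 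The appeal to a general ``succinctness principle'' does not close the gap either: such transfer theorems apply to succinct encodings by circuits of the \emph{instance} of an \np{}-complete problem under very restricted reductions, whereas here the universe and the sets are defined semantically as satisfying assignments of CNFs and $\problemParam$ is a binary-encoded count, so completeness does not follow generically.

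The paper avoids all of this by reducing from the \dominatingSetAb{}, already known to be \nexptimeComplete{} for graphs succinctly given by a CNF $\theta(X,Y)$: take $\phi$ to be the empty formula over $X$ (so $\llbracket\phi\rrbracket$ is all $2^n$ valuations, i.e., all vertices), take $\psi$ to be a CNF equivalent of $\theta(X,Y)\lor\theta(Y,X)$ (so $\llbracket\psi[val_Y]\rrbracket$ is the neighbourhood of the vertex $val_Y$), and keep the same $\problemParam$. This is a one-paragraph reduction with an immediate correctness proof. If you want to keep your direct machine-based route, you would need to either carry out the tableau encoding in full or first establish \nexptime{}-hardness of some succinct combinatorial problem (as the cited literature does for succinct dominating set) and reduce from that instead.
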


\begin{proof}[Proof of \autoref{thm:ssc-completeness}.] 
It is easy to see that the \succinctSetCoverAb{} is in \nexptime{}. We can show that the \succinctSetCoverAb{} is \nexptimeHard{} by reduction from the \emph{\dominatingSet{}} (\dominatingSetAb{}) which is known to be \nexptimeComplete{} for graphs \emph{succinctly} defined using CNF formulas \cite{DasST17}. An instance of the \dominatingSetAb{} is defined by a CNF formula $\theta$ over two sets of $n$ variables $X = \{x_1, x_2, \dots, x_n\}$ and $Y=\{y_1, y_2,\dots, y_n\}$ and an integer $\problemParam$ (encoded in binary). The formula $\theta$ succinctly defines an undirected graph in the following way. The set of vertices is the set of all valuations of the $n$ variables in $X$ (or over the $n$ variables in $Y$) of which there are $2^n$. Let $val_X$ and $val_Y$ be two such valuations, representing two vertices. Then, there is an edge between $val_X$ and $val_Y$ if and only if $\theta[val_X, val_Y]$ or $\theta[val_Y, val_X]$ is true. An instance of the \dominatingSetAb{} is positive if there exists a set $K = \{val^1_X, val^2_X, \dots, val^\problemParam_X\}$ of $\problemParam$ valuations of the variables in $X$, corresponding to $\problemParam$ vertices, such that all vertices in the graph are adjacent to a vertex in $K$. Formally, we write this $|\bigcup\limits_{val_X \in K} \{ val_Y \mid \theta[val_X, val_Y] \lor  \theta[val_Y, val_X] \mbox{ is true} \}| = 2^n$.

The \dominatingSetAb{} can be reduced in polynomial time to the \succinctSetCoverAb{} as follows. We define the CNF formula $\phi$ over the set of variables $X$ such that the formula is empty. Therefore, the set $\llbracket \phi\rrbracket$ is equal to the $2^n$ valuations of the variables in $X$. We then define the CNF formula $\psi$ over the set of variables $X$ and $Y$ such that it is the CNF equivalent to $\theta(X, Y) \lor \theta(Y, X)$.  
The latter formula has a size which is polynomial in the size of the CNF formula $\theta$ which defines the graph. We keep the same integer $\problemParam$. Then, it is direct to see that the instance of the \dominatingSetAb{} is positive if and only if the instance of \succinctSetCoverAb{} is positive. Indeed, there is a positive instance to the \dominatingSetAb{} if and only if there exists a set $K$ of $\problemParam$ valuations of the variables in $Y$ such that $\llbracket \phi \rrbracket \subseteq \bigcup\limits_{val_Y \in K} \llbracket \psi[val_Y] \rrbracket$.
\end{proof}

\subsubsection{\textsf{NEXPTIME}-Hardness of Reachability SP Games}

We now describe in details our reduction from the \succinctSetCoverAb{} which allows us to show the \nexptime-hardness of solving the \problemAb{} in reachability \gamesAb{}.\\

Given an instance of the \succinctSetCoverAb{}, we construct a reachability \gameAb{} with arena $G$ consisting of a polynomial number of vertices in the number of clauses and variables in the formulas $\phi$ and $\psi$ and in the length of the binary encoding of the integer $k$. This reduction is such that there is a solution to the \succinctSetCoverAb{} if and only if Player~$0$ has a strategy from $v_0$ in $G$ which is a solution to the \problemAb{}. The arena $G$, provided in Figure \ref{sscp_game_long}, can be viewed as three sub-arenas reachable from $v_0$. We call these sub-arenas $G_1$, $G_2$ and $G_3$. Sub-arena $G_3$ starts with a gadget $Q_\problemParam$ whose vertices belong to Player~$1$ and which provides exactly $k$ different paths from $v_0$ to $v_3$.

\subparagraph*{Gadget $Q_\problemParam$.} 
Parameter $\problemParam$ can be represented in binary using $r = \lfloor log_2(\problemParam) \rfloor + 1$ bits. It also holds that the binary encoding of $\problemParam$ corresponds to the sum of at most $r$ powers of 2. Given the binary encoding $b_0 b_1 \dots b_{r-1}$ of $\problemParam$ such that $b_i \in \B$, let $ones = \{ i \in \{0, \dots, r-1\} \mid b_i = 1\}$. It holds that $\problemParam = \sum_{i \in ones}^{} 2^i$. Our gadget $Q_\problemParam$ is a graph with a polynomial number of vertices (in the length of the binary encoding of $k$) such that all these vertices belong to Player~$1$. For each $i \in ones$ there is $2^i$ different paths from the initial vertex $\alpha$ to vertex $\beta$. Therefore, it holds that in $Q_\problemParam$ there are $\problemParam$ different paths from vertex $\alpha$ to vertex $\beta$. 
\begin{example}
Let $\problemParam = 11$, it holds that it can be represented in binary using $\lfloor log_2(11) \rfloor + 1 = 4$ bits. The binary representation of $11$ is $1011$ and it can be obtained by the following sum $2^3 + 2^1 + 2^0$. The gadget $Q_{11}$ is detailed in Figure \ref{gadget}.

\begin{figure}
	\centering
	\resizebox{0.6\textwidth}{!}{%
		
		\begin{tikzpicture}
		
		\node[draw, rectangle, minimum size=0.8cm] (al) at (-1,0.5){$\alpha$};
		
		\node[draw, rectangle, minimum size=0.8cm] (t1) at (2.5,0.5){};

		\node[draw, rectangle, minimum size=0.8cm] (m1) at (1.5,2.5){};
		\node[draw, rectangle, minimum size=0.8cm] (m11) at (2.5,1.5){};
		\node[draw, rectangle, minimum size=0.8cm] (m12) at (2.5,3.5){};
		\node[draw, rectangle, minimum size=0.8cm] (m2) at (3.5,2.5){};
		
		\node[draw, rectangle, minimum size=0.8cm] (b1) at (1.5,-1.5){};
		\node[draw, rectangle, minimum size=0.8cm] (b11) at (2.5,-0.5){};
		\node[draw, rectangle, minimum size=0.8cm] (b12) at (2.5,-2.5){};

		\node[draw, rectangle, minimum size=0.8cm] (b2) at (3.5,-1.5){};
		\node[draw, rectangle, minimum size=0.8cm] (b21) at (4.5,-0.5){};
		\node[draw, rectangle, minimum size=0.8cm] (b22) at (4.5,-2.5){};
		
		\node[draw, rectangle, minimum size=0.8cm] (b3) at (5.5,-1.5){};
		\node[draw, rectangle, minimum size=0.8cm] (b31) at (6.5,-0.5){};
		\node[draw, rectangle, minimum size=0.8cm] (b32) at (6.5,-2.5){};
		
		\node[draw, rectangle, minimum size=0.8cm] (b4) at (7.5,-1.5){};
		
	    \node[draw, rectangle, minimum size=0.8cm] (be) at (10,0.5){$\beta$};

		\draw[-stealth, shorten >=1pt,auto] (al) to [] node []{} (t1);

		\draw[] (al) to [] (0.5, 2.5);
		\draw[] (al) to [] (0.5, -1.5);
		\draw[-stealth, shorten >=1pt,auto] (0.5, 2.5) to [] node []{} (m1);
		\draw[-stealth, shorten >=1pt,auto] (0.5, -1.5) to [] node []{} (b1);

		\draw[-stealth, shorten >=1pt,auto] (m1) to [] node []{} (m11);
		\draw[-stealth, shorten >=1pt,auto] (m1) to [] node []{} (m12);
		\draw[-stealth, shorten >=1pt,auto] (m11) to [] node []{} (m2);
		\draw[-stealth, shorten >=1pt,auto] (m12) to [] node []{} (m2);

		\draw[-stealth, shorten >=1pt,auto] (b1) to [] node []{} (b11);
		\draw[-stealth, shorten >=1pt,auto] (b1) to [] node []{} (b12);
		\draw[-stealth, shorten >=1pt,auto] (b11) to [] node []{} (b2);
		\draw[-stealth, shorten >=1pt,auto] (b12) to [] node []{} (b2);
		
		\draw[-stealth, shorten >=1pt,auto] (b2) to [] node []{} (b21);
		\draw[-stealth, shorten >=1pt,auto] (b2) to [] node []{} (b22);
		\draw[-stealth, shorten >=1pt,auto] (b21) to [] node []{} (b3);
		\draw[-stealth, shorten >=1pt,auto] (b22) to [] node []{} (b3);
		
		\draw[-stealth, shorten >=1pt,auto] (b3) to [] node []{} (b31);
		\draw[-stealth, shorten >=1pt,auto] (b3) to [] node []{} (b32);
		\draw[-stealth, shorten >=1pt,auto] (b31) to [] node []{} (b4);
		\draw[-stealth, shorten >=1pt,auto] (b32) to [] node []{} (b4);
		
		\draw[] (m2) to [] (8.5, 2.5);
		\draw[-stealth, shorten >=1pt,auto] (8.5, 2.5) to [] node []{} (be);
		
		\draw[-stealth, shorten >=1pt,auto] (t1) to [] node []{} (be);

		\draw[] (b4) to [] (8.5, -1.5);
		\draw[-stealth, shorten >=1pt,auto] (8.5, -1.5) to [] node []{} (be);
		
		\end{tikzpicture}
		}%
	
	\caption{The gadget $Q_{11}$.}
	\label{gadget}
\end{figure}

\end{example}

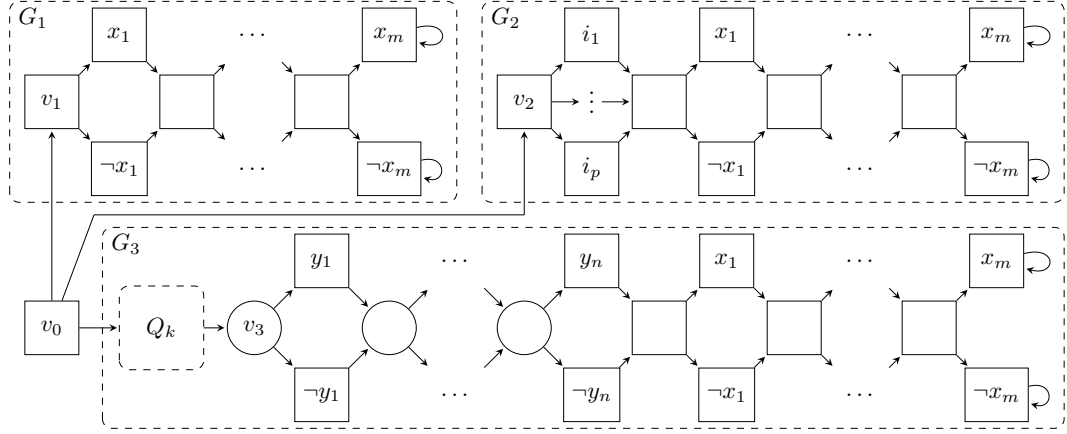
\begin{figure}
	\centering
	\resizebox{\textwidth}{!}{%
		
		\begin{tikzpicture}
		
		\node[draw, rectangle, minimum size=0.8cm] (o) at (-5,-4){$v_0$};
		
		
		\node[draw, rectangle, minimum size=0.8cm] (i0) at (2,-0.625){$v_2$};

		\draw[dashed, rounded corners] (1.375,-2.125) rectangle (10, 0.875)  {};
        \node[] at (1.725,0.625) {$G_2$};
		        
		\node[draw, rectangle, minimum size=0.8cm] (i1) at (3,0.375){$i_1$};
		\node[] (idot) at (3,-0.525){$\vdots$};
		\node[draw, rectangle, minimum size=0.8cm] (in) at (3,-1.625){$i_p$};
		
		\node[draw, rectangle, minimum size=0.8cm] (b1) at (4,-0.625){};
		\node[draw, rectangle, minimum size=0.8cm] (x11) at (5,0.375){$x_1$};
		\node[draw, rectangle, minimum size=0.8cm] (nx11) at (5,-1.625){$\neg x_1$};
		\node[draw, rectangle, minimum size=0.8cm] (b2) at (6,-0.625){};
		\node[minimum size=0.8cm] (b2i1) at (7,0.375){$\dots$};
		\node[minimum size=0.8cm] (b2i2) at (7,-1.625){$\dots$};
		\node[draw, rectangle, minimum size=0.8cm] (bm) at (8,-0.625){};
		\node[draw, rectangle, minimum size=0.8cm] (x1m) at (9,0.375){$x_m$};
		\node[draw, rectangle, minimum size=0.8cm] (nx1m) at (9,-1.625){$\neg x_m$};
		
		\draw[-] (o) to [] (-4.375,-2.3125);
		\draw[-] (-4.375,-2.3125) to [] node []{} (2,-2.3125);
		\draw[-stealth, shorten >=1pt,auto] (2,-2.3125) to [] node []{} (i0);
		\draw[-stealth, shorten >=1pt,auto] (i0) to [] node []{} (i1);
		\draw[-stealth, shorten >=1pt,auto] (i0) to [] node []{} (2.85,-0.625);
		\draw[-stealth, shorten >=1pt,auto] (i0) to [] node []{} (in);
		\draw[-stealth, shorten >=1pt,auto] (i1) to [] node []{} (b1);
		\draw[-stealth, shorten >=1pt,auto] (3.15,-0.625) to [] node []{} (b1);
		\draw[-stealth, shorten >=1pt,auto] (in) to [] node []{} (b1);
		\draw[-stealth, shorten >=1pt,auto] (b1) to [] node []{} (x11);
		\draw[-stealth, shorten >=1pt,auto] (b1) to [] node []{} (nx11);
		\draw[-stealth, shorten >=1pt,auto] (x11) to [] node []{} (b2);
		\draw[-stealth, shorten >=1pt,auto] (nx11) to [] node []{} (b2);
		\draw[-stealth, shorten >=1pt,auto] (b2) to [] node []{} (b2i1);
		\draw[-stealth, shorten >=1pt,auto] (b2) to [] node []{} (b2i2);
		\draw[-stealth, shorten >=1pt,auto] (b2i1) to [] node []{} (bm);
		\draw[-stealth, shorten >=1pt,auto] (b2i2) to [] node []{} (bm);
		\draw[-stealth, shorten >=1pt,auto] (bm) to [] node []{} (x1m);
		\draw[-stealth, shorten >=1pt,auto] (bm) to [] node []{} (nx1m);		
		\draw[-stealth,shorten >=1pt,auto,in=-15,out=15,looseness=6] (x1m) edge [] node {} (x1m);
		\draw[-stealth,shorten >=1pt,auto,in=-15,out=15,looseness=4] (nx1m) edge [] node {} (nx1m);

		\draw[dashed, rounded corners] (-5.625, -2.125) rectangle (1, 0.875) {};
		\node[] at (-5.275,0.625) {$G_1$};
		
		\node[draw, rectangle, minimum size=0.8cm] (a1) at (-5,-0.625){$v_1$};
		\node[draw, rectangle, minimum size=0.8cm] (x21) at (-4,0.375){$x_1$};
		\node[draw, rectangle, minimum size=0.8cm] (nx21) at (-4,-1.625){$\neg x_1$};
	    \node[draw, rectangle, minimum size=0.8cm] (a2) at (-3,-0.625){};
		\node[minimum size=0.8cm] (a2i1) at (-2,0.375){$\dots$};
		\node[minimum size=0.8cm] (a2i2) at (-2,-1.625){$\dots$};
		\node[draw, rectangle, minimum size=0.8cm] (am) at (-1,-0.625){};
		\node[draw, rectangle, minimum size=0.8cm] (x2m) at (0,0.375){$x_m$};
		\node[draw, rectangle, minimum size=0.8cm] (nx2m) at (0,-1.625){$\neg x_m$};
		
		\draw[-stealth, shorten >=1pt,auto] (o) to [] node []{} (a1);
		\draw[-stealth, shorten >=1pt,auto] (a1) to [] node []{} (x21);
		\draw[-stealth, shorten >=1pt,auto] (a1) to [] node []{} (nx21);
		\draw[-stealth, shorten >=1pt,auto] (x21) to [] node []{} (a2);
		\draw[-stealth, shorten >=1pt,auto] (nx21) to [] node []{} (a2);
		\draw[-stealth, shorten >=1pt,auto] (a2) to [] node []{} (a2i1);
		\draw[-stealth, shorten >=1pt,auto] (a2) to [] node []{} (a2i2);
		\draw[-stealth, shorten >=1pt,auto] (a2i1) to [] node []{} (am);
		\draw[-stealth, shorten >=1pt,auto] (a2i2) to [] node []{} (am);
		\draw[-stealth, shorten >=1pt,auto] (am) to [] node []{} (x2m);
		\draw[-stealth, shorten >=1pt,auto] (am) to [] node []{} (nx2m);		
		\draw[-stealth,shorten >=1pt,auto,in=-15,out=15,looseness=6] (x2m) edge [] node {} (x2m);
		\draw[-stealth,shorten >=1pt,auto,in=-15,out=15,looseness=4] (nx2m) edge [] node {} (nx2m);
		
		
		\draw[dashed, rounded corners] (-4.25,-5.5) rectangle (10, -2.5) {};
		\node[] at (-3.9,-2.75) {$G_3$};

		\draw[dashed, rounded corners] (-4,-4.625) rectangle (-2.75, -3.375) {};
		\node[rectangle, minimum size=0.8cm] (j1) at (-3.375,-4){$Q_\problemParam$};

		\draw[-stealth, shorten >=1pt,auto] (o) to [] ((-4,-4);
		
		
		\node[draw, circle, minimum size=0.8cm] (c1) at (-2,-4){$v_3$};
		\node[draw, rectangle, minimum size=0.8cm] (y1) at (-1,-5){$\neg y_1$};
		\node[draw, rectangle, minimum size=0.8cm] (ny1) at (-1,-3){$y_1$};
		\node[draw, circle, minimum size=0.8cm] (c2) at (0,-4){};
		\node[minimum size=0.8cm] (ci1) at (1,-3){$\dots$};
		\node[minimum size=0.8cm] (ci2) at (1,-5){$\dots$};
		\node[draw, circle, minimum size=0.8cm] (cm) at (2,-4){};
		\node[draw, rectangle, minimum size=0.8cm] (ym) at (3,-5){$\neg y_n$};
		\node[draw, rectangle, minimum size=0.8cm] (nym) at (3,-3){$y_n$};
		
		\draw[-stealth, shorten >=1pt,auto] ((-2.75,-4) to [] node []{} (c1);

		\draw[-stealth, shorten >=1pt,auto] (c1) to [] node []{} (y1);
		\draw[-stealth, shorten >=1pt,auto] (c1) to [] node []{} (ny1);
		\draw[-stealth, shorten >=1pt,auto] (y1) to [] node []{} (c2);
		\draw[-stealth, shorten >=1pt,auto] (ny1) to [] node []{} (c2);
		\draw[-stealth, shorten >=1pt,auto] (c2) to [] node []{} (ci1);
		\draw[-stealth, shorten >=1pt,auto] (c2) to [] node []{} (ci2);
		\draw[-stealth, shorten >=1pt,auto] (ci1) to [] node []{} (cm);
		\draw[-stealth, shorten >=1pt,auto] (ci2) to [] node []{} (cm);
		\draw[-stealth, shorten >=1pt,auto] (cm) to [] node []{} (ym);
		\draw[-stealth, shorten >=1pt,auto] (cm) to [] node []{} (nym);			
		
		
		\node[draw, rectangle, minimum size=0.8cm] (d1) at (4,-4){};
		\node[draw, rectangle, minimum size=0.8cm] (x31) at (5,-5){$\neg x_1$};
		\node[draw, rectangle, minimum size=0.8cm] (nx31) at (5,-3){$x_1$};
		\node[draw, rectangle, minimum size=0.8cm] (d2) at (6,-4){};
		\node[minimum size=0.8cm] (d2i1) at (7,-5){$\dots$};
		\node[minimum size=0.8cm] (d2i2) at (7,-3){$\dots$};
		\node[draw, rectangle, minimum size=0.8cm] (dm) at (8,-4){};
		\node[draw, rectangle, minimum size=0.8cm] (x3m) at (9,-5){$\neg x_m$};
		\node[draw, rectangle, minimum size=0.8cm] (nx3m) at (9,-3){$x_m$};
		
		\draw[-stealth, shorten >=1pt,auto] (ym) to [] node []{} (d1);		
		\draw[-stealth, shorten >=1pt,auto] (nym) to [] node []{} (d1);	
		\draw[-stealth, shorten >=1pt,auto] (d1) to [] node []{} (x31);
		\draw[-stealth, shorten >=1pt,auto] (d1) to [] node []{} (nx31);
		\draw[-stealth, shorten >=1pt,auto] (x31) to [] node []{} (d2);
		\draw[-stealth, shorten >=1pt,auto] (nx31) to [] node []{} (d2);
		\draw[-stealth, shorten >=1pt,auto] (d2) to [] node []{} (d2i1);
		\draw[-stealth, shorten >=1pt,auto] (d2) to [] node []{} (d2i2);
		\draw[-stealth, shorten >=1pt,auto] (d2i1) to [] node []{} (dm);
		\draw[-stealth, shorten >=1pt,auto] (d2i2) to [] node []{} (dm);
		\draw[-stealth, shorten >=1pt,auto] (dm) to [] node []{} (x3m);
		\draw[-stealth, shorten >=1pt,auto] (dm) to [] node []{} (nx3m);		
		\draw[-stealth,shorten >=1pt,auto,in=-15,out=15,looseness=4] (x3m) edge [] node {} (x3m);
		\draw[-stealth,shorten >=1pt,auto,in=-15,out=15,looseness=6] (nx3m) edge [] node {} (nx3m);

		\end{tikzpicture}
		}%
	
	\caption{The arena $G$ used in the reduction from the \succinctSetCoverAb{}.}
	\label{sscp_game_long}
\end{figure}

\subparagraph*{Objectives.}
The game is played between Player~$0$ with reachability objective $\ObjPlayer{0}$ and Player~$1$ with $\nbrObjectives = 1 + 2 \cdot m + p + q$ reachability objectives. The payoff of a play therefore consists in a single Boolean for objective $\ObjPlayer{1}$, a vector of $2 \cdot m$ Booleans for objectives $\ObjPlayer{x_1}, \ObjPlayer{\neg x_1}, \dots,  \ObjPlayer{x_m}, \ObjPlayer{\neg x_m}$, a vector of $p$ Booleans for objectives $ \ObjPlayer{C_1}, \dots, \ObjPlayer{C_p}$ and a vector of $q$ Booleans for objectives $\ObjPlayer{D_1}, \dots, \ObjPlayer{D_q}$. The objectives are defined as follows.
\begin{itemize}
    \item The target set for objective $\ObjPlayer{0}$ of Player~$0$ and objective $\ObjPlayer{1}$ of Player~$1$ is $\{v_2, v_3\}$.
    \item The target set for objective $\ObjPlayer{x_i}$ (resp.\ $\ObjPlayer{\neg x_i}$) with $i \in \{1, \dots, m\}$ is the set of vertices labeled $x_i$ (resp.\ $\neg x_i$) in $G_1$, $G_2$ and $G_3$. 
    \item The target set for objective $\ObjPlayer{C_i}$ with $i \in \{1, \dots, p\}$ is the set of vertices in $G_1$ and $G_3$ corresponding to the literals of $X$ which make up the clause $C_i$ in $\phi$. In addition, vertex $i_j$ in $G_2$ belongs to the target set of objective $\ObjPlayer{C_\ell}$ for all $\ell \in \{1, \dots, p\}$ such that $\ell \neq j$. 
    \item The target set of objective $\ObjPlayer{D_i}$ with $i \in \{1, \dots, q\}$ is the set of vertices in $G_3$ corresponding to the literals of $X$ and $Y$ which make up the clause $D_i$ in $\psi$. In addition, vertices $v_1$ and $v_2$ satisfy every objective $\ObjPlayer{D_i}$ with $i \in \{1, \dots, q\}$. 
\end{itemize}

\subparagraph*{Sub-arenas $G_1$ and $G_2$.}
In each sub-arena $G_1$ and $G_2$, for each variable $x_i \in X$, there is one choice vertex controlled by Player~$1$ which leads to $x_i$ and $\neg x_i$. These vertices have the next choice vertex as their successor, except for vertices $x_m$ and $\neg x_m$ which have a self loop. In $G_2$, there is also a vertex $v_2$ controlled by Player~$1$ with $p$ successors, each leading to the first choice vertex for the variables in $X$. Sub-arenas $G_1$ and $G_2$ are completely controlled by Player~$1$. Plays entering these sub-arenas are therefore consistent with any strategy of Player~$0$.

\subparagraph*{Payoff of Plays in $G_1$.}
Plays in $G_1$ do not satisfy objective $\Omega_0$ of Player $0$ nor objective $\ObjPlayer{1}$ of Player~$1$. A play in $G_1$ is of the form $v_0 \: v_1 \: z_1 \boxempty \dots \boxempty (z_m)^\omega$ where $z_i$ is either $x_i$ or $\neg x_i$. It follows that a play satisfies the objective $\ObjPlayer{x_i}$ or $\ObjPlayer{\neg x_i}$ for each $x_i \in X$. The vector of payoffs for these objectives corresponds to a valuation of the variables in $X$, expressed as a vector of $2 \cdot m$ Booleans. In addition, due to the way the objectives are defined, objective $\ObjPlayer{C_i}$ is satisfied in a play if and only if clause $C_i$ of $\phi$ is satisfied by the valuation this play corresponds to. The objective $\ObjPlayer{D_i}$ for $i \in \{1, \dots, q\}$ is satisfied in every play in $G_1$.

\begin{lemma} \label{lem:G1}
    Plays in $G_1$ are consistent with any strategy of Player~$0$. Their payoff are of the form $(0, val, sat(\phi, val), 1, \dots, 1)$ where $val$ is a valuation of the variables in $X$ expressed as a vector of payoffs for objectives $\ObjPlayer{x_1}$ to $\ObjPlayer{\neg x_m}$ and $sat(\phi, val)$ is the vector of payoffs for objectives $\ObjPlayer{C_1}$ to $\ObjPlayer{C_p}$ corresponding to that valuation. All plays in $G_1$ are lost by Player~$0$.
\end{lemma}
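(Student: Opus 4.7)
The plan is to verify each claim of the lemma by a direct inspection of the construction of $G_1$ and of how the various objectives were defined. The proof is almost entirely bookkeeping; there is no real obstacle, so the main work is just to make sure every coordinate of the payoff vector is correctly accounted for.

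First I would observe that every vertex of $G_1$ (namely $v_1$, the choice vertices, and the literal vertices $x_i,\neg x_i$) is a rectangle, i.e.\ controlled by Player~$1$. Since Player~$0$ has no decision to make along $v_0 v_1 z_1 \square \cdots \square (z_m)^\omega$, every play $\rho$ entering $G_1$ is consistent with an arbitrary strategy $\sigma_0$ of Player~$0$. Next, because $\occ{\rho}$ is contained in $\{v_0,v_1\} \cup \{x_i,\neg x_i \mid 1\le i\le m\} \cup \{\square\text{-vertices}\}$, it avoids both $v_2$ and $v_3$, so $\rho \notin \reachName(\{v_2,v_3\})$. This gives $\won{\rho}=0$ (so Player~$0$ loses $\rho$) and $\payoffObj{1}{\rho}=0$, which is the first coordinate of the claimed vector.

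I would then treat the middle block of $2m$ coordinates. For each $i\in\{1,\dots,m\}$, the play passes through exactly one of $x_i$ or $\neg x_i$ (the two successors of the $i$-th choice vertex), so exactly one of $\ObjPlayer{x_i}$, $\ObjPlayer{\neg x_i}$ is satisfied. Defining $val(x_i)=1$ if $x_i$ is visited and $val(x_i)=0$ otherwise, this yields a well-defined valuation of $X$ and the subvector of payoffs $(\payoffObj{x_1}{\rho},\payoffObj{\neg x_1}{\rho},\dots,\payoffObj{x_m}{\rho},\payoffObj{\neg x_m}{\rho})$ is exactly the Boolean encoding of $val$.

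Finally I would handle the $\ObjPlayer{C_i}$- and $\ObjPlayer{D_i}$-coordinates. By definition, the target set of $\ObjPlayer{C_i}$ restricted to $G_1$ is the set of literal vertices corresponding to literals occurring in $C_i$; hence $\rho \in \ObjPlayer{C_i}$ iff some literal of $C_i$ is made true by $val$, i.e.\ iff $val$ satisfies $C_i$. Writing $sat(\phi,val)$ for the vector $(\mathbf{1}[val\models C_1],\dots,\mathbf{1}[val\models C_p])$, this is precisely the $(\ObjPlayer{C_1},\dots,\ObjPlayer{C_p})$-subvector of $\payoff{\rho}$. For the last block, the target set of each $\ObjPlayer{D_i}$ contains $v_1$, which is visited by every play in $G_1$, so $\payoffObj{D_i}{\rho}=1$ for all $i\in\{1,\dots,q\}$. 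Assembling the four blocks gives exactly the stated payoff $(0,val,sat(\phi,val),1,\dots,1)$ and confirms $\won{\rho}=0$, completing the proof.
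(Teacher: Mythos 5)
Your proof is correct and follows the same direct-inspection argument the paper uses (the paper justifies this lemma only by the informal discussion in the preceding paragraph on the payoff of plays in $G_1$); you simply make the coordinate-by-coordinate bookkeeping explicit. All the key points match: $G_1$ is entirely controlled by Player~$1$, plays avoid $\{v_2,v_3\}$ so $\won{\rho}=\payoffObj{1}{\rho}=0$, exactly one of each literal pair is visited yielding a valuation, the $\ObjPlayer{C_i}$ targets encode clause satisfaction, and $v_1$ lies in every $\ObjPlayer{D_i}$ target.
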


\subparagraph*{Payoff of Plays in $G_2$.}
Plays in $G_2$ satisfy the objectives $\ObjPlayer{0}$ of Player~$0$ and $\ObjPlayer{1}$ of Player~$1$. A play in $G_2$ is of the form $v_0 \: v_2 \: i_j \boxempty z_1 \boxempty \dots \boxempty (z_m)^\omega$ where $z_\ell$ is either $x_\ell$ or $\neg x_\ell$. It follows that a play satisfies either the objective $\ObjPlayer{x}$ or $\ObjPlayer{\neg x}$ for each $x \in X$ which again corresponds to a valuation of the variables in $X$. The objective $\ObjPlayer{D_i}$ for $i \in \{1, \dots, q\}$ is satisfied in every play in $G_2$. Compared to the plays in $G_1$, the difference lies in the objectives corresponding to clauses of $\phi$ which are satisfied. In any play in $G_2$, a vertex $i_{j}$ with $j \in \{1, \dots, p\}$ is first visited, satisfying all the objectives $\ObjPlayer{C_\ell}$ with $\ell \in \{1, \dots, p\}$ and $\ell \neq j$. All but one objective corresponding to the clauses of $\phi$ are therefore satisfied. 

\begin{lemma} \label{lem:G2}
    Plays in $G_2$ are consistent with any strategy of Player~$0$. Their payoff are of the form $(1, val, vec, 1, \dots, 1)$ where $val$ is a valuation of the variables in $X$ expressed as a vector of payoffs for objectives $\ObjPlayer{x_1}$ to $\ObjPlayer{\neg x_m}$ and $vec$ is a vector of payoffs for objectives $\ObjPlayer{C_1}$ to $\ObjPlayer{C_p}$ in which all of them except one are satisfied. All plays in $G_2$ are won by Player~$0$.
\end{lemma}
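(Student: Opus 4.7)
My plan is to prove the lemma by directly inspecting the arena $G_2$ and the target sets of the reachability objectives. First, I would verify that every vertex of $G_2$ is controlled by Player~$1$: the vertex $v_2$ branches into $i_1, \ldots, i_p$, each $i_j$ leads into the first choice vertex for the variables in $X$, and the remaining choice vertices together with the $x_\ell$ and $\neg x_\ell$ vertices are all Player~$1$ vertices by construction. Consequently, once the edge $v_0 \to v_2$ is taken, no vertex of Player~$0$ is ever encountered, so any play entering $G_2$ is consistent with every strategy $\sigma_0$ of Player~$0$. Such a play must therefore have the form $v_0 \, v_2 \, i_j \, \boxempty \, z_1 \, \boxempty \cdots \boxempty \, (z_m)^\omega$ for some $j \in \{1, \ldots, p\}$ and some $z_\ell \in \{x_\ell, \neg x_\ell\}$.

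Next I would compute the payoff coordinate by coordinate. Since $v_2$ lies in the common target set $\{v_2, v_3\}$ of $\Omega_0$ and $\Omega_1$, the first coordinate equals $1$ and Player~$0$ wins the play. For each $\ell \in \{1, \ldots, m\}$, exactly one of $x_\ell$ or $\neg x_\ell$ is visited, so the $2m$ coordinates associated with $\Omega_{x_1}, \Omega_{\neg x_1}, \ldots, \Omega_{x_m}, \Omega_{\neg x_m}$ encode a valuation $val$ of $X$. For the $p$ clause objectives of $\phi$, the only vertex visited inside $G_2$ that belongs to any target set of the form associated with $\Omega_{C_\ell}$ is $i_j$, and by definition $i_j$ lies in the target set of $\Omega_{C_\ell}$ for every $\ell \neq j$ but not for $\ell = j$; hence the vector $vec$ of these $p$ coordinates has a zero exactly in position $j$ and ones elsewhere, that is, all but one of its components are equal to $1$. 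Finally, since $v_2$ satisfies every $\Omega_{D_i}$ by construction, the last $q$ coordinates all equal $1$. Putting these four blocks together yields the announced payoff $(1, val, vec, 1, \ldots, 1)$ and confirms that the play is won by Player~$0$.

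The argument is a routine verification that reads directly off the arena definition and the target sets, so I do not anticipate any real obstacle. The one delicate point to double-check is the asymmetry of the clause vector, namely that visiting $i_j$ satisfies precisely the objectives $\Omega_{C_\ell}$ with $\ell \neq j$ and misses $\Omega_{C_j}$. This is exactly the mechanism that the reduction will later rely on, as it forces Player~$0$ to compensate in $G_3$ for each $\phi$-satisfying valuation, so keeping this count exact is important even though the local verification is immediate.
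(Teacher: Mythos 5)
Your proof is correct and follows exactly the same route as the paper, which justifies this lemma by the descriptive paragraph preceding it: $G_2$ is entirely controlled by Player~$1$ (hence consistency with any $\sigma_0$), $v_2$ witnesses $\Omega_0$, $\Omega_1$ and all $\Omega_{D_i}$, the literal vertices encode the valuation $val$, and the vertex $i_j$ is the unique vertex of the play lying in a $C$-target set, satisfying precisely the objectives $\Omega_{C_\ell}$ with $\ell \neq j$. Your explicit remark that the zero of $vec$ sits exactly in position $j$ is a correct and slightly sharper reading of the same construction.
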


From the two previous lemmas, we can state the following lemma when considering the payoffs of plays in $G_1$ and $G_2$.

\begin{lemma} \label{lem:valsatis}
\label{larger_payoff}
    For every play in $G_1$ which corresponds to a valuation of the variables in $X$ that does not satisfy $\phi$, there is a play in $G_2$ with a strictly larger payoff.
\end{lemma}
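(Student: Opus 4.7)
The plan is to explicitly construct, for a given play $\rho_1 \in G_1$ whose associated valuation does not satisfy $\phi$, a play $\rho_2 \in G_2$ whose payoff dominates that of $\rho_1$ and is strictly larger in at least one component. By Lemma~\ref{lem:G1} we know that $\payoff{\rho_1} = (0, val, sat(\phi, val), 1, \dots, 1)$, and by hypothesis there exists at least one clause $C_j$ of $\phi$ not satisfied by $val$, so the $j$-th component of $sat(\phi, val)$ equals $0$. This clause index $j$ is what I will use to select the branch of the ``$i$-gadget'' in $G_2$.

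Concretely, I would define $\rho_2 = v_0\, v_2\, i_j\, \square\, z_1\, \square\, \dots\, \square\, (z_m)^\omega$, where $z_\ell = x_\ell$ if $val(x_\ell) = 1$ and $z_\ell = \neg x_\ell$ otherwise. Because $G_2$ is entirely controlled by Player~$1$, this play exists and is consistent with every strategy of Player~$0$. By Lemma~\ref{lem:G2}, its payoff is of the form $(1, val, vec, 1, \dots, 1)$ for some vector $vec$ of payoffs corresponding to the clauses $C_1, \dots, C_p$.

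It then remains to compare $\payoff{\rho_1}$ and $\payoff{\rho_2}$ coordinate by coordinate. The $\Omega_1$-component is $1$ in $\rho_2$ (since $v_2$ is visited) versus $0$ in $\rho_1$, which already yields a strict inequality. For the $2m$ coordinates corresponding to the variables of $X$, both plays realise exactly the valuation $val$ and therefore agree. For the coordinates corresponding to clauses $C_\ell$ with $\ell \neq j$, the visit to $i_j$ forces the $\ell$-th component of $vec$ to $1$, which is always $\geq$ the corresponding component of $sat(\phi, val)$. For $C_j$ itself, $i_j$ contributes nothing and the valuation $val$ does not satisfy $C_j$ by choice of $j$, so the $j$-th component of $vec$ equals $0$, matching $sat(\phi, val)_j = 0$. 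Finally the $\Omega_{D_i}$-components are $1$ in both plays. Hence $\payoff{\rho_1} < \payoff{\rho_2}$, as required.

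I do not anticipate any genuine obstacle here: the argument is essentially a case analysis on the payoff vectors produced by the two sub-arenas, and the only subtlety is to make sure the branch $i_j$ is chosen according to a \emph{falsified} clause of $\phi$, so that $\rho_2$ dominates $\rho_1$ both on the $\Omega_{C_\ell}$-components with $\ell \neq j$ (via the gadget) and on the $C_j$-component (via the shared valuation $val$).
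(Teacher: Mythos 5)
Your proposal is correct and follows essentially the same route as the paper: pick a clause $C_j$ of $\phi$ falsified by the valuation, enter $G_2$ through $i_j$, reproduce the same valuation of $X$, and compare payoffs componentwise. The only cosmetic remark is that the $j$-th component of $vec$ is $0$ simply because the literal vertices of $G_2$ are not in the target set of any $\ObjPlayer{C_\ell}$ (only the $i_\ell$ vertices are), not because $val$ falsifies $C_j$ — but this does not affect the argument, since $sat(\phi,val)_j = 0$ makes that coordinate harmless either way.
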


\begin{proof}[Proof of \autoref{lem:valsatis}.]
Let $\rho$ be a play in $G_1$ which corresponds to a valuation of the variables in $X$ that does not satisfy $\phi$. It follows that at least one objective, say $\ObjPlayer{C_\ell}$, is not satisfied in $\rho$ as at least one clause of $\phi$ (clause $C_\ell$) is not satisfied by that valuation. Let us consider the play $\rho'$ in $G_2$ which visits vertex $i_\ell$ and after visits the vertices corresponding to the same valuation of the variables in $X$ as $\rho$. By Lemmas~\ref{lem:G1} and~\ref{lem:G2}, it follows that the payoff of $\rho'$ is strictly larger than that of $\rho$ (as we have $(0, val, sat(\phi, val), 1, \dots, 1) < (1, val, vec, 1, \dots, 1)$ with $sat(\phi, val) \leq vec$). 
\end{proof}

The following lemma is a consequence of Lemma~\ref{lem:valsatis}.

\begin{lemma}
\label{g1_g2_payoffs}
	The set of payoffs of plays in $G_1$ that are \paretoOptimal{} when considering $G_1 \cup G_2$ for any strategy $\sigma_0$ of Player~$0$ is equal to the set of payoffs of plays in $G_1$ whose valuation of $X$ satisfy $\phi$.
\end{lemma}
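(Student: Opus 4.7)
I would establish both inclusions directly, relying on the payoff characterizations in \autoref{lem:G1} and \autoref{lem:G2} together with \autoref{larger_payoff}.

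For the forward inclusion, I would argue by contrapositive. Suppose a play $\rho$ in $G_1$ has a valuation $val$ of $X$ that does not satisfy $\phi$. Then \autoref{larger_payoff} yields a play $\rho' \in G_2$ whose payoff is strictly larger than that of $\rho$. Since $G_1 \cup G_2$ is entirely controlled by Player~$1$, both $\rho$ and $\rho'$ are consistent with every strategy $\sigma_0$, so the payoff of $\rho$ cannot be Pareto-optimal in $G_1 \cup G_2$ for any $\sigma_0$.

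For the reverse inclusion, let $\rho$ be a play in $G_1$ whose valuation $val$ of $X$ satisfies $\phi$. By \autoref{lem:G1}, its payoff is $p = (0, val, (1,\ldots,1), 1,\ldots,1)$, where the clause-block is full of $1$s because every clause of $\phi$ is satisfied by $val$. I would then show that $p$ is not strictly dominated by any payoff in $G_1 \cup G_2$, proceeding by a two-case comparison. First, against any play $\rho'' \in G_1$ with valuation $val'$, the payoff is $(0, val', sat(\phi, val'), 1,\ldots,1)$ by \autoref{lem:G1}; if $val' = val$ the two payoffs coincide, otherwise $val$ and $val'$ disagree on some variable $x_i$, so the entries for $x_i$ and $\neg x_i$ are swapped between the two valuation blocks, making the vectors incomparable and hence the payoffs incomparable. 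Second, against any play $\rho'' \in G_2$ with payoff $(1, val', vec, 1,\ldots,1)$, \autoref{lem:G2} guarantees that exactly one coordinate of $vec$ is $0$; thus $p$'s first coordinate $0$ is strictly smaller than $1$, whereas $p$'s clause-block $(1,\ldots,1)$ is strictly larger than $vec$, so the two payoffs are incomparable. In every case, $p$ is not strictly dominated, so it is Pareto-optimal in $G_1 \cup G_2$ for any $\sigma_0$.

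The only delicate point is the case analysis for incomparability: one must exploit the complementary-pair encoding of valuations (two distinct valuations of $X$ always yield incomparable $2m$-Boolean blocks) and the trade-off between the first coordinate and the clause-block when comparing a $G_1$ payoff with a $G_2$ payoff. Once these two simple observations are made, both directions of the equality follow immediately from \autoref{lem:G1}, \autoref{lem:G2}, and \autoref{larger_payoff}.
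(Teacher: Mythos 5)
Your proof is correct and follows essentially the same route as the paper's: the non-satisfying valuations are eliminated via \autoref{larger_payoff}, and the satisfying ones are shown incomparable to all $G_2$ payoffs by trading off the $\ObjPlayer{1}$-coordinate against the clause-block. The only difference is that you also spell out the within-$G_1$ comparison (incomparability of distinct valuation blocks via the complementary-pair encoding), which the paper leaves implicit; this is a harmless and slightly more complete rendering of the same argument.
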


\begin{proof}[Proof of \autoref{g1_g2_payoffs}.]
This property stems from the following observations. First, any play in $G_1$ which satisfies every objective $\ObjPlayer{C_i}$ with $i \in \{1, \dots, p\}$, and therefore corresponds to a valuation of $X$ which satisfies $\phi$, has a payoff that is incomparable to every possible payoff in $G_2$. This is because such a play satisfies more objectives in $\ObjPlayer{C_1}, \dots, \ObjPlayer{C_p}$ than the plays in $G_2$ but does not satisfy objective $\ObjPlayer{1}$ while the plays in $G_2$ do. Second, every other play in $G_1$ has a strictly smaller payoff then at least one play in $G_2$ due to Lemma \ref{larger_payoff} and its payoff is therefore not \paretoOptimal{}.
\end{proof}

\subparagraph*{Problematic Payoffs in $G_1$.}
The plays described in the previous lemma correspond exactly to the valuations of $X$ which satisfy $\phi$ and therefore to the elements we aim to cover in the \succinctSetCoverAb{}. They are \paretoOptimal{} when considering $G_1 \cup G_2$ and are lost by Player~$0$. All other \paretoOptimal{} payoffs in $G_1 \cup G_2$ are only realized by plays in $G_2$ which are all won by Player~$0$. It follows that in order for Player~$0$ to find a strategy $\sigma_0$ from $v_0$ that is solution to the \problemAb{}, it must hold that those payoffs are not \paretoOptimal{} when considering $G_1 \cup G_2 \cup G_3$. Otherwise, a play consistent with $\sigma_0$ with a \paretoOptimal{} payoff is lost by Player~$0$. We call those payoffs \emph{problematic payoffs}.

In order for Player~$0$ to find a strategy $\sigma_0$ which is a solution to the \problemAb{}, this strategy must be such that for each problematic payoff in $G_1$, there is a play in $G_3$ consistent with $\sigma_0$ and with a strictly larger payoff. Since the plays in $G_3$ are all won by Player~$0$, this would ensure that the strategy $\sigma_0$ is a solution to the problem. This corresponds in the \succinctSetCoverAb{} to selecting a series of sets in order to cover the valuations of $X$ which satisfy $\phi$.

\subparagraph*{Sub-arena $G_3$.}
Sub-arena $G_3$ starts with gadget $Q_\problemParam$ whose vertices are controlled by Player~$1$. Then, for each variable $y_i \in Y$, there is one choice vertex controlled by Player~$0$ which leads to $y_i$ and $\neg y_i$. These vertices have the next choice vertex as their successor, except for $y_n$ and $\neg y_n$ which lead to the first choice vertex for the variables in $X$. 

\subparagraph*{Payoff of Plays in $G_3$.}
Plays in $G_3$ satisfy the objectives $\ObjPlayer{0}$ of Player~$0$ and $\ObjPlayer{1}$ of Player~$1$. A play in $G_3$ consistent with a strategy $\sigma_0$ is of the form $v_0 \boxempty \dots \boxempty v_3 \, r_1 \raisebox{0.2ex}{$\scriptstyle\varbigcirc$} \cdots \raisebox{0.2ex}{$\scriptstyle\varbigcirc$} \, r_n \boxempty z_1 \boxempty \dots \boxempty (z_m)^\omega$ where $r_i$ is either $y_i$ or $\neg y_i$ and $z_i$ is either $x_i$ or $\neg x_i$. Since only the vertices leading to $y$ or $\neg y$ for $y \in Y$ belong to Player~$0$, it holds that $v_3 \, r_1 \raisebox{0.2ex}{$\scriptstyle\varbigcirc$} \cdots \raisebox{0.2ex}{$\scriptstyle\varbigcirc$} \, r_n$ is the only part of any play in $G_3$ which is directly influenced by $\sigma_0$. That part of a play comes after a history from $v_0$ to $v_3$ of which there are $\problemParam$, provided by gadget $Q_k$. By definition of a strategy, this can be interpreted as Player~$0$ making a choice of valuation of the variables in $Y$ after each of those $\problemParam$ histories. After this, the play satisfies either the objective $\ObjPlayer{x}$ or $\ObjPlayer{\neg x}$ for each $x \in X$ which corresponds to a valuation of $X$. Due to the way the objectives are defined, the objective $\ObjPlayer{C_i}$ (resp.\ $\ObjPlayer{D_i}$) is satisfied if and only if clause $C_i$ of $\phi$ (resp.\ $D_i$ of $\psi$) is satisfied by the valuation of the variables in $X$ (resp. $X$ and $Y$) the play corresponds to. 

\subparagraph*{Creating Strictly Larger Payoffs in $G_3$.}
In order to create a play with a payoff $r'$ that is strictly larger than a problematic payoff $r$, $\sigma_0$ must choose a valuation of $Y$ such that there exists a valuation of the remaining variables $X$ which together with this valuation of $Y$ satisfies $\psi$ and $\phi$ (since in $r$ every objective $\ObjPlayer{C_i}$ for $i \in \{1, \dots, p\}$ and $\ObjPlayer{D_i}$ for $i \in \{1, \dots, q\}$ is satisfied). Since the plays in $G_3$ also satisfy the objective $\ObjPlayer{1}$ and plays in $G_1$ do not, this ensures that $r < r'$. \\

We can finally establish that our reduction is correct.
\begin{proposition}\label{prop:nexptime-hard-correct-reach}
    Player~$0$ has a strategy $\sigma_0$ from $v_0$ in $G$ that is a solution to the \problemAb{} if and only if there is a solution to the corresponding instance of the \succinctSetCoverAb{}.
\end{proposition}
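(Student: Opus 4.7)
I will prove both implications; the key observation, used in both directions, is that the gadget $Q_\problemParam$ provides exactly $\problemParam$ distinct histories from $v_0$ to $v_3$ (all controlled by Player~$1$), and Player~$0$ subsequently chooses, along each such history, a valuation $val_Y$ of the variables in $Y$ before Player~$1$ chooses a valuation $val_X$ of $X$. Thus the strategy $\sigma_0$ induces at most $\problemParam$ valuations of $Y$ that may be used in the plays of $G_3$, each of which can be combined by Player~$1$ with an arbitrary valuation of $X$.

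For the $(\Leftarrow)$ direction, assume an SSC solution $K = \{val_Y^1, \dots, val_Y^\problemParam\}$. Fix an arbitrary numbering of the $\problemParam$ histories from $v_0$ to $v_3$ through $Q_\problemParam$ and define a (memory-based) strategy $\sigma_0$ that, after the $j$-th such history, plays the $Y$-choices encoding $val_Y^j$; other choices of $\sigma_0$ are irrelevant. By Lemma~\ref{g1_g2_payoffs}, the only potentially troublesome Pareto-optimal plays are the problematic plays in $G_1$ corresponding to some $val_X \in \llbracket \phi \rrbracket$. For each such $val_X$, the covering property yields an index $j$ with $val_X \in \llbracket \psi[val_Y^j] \rrbracket$; Player~$1$ can then reach $v_3$ via the $j$-th history and subsequently choose $val_X$, producing a play consistent with $\sigma_0$ whose payoff satisfies $\Omega_1$, all $\Omega_{C_i}$ (since $val_X$ satisfies $\phi$), all $\Omega_{D_i}$ (since $(val_X, val_Y^j)$ satisfies $\psi$), and has the same $X$-valuation as the problematic play, hence strictly dominates it. Therefore every problematic payoff is removed from $\paretoSet{\sigma_0}$, and the remaining Pareto-optimal plays lie in $G_2 \cup G_3$, all of which are won by Player~$0$.

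For the $(\Rightarrow)$ direction, assume $\sigma_0$ is a solution. Enumerate the $\problemParam$ paths of $Q_\problemParam$ as above; each of them is consistent with $\sigma_0$ since $Q_\problemParam$ belongs to Player~$1$. Along the $j$-th path, $\sigma_0$ prescribes a valuation $val_Y^j$ of $Y$; let $K$ be the (multi-)set of these valuations (padded to size $\problemParam$ if needed). For any $val_X \in \llbracket \phi \rrbracket$, the corresponding play in $G_1$ is consistent with $\sigma_0$, has a Pareto-optimal payoff in $G_1 \cup G_2$ by Lemma~\ref{g1_g2_payoffs}, and is lost by Player~$0$; since $\sigma_0$ is a solution, it must be dominated by some consistent play. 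Plays in $G_2$ have incomparable payoffs (they fail more clause objectives, while gaining only $\Omega_1$), so the dominator must lie in $G_3$. Such a dominator follows some $j$-th path of $Q_\problemParam$ and then Player~$1$'s choice $val_X'$ for $X$; strict domination forces $val_X' = val_X$ (matching the $\Omega_{x_i}, \Omega_{\neg x_i}$ components) and requires every $\Omega_{D_i}$ to be satisfied, i.e.\ $(val_X, val_Y^j)$ satisfies $\psi$. Hence $val_X \in \llbracket \psi[val_Y^j] \rrbracket$, and $K$ covers $\llbracket \phi \rrbracket$.

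The main obstacle is the forward direction: I must argue that, even though $\sigma_0$ may in principle be a complicated finite-memory (or even infinite-memory) strategy, its behaviour inside $G_3$ is effectively determined by the choice of one of $\problemParam$ distinct histories through $Q_\problemParam$, so that the resulting set of $Y$-valuations is of size at most $\problemParam$; and that the strict-domination requirement forces those valuations to collectively cover $\llbracket \phi \rrbracket$ via $\psi$. The remaining ingredients (that plays in $G_2$ are won, that plays in $G_3$ are won, that the problematic $G_1$-payoffs are precisely the elements of $\llbracket \phi \rrbracket$) are already delivered by Lemmas~\ref{lem:G1}, \ref{lem:G2}, \ref{lem:valsatis}, and \ref{g1_g2_payoffs}.
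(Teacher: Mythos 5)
Your proof is correct and follows essentially the same route as the paper: both directions hinge on the gadget $Q_\problemParam$ providing exactly $\problemParam$ histories to $v_3$, each of which commits $\sigma_0$ to one $Y$-valuation, and on the fact that a problematic payoff (a $G_1$-play encoding some $val_X \in \llbracket \phi \rrbracket$) can only be strictly dominated by a $G_3$-play realizing the same $X$-valuation and satisfying all $\Omega_{D_i}$, which is exactly the condition $val_X \in \llbracket \psi[val_Y] \rrbracket$. The extra details you supply in the forward direction (ruling out $G_2$ as a source of dominators and forcing the $X$-components to match) are implicit in the paper's Lemmas~\ref{lem:G2} and~\ref{g1_g2_payoffs} and do not change the argument.
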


\begin{proof}[Proof of \autoref{prop:nexptime-hard-correct-reach}.]
Let us assume that that $\sigma_0$ is a solution to the \problemAb{} in $G$ and show that there is a solution to the \succinctSetCoverAb{}. Let $val_X$ be a valuation of the variables in $X$ which satisfies $\phi$. This valuation corresponds to a play in $G_1$ with a problematic payoff $r$. Since the objective of Player~$0$ is not satisfied in that play and since $\sigma_0$ is a solution to the \problemAb{}, it holds that $r$ is not \paretoOptimal{}. It follows that there exists a play in $G_3$ that is consistent with $\sigma_0$ and whose payoff is strictly larger than $r$. As described above, such a play corresponds to a valuation $val_Y$ of the variables in $Y$ such that $val_X \in \llbracket \psi[val_Y] \rrbracket$. Since this can be done for each $val_X \in \llbracket \phi \rrbracket$ and since there is a set $K$ of $\problemParam$ possible valuations $val_Y$ in $G_3$, it holds that $\llbracket \phi \rrbracket \subseteq \bigcup\limits_{val_Y \in K} \llbracket \psi[val_Y] \rrbracket$. 

Let us now assume that there is a solution to the \succinctSetCoverAb{} and show that we can construct a strategy $\sigma_0$ that is solution to the \problemAb{}. Let $K$ be the set of $\problemParam$ valuations $val_Y$ of the variables in $Y$ which is a solution to the \succinctSetCoverAb{}. Since there are $\problemParam$ possible histories from $v_0$ to $v_3$ in $G_3$ provided by the gadget $Q_\problemParam$ described previously, we define $\sigma_0$ such that the $n$ vertices $y_i$ or $\neg y_i$ for $i \in \{1, \dots , n\}$ visited after each history correspond to a valuation in $K$. We can now show that this strategy is a solution to the \problemAb{}. We do this by showing that each play $\rho$ with problematic payoff $r$ in $G_1$ has a strictly smaller payoff than that of some play $\rho'$ with payoff $r'$ in $G_3$. Such a payoff $r$ corresponds to a valuation $val_X \in \llbracket \phi \rrbracket$. Since $K$ is a solution to the \succinctSetCoverAb{}, it holds that there exists some valuation $val_Y \in K$ such that $val_X \in \llbracket \psi[val_Y] \rrbracket$. It follows, given the definition of $\sigma_0$, that there exists a play $\rho'$ in $G_3$ corresponding to that valuation $val_Y$ and which visits the vertices $x$ or $\neg x$ for each $x \in X$ such that it corresponds to the valuation $val_X$. Given the properties mentioned before, the payoff $r'$ of this play is such that $r < r'$.
\end{proof}

The previous proof yields our result on the \nexptime{}-hardness of the \problemAb{} in reachability \gamesAb{}.

\subsubsection{\textsf{NEXPTIME}-Hardness of Parity SP Games}

We now provide the \nexptime-hardness result for parity \gamesAb{}.

The proof of this result follows the same ideas used in the proof for reachability \gamesAb{}. It again uses a reduction from the \succinctSetCoverAb{} in which we construct an arena $G$, and its structure of three sub-arenas $G_1$, $G_2$, and $G_3$ is kept. We describe the main difficulties that we encounter when adapting this proof for parity objectives and how to overcome them by modifying each sub-arena $G_i$ into $G'_i$. The modified arena $G'$ is depicted in Figure~\ref{fig:succinctparity}.

\begin{figure}
	\centering

		\begin{tikzpicture}

		\node[draw, rectangle, minimum size=0.8cm] (a1) at (-2.5,6.25){$v_1$};
		\node[draw, rectangle, minimum size=0.8cm] (x21) at (-1.5,7.25){$x_1$};
		\node[draw, rectangle, minimum size=0.8cm] (nx21) at (-1.5,5.25){$\neg x_1$};
		\node[draw, rectangle, minimum size=0.8cm] (a2) at (-0.5,6.25){};
		\node[minimum size=0.8cm] (a2i1) at (0.5,7.25){$\dots$};
		\node[minimum size=0.8cm] (a2i2) at (0.5,5.25){$\dots$};
		\node[draw, rectangle, minimum size=0.8cm] (am) at (1.5,6.25){};
		\node[draw, rectangle, minimum size=0.8cm] (x2m) at (2.5,7.25){$x_m$};
		\node[draw, rectangle, minimum size=0.8cm] (nx2m) at (2.5,5.25){$\neg x_m$};
		\node[draw, rectangle, minimum size=0.8cm] (d1) at (3.5,6.25){};

		\draw[-stealth, shorten >=1pt,auto] (a1) to [] node []{} (x21);
		\draw[-stealth, shorten >=1pt,auto] (a1) to [] node []{} (nx21);
		\draw[-stealth, shorten >=1pt,auto] (x21) to [] node []{} (a2);
		\draw[-stealth, shorten >=1pt,auto] (nx21) to [] node []{} (a2);
		\draw[-stealth, shorten >=1pt,auto] (a2) to [] node []{} (a2i1);
		\draw[-stealth, shorten >=1pt,auto] (a2) to [] node []{} (a2i2);
		\draw[-stealth, shorten >=1pt,auto] (a2i1) to [] node []{} (am);
		\draw[-stealth, shorten >=1pt,auto] (a2i2) to [] node []{} (am);
		\draw[-stealth, shorten >=1pt,auto] (am) to [] node []{} (x2m);
		\draw[-stealth, shorten >=1pt,auto] (am) to [] node []{} (nx2m);		
		\draw[-stealth, shorten >=1pt,auto] (x2m) to [] node []{} (d1);
		\draw[-stealth, shorten >=1pt,auto] (nx2m) to [] node []{} (d1);
		
		\draw[] (d1) to [] (3.5,7.8125);
		\draw[] (3.5,7.8125) to [] (-2.5,7.8125);
		\draw[-stealth, shorten >=1pt,auto] (-2.5,7.8125) to [] (a1);

		\end{tikzpicture}
	\caption{The repeating structure used in the reduction from the \succinctSetCoverAb{} for parity \gamesAb{}.}
	\label{fig:gadget}
\end{figure}
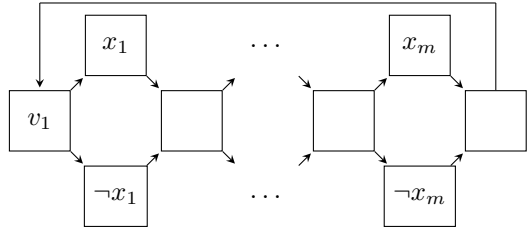

\begin{figure}
	\centering
	\resizebox{\textwidth}{!}{%
		
		\begin{tikzpicture}
		
		\node[draw, rectangle, minimum size=0.8cm] (o) at (-5.375,1.25){$v_0$};
		
		
		\node[draw, rectangle, minimum size=0.8cm] (i0) at (1.5,1.25){$v_2$};
		
		\draw[dashed, rounded corners] (0.5,-2) rectangle (10.1, 4.5)  {};
        \node[] at (0.85,-1.75) {$G'_2$};
		        
		\node[draw, rectangle, minimum size=0.8cm] (i1) at (2.5,2.75){$i_1$};
		\node[] (idot) at (2.85,1.25){$\dots$};
		\node[draw, rectangle, minimum size=0.8cm] (in) at (2.5,-0.25){$i_p$};
		
		\node[draw, rectangle, minimum size=0.8cm] (b1) at (3.5,2.75){};
		\node[draw, rectangle, minimum size=0.8cm] (x11) at (4.5,3.75){$x_1$};
		\node[draw, rectangle, minimum size=0.8cm] (nx11) at (4.5,1.75){$\neg x_1$};
		\node[draw, rectangle, minimum size=0.8cm] (b2) at (5.5,2.75){};
		\node[minimum size=0.8cm] (b2i1) at (6.5,3.75){$\dots$};
		\node[minimum size=0.8cm] (b2i2) at (6.5,1.75){$\dots$};
		\node[draw, rectangle, minimum size=0.8cm] (bm) at (7.5,2.75){};
		\node[draw, rectangle, minimum size=0.8cm] (x1m) at (8.5,3.75){$x_m$};
		\node[draw, rectangle, minimum size=0.8cm] (nx1m) at (8.5,1.75){$\neg x_m$};
		\node[draw, rectangle, minimum size=0.8cm] (end1) at (9.5,2.75){};
		
		\draw[-stealth, shorten >=1pt,auto] (o) to [] node []{} (i0);
		\draw[-stealth, shorten >=1pt,auto] (i0) to [] node []{} (i1);
		\draw[-stealth, shorten >=1pt,auto] (i0) to [] node []{} (2.5,1.25);
		\draw[-stealth, shorten >=1pt,auto] (i0) to [] node []{} (in);
		\draw[-stealth, shorten >=1pt,auto] (i1) to [] node []{} (b1);
		\draw[-stealth, shorten >=1pt,auto] (b1) to [] node []{} (x11);
		\draw[-stealth, shorten >=1pt,auto] (b1) to [] node []{} (nx11);
		\draw[-stealth, shorten >=1pt,auto] (x11) to [] node []{} (b2);
		\draw[-stealth, shorten >=1pt,auto] (nx11) to [] node []{} (b2);
		\draw[-stealth, shorten >=1pt,auto] (b2) to [] node []{} (b2i1);
		\draw[-stealth, shorten >=1pt,auto] (b2) to [] node []{} (b2i2);
		\draw[-stealth, shorten >=1pt,auto] (b2i1) to [] node []{} (bm);
		\draw[-stealth, shorten >=1pt,auto] (b2i2) to [] node []{} (bm);
		\draw[-stealth, shorten >=1pt,auto] (bm) to [] node []{} (x1m);
		\draw[-stealth, shorten >=1pt,auto] (bm) to [] node []{} (nx1m);		
		\draw[-stealth, shorten >=1pt,auto] (x1m) to [] node []{} (end1);
		\draw[-stealth, shorten >=1pt,auto] (nx1m) to [] node []{} (end1);
		
		\draw[] (end1) to [] (9.5, 4.3125);
		\draw[] (9.5, 4.3125) to [] (2.5, 4.3125);
		\draw[-stealth,shorten >=1pt,auto] (2.5, 4.3125) to [] (i1);
		
		\node[draw, rectangle, minimum size=0.8cm] (b1b) at (3.5,-0.25){};
		\node[draw, rectangle, minimum size=0.8cm] (x11b) at (4.5,-1.25){$ \neg x_1$};
		\node[draw, rectangle, minimum size=0.8cm] (nx11b) at (4.5,0.75){$ x_1$};
		\node[draw, rectangle, minimum size=0.8cm] (b2b) at (5.5,-0.25){};
		\node[minimum size=0.8cm] (b2i1b) at (6.5,-1.25){$\dots$};
		\node[minimum size=0.8cm] (b2i2b) at (6.5,0.75){$\dots$};
		\node[draw, rectangle, minimum size=0.8cm] (bmb) at (7.5,-0.25){};
		\node[draw, rectangle, minimum size=0.8cm] (x1mb) at (8.5,-1.25){$\neg x_m$};
		\node[draw, rectangle, minimum size=0.8cm] (nx1mb) at (8.5,0.75){$ x_m$};
		\node[draw, rectangle, minimum size=0.8cm] (end1b) at (9.5,-0.25){};
		
		\draw[-stealth, shorten >=1pt,auto] (in) to [] node []{} (b1b);
		\draw[-stealth, shorten >=1pt,auto] (b1b) to [] node []{} (x11b);
		\draw[-stealth, shorten >=1pt,auto] (b1b) to [] node []{} (nx11b);
		\draw[-stealth, shorten >=1pt,auto] (x11b) to [] node []{} (b2b);
		\draw[-stealth, shorten >=1pt,auto] (nx11b) to [] node []{} (b2b);
		\draw[-stealth, shorten >=1pt,auto] (b2b) to [] node []{} (b2i1b);
		\draw[-stealth, shorten >=1pt,auto] (b2b) to [] node []{} (b2i2b);
		\draw[-stealth, shorten >=1pt,auto] (b2i1b) to [] node []{} (bmb);
		\draw[-stealth, shorten >=1pt,auto] (b2i2b) to [] node []{} (bmb);
		\draw[-stealth, shorten >=1pt,auto] (bmb) to [] node []{} (x1mb);
		\draw[-stealth, shorten >=1pt,auto] (bmb) to [] node []{} (nx1mb);		
		\draw[-stealth, shorten >=1pt,auto] (x1mb) to [] node []{} (end1b);
		\draw[-stealth, shorten >=1pt,auto] (nx1mb) to [] node []{} (end1b);
		
		\draw[] (end1b) to [] (9.5, -1.8125);
		\draw[] (9.5, -1.8125) to [] (2.5, -1.8125);
		\draw[-stealth,shorten >=1pt,auto] (2.5, -1.8125) to [] (in);

		\draw[dashed, rounded corners] (-3.5, 4.75) rectangle (10.1, 8) {};
		\node[] at (-3.15,5) {$G'_1$};
		
		\node[draw, rectangle, minimum size=0.8cm] (a1) at (-2.5,6.25){$v_1$};
		\node[draw, rectangle, minimum size=0.8cm] (x21) at (-1.5,7.25){$x_1$};
		\node[draw, rectangle, minimum size=0.8cm] (nx21) at (-1.5,5.25){$\neg x_1$};
		\node[draw, rectangle, minimum size=0.8cm] (a2) at (-0.5,6.25){};
		\node[minimum size=0.8cm] (a2i1) at (0.5,7.25){$\dots$};
		\node[minimum size=0.8cm] (a2i2) at (0.5,5.25){$\dots$};
		\node[draw, rectangle, minimum size=0.8cm] (am) at (1.5,6.25){};
		\node[draw, rectangle, minimum size=0.8cm] (x2m) at (2.5,7.25){$x_m$};
		\node[draw, rectangle, minimum size=0.8cm] (nx2m) at (2.5,5.25){$\neg x_m$};
		\node[draw, circle, minimum size=0.8cm] (d1) at (3.5,6.25){$D_1$};
		\node[draw, rectangle, minimum size=0.8cm] (d1l3) at (4.5,5.25){$l^{D_1}_{n_{D_1}}$};
		\node[rectangle, minimum size=0.8cm] (d1l2) at (4.5,6.35){$\vdots$};
		\node[draw, rectangle, minimum size=0.8cm] (d1l1) at (4.5,7.25){$l^{D_1}_1$};
		\node[draw, circle, minimum size=0.8cm] (d2) at (5.5,6.25){$D_2$};
		\node[rectangle, minimum size=0.8cm] (d2l3) at (6.5,5.25){$\dots$};
		\node[rectangle, minimum size=0.8cm] (d2l2) at (6.5,6.35){$\vdots$};
		\node[rectangle, minimum size=0.8cm] (d2l1) at (6.5,7.25){$\dots$};
		\node[draw, circle, minimum size=0.8cm] (dq) at (7.5,6.25){$D_q$};
		\node[draw,rectangle, minimum size=0.8cm] (dql3) at (8.5,5.25){$l^{D_q}_{n_{D_q}}$};
		\node[rectangle, minimum size=0.8cm] (dql2) at (8.5,6.35){$\vdots$};
		\node[draw,rectangle, minimum size=0.8cm] (dql1) at (8.5,7.25){$l^{D_q}_1$};
		\node[draw, rectangle, minimum size=0.8cm] (endg1) at (9.5,6.25){};

		\draw[] (o) to [] (-5.375, 6.25);
		\draw[-stealth] (-5.375, 6.25) to [] (a1);
		\draw[-stealth, shorten >=1pt,auto] (a1) to [] node []{} (x21);
		\draw[-stealth, shorten >=1pt,auto] (a1) to [] node []{} (nx21);
		\draw[-stealth, shorten >=1pt,auto] (x21) to [] node []{} (a2);
		\draw[-stealth, shorten >=1pt,auto] (nx21) to [] node []{} (a2);
		\draw[-stealth, shorten >=1pt,auto] (a2) to [] node []{} (a2i1);
		\draw[-stealth, shorten >=1pt,auto] (a2) to [] node []{} (a2i2);
		\draw[-stealth, shorten >=1pt,auto] (a2i1) to [] node []{} (am);
		\draw[-stealth, shorten >=1pt,auto] (a2i2) to [] node []{} (am);
		\draw[-stealth, shorten >=1pt,auto] (am) to [] node []{} (x2m);
		\draw[-stealth, shorten >=1pt,auto] (am) to [] node []{} (nx2m);		
		\draw[-stealth, shorten >=1pt,auto] (x2m) to [] node []{} (d1);
		\draw[-stealth, shorten >=1pt,auto] (nx2m) to [] node []{} (d1);
		\draw[-stealth, shorten >=1pt,auto] (d1) to [] node []{} (d1l1);
		\draw[-stealth, shorten >=1pt,auto] (d1) to [] node []{} (d1l3);
		\draw[-stealth, shorten >=1pt,auto] (d1l1) to [] node []{} (d2);
		\draw[-stealth, shorten >=1pt,auto] (d1l3) to [] node []{} (d2);
		\draw[-stealth, shorten >=1pt,auto] (d2) to [] node []{} (d2l1);
		\draw[-stealth, shorten >=1pt,auto] (d2) to [] node []{} (d2l3);
		\draw[-stealth, shorten >=1pt,auto] (d2l1) to [] node []{} (dq);
		\draw[-stealth, shorten >=1pt,auto] (d2l3) to [] node []{} (dq);		
		\draw[-stealth, shorten >=1pt,auto] (dq) to [] node []{} (dql1);
		\draw[-stealth, shorten >=1pt,auto] (dq) to [] node []{} (dql3);
		\draw[-stealth, shorten >=1pt,auto] (dql1) to [] node []{} (endg1);
		\draw[-stealth, shorten >=1pt,auto] (dql3) to [] node []{} (endg1);
		
		\draw[] (endg1) to [] (9.5,7.8125);
		\draw[] (9.5,7.8125) to [] (-2.5,7.8125);
		\draw[-stealth, shorten >=1pt,auto] (-2.5,7.8125) to [] (a1);

		
		\draw[dashed, rounded corners] (-5.75,-5.5) rectangle (10.1, -2.25) {};
		\node[] at (-5.4,-5.25) {$G'_3$};

		\draw[dashed, rounded corners] (-5,-4.5) rectangle (-3.5, -3) {};
		\node[rectangle, minimum size=0.8cm] (j1) at (-4.25,-3.75){$Q_\problemParam$};

		\draw[] (o) to [] ((-5.375,-3.75);
		\draw[-stealth, shorten >=1pt,auto] (-5.375,-3.75) to [] ((-5,-3.75);

		
		\node[draw, circle, minimum size=0.8cm] (c1) at (-2.5,-3.75){$v_3$};
		\node[draw, rectangle, minimum size=0.8cm] (y1) at (-1.5,-4.75){$\neg y_1$};
		\node[draw, rectangle, minimum size=0.8cm] (ny1) at (-1.5,-2.75){$y_1$};
		\node[draw, circle, minimum size=0.8cm] (c2) at (-0.5,-3.75){};
		\node[minimum size=0.8cm] (ci1) at (0.5,-4.75){$\dots$};
		\node[minimum size=0.8cm] (ci2) at (0.5,-2.75){$\dots$};
		\node[draw, circle, minimum size=0.8cm] (cm) at (1.5,-3.75){};
		\node[draw, rectangle, minimum size=0.8cm] (ym) at (2.5,-4.75){$\neg y_n$};
		\node[draw, rectangle, minimum size=0.8cm] (nym) at (2.5,-2.75){$y_n$};
		
		\draw[-stealth, shorten >=1pt,auto] ((-3.5,-3.75) to [] node []{} (c1);

		\draw[-stealth, shorten >=1pt,auto] (c1) to [] node []{} (y1);
		\draw[-stealth, shorten >=1pt,auto] (c1) to [] node []{} (ny1);
		\draw[-stealth, shorten >=1pt,auto] (y1) to [] node []{} (c2);
		\draw[-stealth, shorten >=1pt,auto] (ny1) to [] node []{} (c2);
		\draw[-stealth, shorten >=1pt,auto] (c2) to [] node []{} (ci1);
		\draw[-stealth, shorten >=1pt,auto] (c2) to [] node []{} (ci2);
		\draw[-stealth, shorten >=1pt,auto] (ci1) to [] node []{} (cm);
		\draw[-stealth, shorten >=1pt,auto] (ci2) to [] node []{} (cm);
		\draw[-stealth, shorten >=1pt,auto] (cm) to [] node []{} (ym);
		\draw[-stealth, shorten >=1pt,auto] (cm) to [] node []{} (nym);			
		
		
		\node[draw, rectangle, minimum size=0.8cm] (d1) at (3.5,-3.75){};
		\node[draw, rectangle, minimum size=0.8cm] (x31) at (4.5,-4.75){$\neg x_1$};
		\node[draw, rectangle, minimum size=0.8cm] (nx31) at (4.5,-2.75){$x_1$};
		\node[draw, rectangle, minimum size=0.8cm] (d2) at (5.5,-3.75){};
		\node[minimum size=0.8cm] (d2i1) at (6.5,-4.75){$\dots$};
		\node[minimum size=0.8cm] (d2i2) at (6.5,-2.75){$\dots$};
		\node[draw, rectangle, minimum size=0.8cm] (dm) at (7.5,-3.75){};
		\node[draw, rectangle, minimum size=0.8cm] (x3m) at (8.5,-4.75){$\neg x_m$};
		\node[draw, rectangle, minimum size=0.8cm] (nx3m) at (8.5,-2.75){$x_m$};
		\node[draw, rectangle, minimum size=0.8cm] (eng3) at (9.5,-3.75){};
				
		\draw[-stealth, shorten >=1pt,auto] (ym) to [] node []{} (d1);		
		\draw[-stealth, shorten >=1pt,auto] (nym) to [] node []{} (d1);	
		\draw[-stealth, shorten >=1pt,auto] (d1) to [] node []{} (x31);
		\draw[-stealth, shorten >=1pt,auto] (d1) to [] node []{} (nx31);
		\draw[-stealth, shorten >=1pt,auto] (x31) to [] node []{} (d2);
		\draw[-stealth, shorten >=1pt,auto] (nx31) to [] node []{} (d2);
		\draw[-stealth, shorten >=1pt,auto] (d2) to [] node []{} (d2i1);
		\draw[-stealth, shorten >=1pt,auto] (d2) to [] node []{} (d2i2);
		\draw[-stealth, shorten >=1pt,auto] (d2i1) to [] node []{} (dm);
		\draw[-stealth, shorten >=1pt,auto] (d2i2) to [] node []{} (dm);
		\draw[-stealth, shorten >=1pt,auto] (dm) to [] node []{} (x3m);
		\draw[-stealth, shorten >=1pt,auto] (dm) to [] node []{} (nx3m);
		\draw[-stealth, shorten >=1pt,auto] (x3m) to [] node []{} (eng3);
		\draw[-stealth, shorten >=1pt,auto] (nx3m) to [] node []{} (eng3);

		\draw[] (eng3) to [] (9.5,-5.3125);
		\draw[] (9.5,-5.3125) to [] (-2.5,-5.3125);
		\draw[-stealth,shorten >=1pt,auto] (-2.5,-5.3125) to [] (c1);
		
		\end{tikzpicture}
		}%
	\caption{The arena $G'$ used in the reduction from the \succinctSetCoverAb{} for parity \gamesAb{}.}
	\label{fig:succinctparity}
\end{figure}

\subparagraph*{Two Particular Objectives.}
Remember that for the case of reachability \gamesAb{}, the objective $\Omega_0$ of Player~$0$ and the first objective $\Omega_1$ of Player~$1$ were either always satisfied or always not satisfied in all plays of a given sub-arena $G_1$, $G_2$, or $G_3$. This property holds in the modified arena $G'$, with these objectives being expressed using parity conditions. 

\subparagraph*{On the Choice of Valuations.}
Recall that for reachability objectives, we encoded the choice of valuations for the variables $x_i \in X$ made by Player~$1$ by using sub-arena $G_1$ of Figure~\ref{sscp_game_long} (which was also reused as part of $G_2$ and $G_3$). A play in this sub-arena encodes a valuation by visiting one literal $l_i \in \{x_i, \neg x_i\}$ for each $x_i \in X$ thus satisfying the objectives $\Omega_{l_i}$, $i \in \{1,\ldots,m\}$. 

This simple schema cannot be reused in the case of parity objectives as they are prefix-independent objectives. Instead, we ask Player~$1$ to repeatedly produce the \emph{same choice} along loops (from $v_1$ back to $v_1$) in the adapted gadget of Figure~\ref{fig:gadget}.

We associate with each variable $x_i \in X$ two parity objectives with priority function $c_{l_i}$ with $l_i \in \{x_i, \neg x_i\}$ defined as follows: $c_{l_i}(l_i) = 2$, $c_{l_i}(\neg l_i) = 1$ and $c_{l_i}(v) = 3$ for all the other vertices. It is easy to see that plays in which the valuation changes infinitely many times (for example, visiting $x_i$ and $\neg x_i$ infinitely often), have a payoff strictly smaller than some other play which settles on a choice of valuation for each variable. Indeed, the payoff for objectives $\Omega_{x^1_i}$ and $\Omega_{x^2_i}$ is $(0,0)$ in the first case and $(1,0)$ or $(0,1)$ in the second. We say that a valuation is \emph{properly encoded} if Player~$1$ eventually repeats the same choice along the loops to settle on a valuation.

Notice that the gadget of Figure~\ref{fig:gadget} appears nearly identical as part of the three sub-arenas $G'_1$, $G'_2$, and $G'_3$ of Figure~\ref{fig:succinctparity}. In all of these sub-arenas, we define the values of each priority function $c_{l_i}$ with $l_i \in \{x_i, \neg x_i\}$ exactly as explained above.

\subparagraph*{On the Satisfied Clauses.}
Recall that in the case of reachability \gamesAb{}, we associated one reachability objective with each clause $C_i$ (resp. $D_i$) of $\phi$ (resp. $\psi$). As encoding valuations is made more complex by the prefix-independency of parity objectives, we also need to adapt the way we check which clauses are satisfied by a given valuation. 

Let us first explain our encoding for clauses $C_j$ of $\phi$. We associate one parity objective with priority function $c_l^{C_j}$ with \emph{each literal} $l$ of each clause $C_j$. Therefore, if $C_j = l_1^{C_j} \vee \ldots \vee l_{n_{C_j}}^{C_j}$, there are $n_{C_j}$ parity objectives for clause $C_j$. Priority function $c_l^{C_j}$ is defined as follows for each vertex $l_i \in \{x_i,\neg x_i\}$ that appears in $G'_1$ and $G'_3$ (we will define it later for $G'_2$): $c_{l}^{C_j}(l_i) = 2$ if $l = l_i$ and $c_{l}^{C_j}(l_i) = 1$ if $l = \neg l_i$, and for all the other vertices $v$, we have $c_{l}^{C_j}(v) = 3$. This encoding of clauses has the following important property: given a valuation $val_X$ of the variables in $X$ properly encoded by Player~$1$, a clause $C_j$ is satisfied by $val_X$ if and only if the parity condition $\parity{c_l^{C_j}}$ is satisfied for \emph{at least one} of the literals $l$ of $C_j$. Thus with the proposed encoding with priority functions, there are several ways to observe that a clause $C_j$ is satisfied (the corresponding payoff is a \emph{non-null} vector of $n_{C_j}$ Booleans).

In the sub-arena $G'_3$, we see a part resembling the gadget of Figure~\ref{fig:gadget}, however made of vertices $y_i$, $i \in \{1,\ldots,n\}$. This part is related to the choice of a valuation of the variables in $Y$ made by Player~$0$ with respect to $\psi$. To encode clauses $D_j$ of $\psi$, we proceed exactly as we did previously with clauses $C_j$ of $\phi$. We associate one priority function $c_l^{D_j}$ with each literal $l$ of each clause $D_j$ (recall that such a literal uses both sets of variables $X$ and $Y$). We similarly define the values of $c_l^{D_j}$ for vertices of $G'_3$: for $l' \in \{x_i, \neg x_i \mid i \in \{1,\ldots,m\}\} \cup \{y_i,\neg y_i\ \mid i \in \{1,\ldots,n\}\}$, we define $c_{l}^{D_j}(l') = 2$ if $l = l'$ and $c_{l}^{D_j}(l') = 1$ if $l = \neg l'$, and for all the other vertices $v$ of $G'_3$, we define $c_{l}^{D_j}(v) = 3$. Notice that the definition of $c_l^{D_j}$ is given for $G'_3$ only. We will later give its definition for $G'_1$ and $G'_2$.



\subparagraph*{Modifications Needed on $G_1$.}
Let us now explain how to modify $G_1$ into $G'_1$. Remember that in the case of reachability \gamesAb{}, the objective associated with each clause $D_j$ of $\psi$ is satisfied by all plays in $G_1$. Indeed the purpose of $G_1$ (in combination with $G_2$) is to isolate all the encodings of the valuations of $X$ that satisfy $\phi$ independently of $\psi$. In case of a positive instance of the \succinctSetCoverAb{}, the payoff of these encodings are strictly smaller than that of some play in $G_3$, which have to satisfy all clauses of $\psi$ by definition of this problem.

We proceed similarly in $G'_1$. However as there are several ways to satisfy $D_j$ in $\psi$ (at least one of its literals has to be satisfied), we let Player~$0$ choose which way to do it. This is encoded by the part of $G'_1$ made with vertices $D_j$, $j \in \{1, \ldots ,q\}$, controlled by Player~$0$, and their successors $l_1^{D_j}, \ldots, l_{n_{D_j}}^{D_j}$ such that $D_j = l_1^{D_j} \vee \ldots \vee l_{n_{D_j}}^{D_j}$. Given a properly encoded $X$ valuation $val_X$ made by Player~$1$, Player~$0$ chooses a $Y$ valuation $val_Y$ such that if $val_X \in \llbracket \phi \rrbracket$, then $val_X \in \llbracket \psi[val_Y] \rrbracket$. Player~$0$ makes such a choice by selecting at least one literal $l^{D_j}$ of $D_j$, for each clause $D_j$ of $\psi$, such that $l$ is satisfied by the valuation made of $val_X$ and $val_Y$. This is encoded in $G'_1$ by defining the priority function $c_l^{D_j}$ such that $c_l^{D_j}(l) = 2$ and $c_l^{D_j}(v) = 3$ for all the other vertices $v$ of $G'_1$.

\subparagraph*{Modifications Needed on $G_2$.}
Recall that in the case of reachability \gamesAb{}, the construction of $G_2$ ensures that the plays of $G_1$ whose payoff is not strictly smaller than the payoff of a play in $G_2$ are exactly the plays of $G_1$ that encode $X$ valuations $val_X$ such that $val_X \in \llbracket \phi \rrbracket$. As a consequence, the objectives that are satisfied by the plays in $G_2$ are exactly \emph{(i)} those associated with a valuation $val_X$, \emph{(ii)} the objectives associated with all clauses $D_j$ of $\psi$, and \emph{(iii)} the objectives associated with all clauses $C_j$ of $\phi$ except one.

We achieve the same requirement for parity \gamesAb{} by using $G'_2$ in place of $G_2$. In this sub-arena, Player~$1$ first selects one clause $C$ in $\phi$ and then in the selecting part of $G'_2$, the priority functions are defined as follows.
\begin{itemize}
    \item We use the priority functions $c_{l_i}$ with $l_i \in \{x_i, \neg x_i \mid i \in \{1, \ldots, m \}\}$ as defined above for encoding $X$ valuations. 
    \item The priority functions $c_l^{D_j}$ are all defined such that the associated objective $\parity{c_l^{D_j}}$ is satisfied.
    \item Similarly the priority functions $c_l^{C_j}$ are defined such that the associated objective $\parity{c_l^{C_j}}$ is satisfied, except for all priority functions $c_l^{C_j}$ such that $C_j = C$ for which this objective is not satisfied.
\end{itemize}

\subparagraph*{Modifications Needed on $G_3$.}

We have modified $G_1$ into $G'_1$ and $G_2$ into $G'_2$ such that the only plays in $G'_1$ with a Pareto-optimal payoff when considering $G'_1 \cup G'_2$, given any strategy of Player~$0$ are those that encode valuations $val_X \in \llbracket \phi \rrbracket$. In $G'_1$, after such a valuation chosen by Player~$1$, Player~$0$ indicates which valuation $val_Y$ to use such that $val_X \in \llbracket \psi[val_Y] \rrbracket$. He chooses this valuation $val_Y$ by indicating for each clause $D_j$ which literals of $D_j$ he has chosen such that the valuation  
made of $val_X$ and $val_Y$ satisfies $D_j$.

Let us now explain how to modify $G_3$ into $G'_3$. After each of the $k$ histories produced by gadget $Q_k$, both players have to choose some valuation (resp. $val_X$ and $val_Y$) for the variables that they control. In case of a positive instance of \succinctSetCoverAb{}, Player~$0$ will be able to select one of the valuations $val_Y$ that he used in $G'_1$ such that $val_X \in \llbracket \psi[val_Y] \rrbracket$ whenever $val_X \in \llbracket \phi \rrbracket$. It follows that plays in $G'_3$ have a larger payoff than the Pareto-optimal payoffs of $G'_1 \cup G'_2$. In case of a negative instance of \succinctSetCoverAb{}, Player~$0$ will not be able to do so.

Clearly there exists a solution to the \problemAb{} in the modified arena $G'$ if and only if the instance of the \succinctSetCoverAb{} is positive. 

\section{Conclusion}
\label{sec:conclusion}
We have introduced in this paper the class of two-player \gamesAb{} and the \problemAb{} in those games. We provided a reduction from \gamesAb{} to a two-player zero-sum game called the \challengerProverAb{} game in order to provide \FPT{} results on solving this problem. We then showed how the arena and the generic objective of this \challengerProverAb{} game can be adapted to specifically handle reachability and parity \gamesAb{}. This allowed us to prove that reachability (resp. parity) \gamesAb{} are in \FPT{} when the number $\nbrObjectives$ of objectives of Player~$1$ (resp. when $\nbrObjectives$ and the maximal priority according to each priority function in the game) is a parameter. We then turned to the complexity class of the \problemAb{} and provided a proof of its \nexptime-membership, which relied on showing that any solution to the \problemAb{} in a reachability or parity \gameAb{} can be transformed into a solution with an exponential memory. We provided a proof of the \np-completeness of the problem in the simple setting of reachability \gamesAb{} played on tree arenas. We then came back to regular game arenas and provided the proof of the \nexptime-hardness of the \problemAb{} in reachability and parity \gamesAb{}. This proof relied on a reduction from the \succinctSetCoverAb{} which we proved to be \nexptimeComplete{}, a result of potential independent interest. 

In future work, we want to study other $\omega$-regular objectives as well as quantitative objectives such as mean-payoff in the framework of \gamesAb{} and the \problemAb{}. It would also be interesting to study whether other works, such as rational synthesis, could benefit from the approaches used in this paper.



\bibliography{bibliography}

\appendix

\section{Useful Result on SP Games}
\label{app:usefull_notions}

\begin{proposition}
\label{prop:transform_two_suc}
Every parity (resp. reachability) \gameAb{} $\mathcal{G}$ with arena $G$ containing $n$ vertices can be transformed into a parity (resp. reachability) \gameAb{} $\bar{\mathcal{G}}$ with arena $\bar{G}$ containing at most $n^2$ vertices such that any vertex in $\bar{G}$ has at most $2$ successors and Player~$0$ has a strategy $\sigma_0$ that is solution to the \problemAb{} in $G$ if and only if Player~$0$ has a strategy $\bar{\sigma}_0$ that is solution to the problem in $\bar{G}$.
\end{proposition}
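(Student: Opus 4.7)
The plan is to apply the standard out-degree reduction gadget separately at each vertex $v$ of $G$ that has more than two successors, while being careful to choose priorities (for parity) or targets (for reachability) on the freshly introduced vertices so that the set of Pareto-optimal payoffs and the winning status of Player~$0$ are preserved.

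Concretely, for each vertex $v \in V_j$ ($j \in \{0,1\}$) with successors $v_1, \dots, v_k$ and $k \geq 3$, I would replace $v$ by a chain $v = u_1, u_2, \dots, u_{k-1}$ of vertices all belonging to Player~$j$, where each $u_\ell$ (for $\ell < k-1$) has the two successors $v_\ell$ and $u_{\ell+1}$, and $u_{k-1}$ has the two successors $v_{k-1}$ and $v_k$. Since $\sum_{v \in V} |\mathrm{succ}(v)| \leq n^2$, the total number of added vertices is at most $n^2$, so the new arena $\bar{G}$ has at most $n^2$ vertices, all of out-degree at most two. For the reachability case, I let no new vertex belong to any target $T_0, T_1, \dots, T_t$, so that projecting any play $\bar{\rho}$ of $\bar{G}$ back to a play $\rho$ of $G$ (by contracting each chain traversal to its single original vertex) yields $(\won{\bar{\rho}}, \payoff{\bar{\rho}}) = (\won{\rho}, \payoff{\rho})$. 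For the parity case, I assign each new vertex an odd priority strictly larger than the maximal priority occurring in $G$ under each priority function $c_i$; since any infinite play in $\bar{G}$ must cross infinitely many original vertices (each chain being finite), the minimum priority seen infinitely often according to $c_i$ coincides in $\bar{\rho}$ and in its projection $\rho$, so again the extended payoff is preserved.

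Next, I would establish the one-to-one correspondence between strategies. Given a strategy $\sigma_0$ for Player~$0$ in $G$, I define a strategy $\bar{\sigma}_0$ in $\bar{G}$ that mimics $\sigma_0$ on original vertices: whenever a chain replacing a Player~$0$ vertex $v$ is entered after a history projecting to $hv$, the strategy $\bar{\sigma}_0$ traverses the chain in the unique way that exits on the successor $v_\ell = \sigma_0(hv)$. Conversely, from a strategy $\bar{\sigma}_0$ in $\bar{G}$, I read off $\sigma_0(hv)$ as the original successor eventually reached after the chain at $v$ (this choice is well-defined on histories of $G$ since $\bar{\sigma}_0$ determines it deterministically on the corresponding history of $\bar{G}$). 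An analogous correspondence holds for strategies of Player~$1$. Under this bijection, consistent plays in the two games are in projection-preserving bijection, and by the payoff-preservation argument above, $\paretoSet{\bar{\sigma}_0} = \paretoSet{\sigma_0}$ and a play is won by Player~$0$ in $\bar{G}$ iff its projection is won in $G$. Thus $\sigma_0$ solves the \problemAb{} in $\mathcal{G}$ iff $\bar{\sigma}_0$ solves it in $\bar{\mathcal{G}}$.

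The main thing to be careful about is the parity case: one must check that the choice of priority for the chain vertices truly does not interfere with any of the $t+1$ parity conditions. Choosing an odd priority exceeding every $c_i$-priority used in $G$ handles this uniformly, because such a priority, even if visited infinitely often, can never become the minimum priority seen infinitely often as long as at least one original vertex is visited infinitely often (which is guaranteed). A smaller, equally valid option is to set the priority on each new chain vertex equal to the priority of the original vertex $v$ it replaces, which similarly preserves $\parity{c_i}$ exactly, but the uniform "large odd" choice is perhaps cleanest to state.
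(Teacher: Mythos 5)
Your proposal is correct and follows essentially the same approach as the paper: replace each high-out-degree vertex by a binary gadget owned by the same player (you use a chain where the paper uses a balanced binary tree, which is immaterial here), keep new vertices out of all target sets for reachability, give them harmless priorities for parity, and transfer strategies via the projection that contracts each gadget traversal to a single edge of $G$. Your ``large odd priority'' choice works just as well as the paper's choice of copying the priority of the replaced vertex (which you note as an alternative); the only nitpick is that your vertex count is stated loosely --- a vertex with $k\ge 3$ successors adds $k-2$ new vertices, so the total is at most $n+\sum_v(\deg(v)-2)\le n^2-n$, which is what actually justifies the $n^2$ bound.
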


\begin{proof}[Proof of \autoref{prop:transform_two_suc}.]
Let $\mathcal{G}$ be an \gameAb{} with arena $G$. Let us first describe the arena $\bar{G}$ of $\bar{\mathcal{G}}$. Let $v \in V$ be a vertex of $G$, then $v$ is also a vertex of $\bar{G}$ such that it belongs to the same player and is the root of a complete binary tree with $\ell = |\{v' \mid (v, v') \in E\}|$ leaves if $(v, v) \not \in E$. Otherwise, $v$ has a self loop and its other successor is the root of such a tree with $\ell-1$ leaves. The internal vertices of the tree (that is vertices which are not $v$, nor the leaves) belong to the same player as $v$. Each leaf vertex $v'$ of this tree is such that $(v, v') \in E$, belongs to the same player as in $G$ and is again the root of its own tree. The initial vertex $v_0$ of $G$ remains unchanged in $\bar{G}$. Since every vertex in $\bar{G}$ is part of a binary tree or has a self loop and a single successor, it holds that it has at most two successors. Since $G$ is a game arena, this transformation is such that each vertex in $\bar{G}$ has at least one successor. It follows that $\bar{G}$ is a game arena containing $n$ vertices $v \in V$ and at most $n - 1$ internal vertices per tree in the case where $v \in V$ has $n$ successors in $G$. It follows that the number of vertices in $\bar{G}$ is at most $n + n \cdot (n - 1) = n^2$. If $\mathcal{G}$ is a reachability \gameAb{}, the target sets remain unchanged in $\bar{G}$. For parity \gamesAb{}, the priority function $c$ remains unchanged for vertices $v \in V$ and we define $c(v') = c(v)$ for $v' \in \bar{V}\setminus V$ such that $v'$ is an internal vertex of a tree whose root is $v$. 

Let us now show that there is a solution to the \problemAb{} in $\mathcal{G}$ if and only if there is a solution in $\bar{\mathcal{G}}$. From each root $v$ of a tree in $\bar{G}$ (corresponding to a vertex $v$ of Player~$i$ in $G$) there is a set of $\ell = |\{v' \mid (v, v') \in E\}|$ different paths controlled by  Player~$i$, each leading to a vertex $v'$. It follows that there exists a play $\rho = v_0 v_1 v_2 \ldots \in \Plays_G$ if and only if there exists a play $\rho' = v_0 a_0 \dots a_{n_1} v_1 b_0 \dots b_{n_2} v_2 \ldots \in \Plays_{\bar{G}}$ such that every vertex $a_i$ (resp.\ $b_i$) belongs to the same player as $v_0$ (resp.\ $v_1$) and so on. Given the way the objectives are defined, in the case of reachability or parity \gamesAb{}, it holds that $\payoff{\rho} = \payoff{\rho'}$ and $\won{\rho} = \won{\rho'}$. Therefore, a strategy $\sigma_0$ that is solution to the \problemAb{} in $G$ can be transformed into a strategy $\bar{\sigma}_0$ which is a solution in $\bar{G}$ and vice-versa.
\end{proof}

\end{document}